\documentclass[twoside]{article}

\usepackage{auxiliary/PRIMEarxiv}
\usepackage{auxiliary/arxivpackages}
\usepackage{auxiliary/mymacros}
\usepackage{auxiliary/commands}
\renewcommand*{\backrefalt}[4]{%
    \ifcase #1 \footnotesize{(Not cited.)}%
    \or        \footnotesize{(Cited on page~#2.)}%
    \else      \footnotesize{(Cited on pages~#2.)}%
    \fi}

\renewcommand{\cite}[1]{\citep{#1}}
\setcitestyle{authoryear,round,citesep={;},aysep={,},yysep={;}}

% Header
\pagestyle{fancy}
\thispagestyle{empty}
\rhead{ \textit{ }} 

% Document definitions

\newcommand{\mytitle}[1][]{}
\newcommand{\mysubtitle}[1][]{}
\newcommand{\myauthors}[1][]{Jiayi Zhao}
\newcommand{\myauthorsshort}[1][]{Zhao}

% Update your Headers here
\fancyhead[LO]{\mytitle}
\fancyhead[RE]{\myauthorsshort} % Firstauthor et al. if more than 2 - must use \documentclass[twoside]{article}

%% Title
\title{\mytitle Fisher Markets with Social Influence
%%%% Cite as
%%%% Update your official citation here when published 
% \thanks{\textit{\underline{Citation}}: 
% \textbf{Authors. Title. Pages.... DOI:000000/11111.}} 
}

\author{
  Jiayi Zhao \\
 Department of Computer Science\\
 Pomona College \\
  Claremont, CA\\
  \texttt{jzae2019@mymail.pomona.edu} \\
  %% examples of more authors
   \And
  Denizalp Goktas, Amy Greenwald \\
  Department of Computer Science\\
  Brown University \\
  Providence, RI\\
  \texttt{ \{denizalp\_goktas,amy\_greenwald\}@brown.edu} \\
  %% \AND
  %% Coauthor \\
  %% Affiliation \\
  %% Address \\
  %% \texttt{email} \\
  %% \And
  %% Coauthor \\
  %% Affiliation \\
  %% Address \\
  %% \texttt{email} \\
  %% \And
  %% Coauthor \\
  %% Affiliation \\
  %% Address \\
  %% \texttt{email} \\
}

\begin{document}
\maketitle

\begin{abstract}
A Fisher market is an economic model of buyer and seller interactions in which each buyer's utility depends only on the bundle of goods she obtains.
Many people's interests, however, are affected by their social interactions with others.
In this paper, we introduce a generalization of Fisher markets, namely influence Fisher markets, which captures the impact of social influence on buyers' utilities.
%We model influence Fisher markets as pseudo-games.
We show that competitive equilibria in influence Fisher markets correspond to generalized Nash equilibria in an associated pseudo-game, which implies the existence of competitive equilibria in all influence Fisher markets with continuous and concave utility functions.
We then construct a monotone pseudo-game, whose variational equilibria and their duals together characterize competitive equilibria in influence Fisher markets with continuous, jointly concave, and homogeneous utility functions.
This observation implies that competitive equilibria in these markets can be computed in polynomial time under standard smoothness assumptions on the utility functions.
The dual of this second pseudo-game enables us to interpret the competitive equilibria of influence CCH Fisher markets as the solutions to a system of simultaneous Stackelberg games.
Finally, we derive a novel first-order method that solves this Stackelberg system in polynomial time, prove that it is equivalent to computing competitive equilibrium prices via \emph{t\^{a}tonnement}, and run experiments that confirm our theoretical results.
\end{abstract}

\if 0
\deni{A Fisher market is a market model comprising of a set of goods and a set of buyers endowed with some artificial currency whose utility depend only on the bundle of goods they obtain.
In this paper, we introduce a generalization of Fisher markets, namely influence Fisher markets, which captures the impact of social influence on buyers' utilities.
We show that competitive equilibrium in influence Fisher markets are corresponds to generalized Nash equilibrium in an associated pseudo-game, directly implying the existence of competitive equilibrium in all influence Fisher markets with continuous and individually concave utility functions (i.e., concave in each buyer's allocation).
We then introduce a monotone pseudo-game, whose variational equilibrium and dual together characterize competitive equilibria in influence Fisher markets with continuous, jointly concave (i.e., concave in all buyers' allocations), and homogeneous utility functions. 
This observation in turn implies that competitive equilibria in these markets can be computed in polynomial time under standard smoothness assumptions on utility functions.
Further, the dual of this pseudo-game corresponds to a system of zero-sum Stackelberg games played in parallel with each game's Stackelberg equilibria \amy{not enough; we also need the Nash among the buyers!} corresponding to the set of competitive equilibria in the associated \ssadie{}{influence} Fisher market.
Finally, we derive a novel first-order method that solves this Stackelberg game in polynomial time, prove that it is equivalent to computing competitive equilibrium prices via \emph{t\^{a}tonnement}, and run experiments that confirm our theoretical results.}
\fi

\section{Introduction}

The branch of mathematical economics that attempts to explain the behavioral relationship among supply, demand, and prices via equilibria dates back to the work of French economist 
%L\'{e}on 
\citep{walras}, and today is known as general equilibrium theory \citep{mas-colell}.
One of the seminal achievements in this area is the proof of existence of competitive equilibrium prices in Arrow-Debreu markets (\citeyear{arrow1954existence}).
In such a market, traders seek to ``purchase'' goods from others, by exchanging a part of their endowment of goods for various other goods.
%which can be sold in the market.
A competitive equilibrium comprises an allocation of goods to traders together with good prices such that traders maximize their preferences over goods while ensuring that their spending does not exceed the value of their endowment, 
%\samy{}{plus their income from any goods they sell at market prices} \deni{This is repetitive.}
and the market clears: 
%\sdeni{, i.e., if the price of a good is strictly positive, then the total amount allocated is equal to the total amount available; and if otherwise, then the total amount allocated is less than or equal to the total amount available}{}
%\amy{not intuitive. things that are intuitive: ``supply equals demand'', ``spending equals income'', etc.}
i.e., no more goods are allocated than the market supply and Walras' law holds, meaning the value of demand equals the value of supply.

%meaning the sum total of the value of the demand across goods must equal the sum total of the value of the supply

%the demand (i.e., the allocation across all traders) of each good is equal to its supply (i.e., the sum total of all traders' endowments) \sdeni{, and}{if its price is strictly greater than 0, and the demand is less than or equal to its supply if its price is equal to 0.} \samy{}{(3)}\deni{I don't like this, the last two things characterize fully market clearance. I rephrased, delete the rest.} the total spending across all the traders is equal to their total income. \deni{Third point is that walras' law holds, which implies that goods that are not demanded are priced at 0.}

%\deni{This is the original definition of Arrow-Debreu and I think it is clearer than the more complicated definition of market clearance.}

In much of mainstream consumer theory \citep{mas-colell}, and in Arrow-Debreu markets, each trader's preference depends only on its own consumption.
Such models fail to capture the influence of social interactions on traders' interests. 
For example, the more friends one has who own an iPhone, the more one might prefer an iPhone.
%\deni{Cite an empirical study or change the example}. \amy{example is fine. cite my kids!}
Likewise, if a celebrity, e.g., Beyonce, wears a particular brand of bag, e.g., Telfar, then one's preference for that brand of bag might increase.
In an age of densifying social networks, it is becoming more and more essential that our economic models capture the effects of social interactions on individuals' preferences. 

To try to better understand the implications of social networks on market equilibria, \citet{Chen2011MakretwithSocialInfluence} recently proposed an extension of the Arrow-Debreu market model in which each trader's preference is influenced by the goods her neighbors obtains: \mydef{the Arrow-Debreu market with social influence}.
%or influence Arrow-Debreu market for short
% \footnote{We note that as it has been traditionally the case in the computer science literature, the authors consider only a barter economy without firms, since there exists a polynomial time reduction from economies with production to economies without production.}
Formally, \citeauthor{Chen2011MakretwithSocialInfluence}'s model augments an Arrow-Debreu market with a social network connecting the traders, and then embeds this network's structure in each trader's utility function, thus inducing a preference relation over allocations of goods that depends both on the trader's and its neighbors' allocations.
The authors then study a modest generalization of competitive equilibrium in which traders maximize their utility, assuming the allocations of the other traders in the market, including their neighbors, are fixed. 

\citeauthor{Chen2011MakretwithSocialInfluence} analyze their model under two specific types of utilities: linear and threshold influence functions.
They prove existence of competitive equilibrium in their setting, when the graph underlying the economy is strongly connected and the utility functions' parameters guarantee non-satiation of the preferences they represent.
Under additional assumptions on the topology of the network, they also provide polynomial-time methods for computing competitive equilibria.
% \deni{Rephrase following sentence, I think this refers to the PPAD-hardness result but right now it's not correct.} Moreover, their complexity theoretical results suggests that even a simple social network among traders can significantly boost the equilibrium computation. 
% \amy{the commented out sentence is interesting. can you get any kind of speedup in this model, like you do in graphical games?}\sadie{I did refer to the PPAD-hard results when the linear influence model has bounded degree, planar graph... but I just realized that that paper says that results also apply to Fisher's market which confuses me?}

As the computation of competitive equilibrium in Arrow-Debreu markets is believed to be intractable, i.e., it is PPAD-complete \citep{chen2006settling, chen2009spending}, it seems unlikely that we can obtain positive computational results in such a broad setting.
In the last two decades, however, \mydef{Fisher markets} have emerged as an interesting special case of Arrow-Debreu markets in which competitive equilibria can be efficiently computed.
The Fisher market is a one-sided Arrow-Debreu market comprising one seller and multiple buyers, the latter of whom are endowed with an artificial currency called their budget,
%\amy{should add that this budget is expressed in some kind of monetary unit, rather than as an endowment of goods, b/c the latter is also a budget of sorts}
rather than an endowment of goods. 

During the last two decades, a wide array of polynomial-time computability results have been established for Fisher markets \citep{devanur2002market, jain2005market, gao2020polygm, goktas2021minmax}.
One of the most interesting findings is the observation that the primal and dual solutions, respectively, to the \mydef{Eisenberg-Gale convex program} \cite{eisenberg1959consensus}, constitute competitive equilibrium allocations and prices in Fisher markets, and are computable in polynomial time assuming buyers with continuous, concave, and homogeneous utility functions representing locally non-satiated preferences \citep{devanur2002market, devanur2008market, jain2005market}. 
% \deni{Add citations here, there should be work from devanur, vaziriani, and yes, all three of them have separate paper on polytime computation.} \sadie{Added.}
Moreover, \citeauthor{fisher-tatonnement}
show that solving the dual of the Eisenberg-Gale program via (sub)gradient descent amounts to solving the market via \mydef{\emph{t\^atonnement}}, an economic price-adjustment process dating back to \citeauthor{walras} (\citeyear{walras}), in which a fictional auctioneer increases (resp.\@ decreases) the prices of goods that are overdemanded (resp.\@ underdemanded) \cite{fisher-tatonnement}. 
Furthermore, \citet{goktas2021minmax} show that the dual of the Eisenberg-Gale program corresponds to a zero-sum Stackelberg game, in which \emph{t\^atonnement\/} surfaces as a no-regret learning dynamic for the auctioneer \cite{goktas2022robust}.

With the aim of obtaining stronger results on the existence and computation of competitive equilibrium in markets with social influence, we introduce a special case of the Arrow-Debreu market with social influence and a generalization of Fisher markets \citep{brainard2000compute}, which we call \mydef{Fisher markets with social influence}, or \mydef{influence Fisher markets} for short.
An influence Fisher market, as the name suggests, is a Fisher market in which buyers' utility functions depend not only on their own allocation, but also on their neighbors'.
In this paper, we provide existence and polynomial-time computability results for competitive equilibrium in influence Fisher markets.
We first extend \citeauthor{arrow1954existence}'s competitive equilibrium existence argument using their theory of pseudo-games 
%(or abstract economies)
to prove that a competitive equilibrium exists in all influence Fisher markets with continuous utility functions that are concave in each buyer's allocation. Contrary to \citeauthor{Chen2011MakretwithSocialInfluence}, our existence result makes no assumptions about the topology of the network.
%, suggesting that in influence Fisher markets as opposed to influence Arrow-Debreu markets, the connectivity of the social network is irrelevant to the existence of equilibria.

Next, for all influence Fisher markets with continuous and homogeneous utilities, we construct a  similar pseudo-game with jointly convex constraints whose variational equilibria 
% \deni{Pretty sure this is suppose to be GNE, not VE? Can you double check Sadie or explain to me why I am wrong?} \sadie{I think it should be VE, since only VE has the same shared optimal Lagrange multiplier.}
correspond to competitive equilibrium allocations.
This pseudo-game is monotone assuming the buyers' utility functions are jointly-concave); thus, we can solve for its variational equilibria as a variational inequality problem \citep{facchinei2007vi}.
%\deni{Once again requires Lipschitz-smoothness of payoffs. No result without that assumption.} 
This approach yields a polynomial-time algorithm that computes competitive equilibrium allocations in influence Fisher markets.
% \amy{this result is not described in the text. we don't talk about algos until after introducing the Stackelberg game. and i am not sure we need to.}\sadie{I think we do have one sentence, but I am not sure whether it is needed. How do you think Deni?}
Moreover, as the pseudo-game comprises $\numbuyers$ different optimization problems, one per buyer, there are correspondingly $\numbuyers$ duals.
Surprisingly, 
%yet generalizing the Eisenberg-Gale program, 
the solutions to all of these duals yield the same competitive equilibrium prices!

\if 0
the ``dual'' of our pseudo-game, which comprises each individual buyer's utility maximization problem in this pseudo-game \samy{constrained by their budget}{} \sadie{In the pseudo game, they don't have individual budget constraint, they only have a joint feasibility constraint} gives rise to a \sdeni{bi-level optimization problem}{zero-sum single-leader multiple-follower Stackelberg game} that simultaneously characterizes the corresponding equilibrium prices, and that form a dual Stackelberg game with one leader and multiple followers.
\fi

% \samy{While the Eisenberg-Gale program's primal is a \samy{concave}{} \amy{not relevant here}  constrained optimization problem, the primal we propose is a pseudo-game; analogously, while the Eisenberg-Gale program's dual is a zero-sum Stackelberg game \ssadie{}{in which the leader is a fictitious auctioneer who sets the prices, and the followers are a set of buyers who effectively play as a team}, the $\numbuyers$ duals we propose give rise to a system of $\numbuyers$ zero-sum Stackelberg games played in parallel. \ssadie{}{In each, the leader once again is a fictitious auctioneer, but the follower is an individual buyer who best responds to the auctioneer's prices, given the other buyers' allocations, and thus the buyers play a Nash equilibrium among themselves.} \ssadie{}{Finally, each game's Stackelberg equilibrium } corresponds to the set of competitive equilibria \amy{the whole set? or just one?}\sadie{I think the set of Stackelberg equilibrium correspond to the set of CE} in the associated influence Fisher market.}
{
Finally, following \citet{goktas2021minmax}, who reformulate the dual of the Eisenberg-Gale program as a zero-sum Stackelberg game, we likewise reformulate the $\numbuyers$ duals of our pseudo-game as a system of $\numbuyers$ simultaneous zero-sum Stackelberg games.
In \citeauthor{goktas2021minmax}'s dual, the leader is a fictitious auctioneer who sets prices, while the followers are a set of buyers who effectively play as a team; in our $n$ duals, each leader is again a fictitious auctioneer, but each follower is an individual buyer who best responds to the auctioneer's prices, \emph{given the other buyers' allocations}.
Thus, the buyers in this system play a Nash equilibrium.
Also following \citeauthor{goktas2021minmax}, we show that running subgradient descent on each leader's value function, i.e., the leader's utility function assuming the follower best-responds,
%\ssadie{}{(\Cref{eq:value_func})} \amy{why did you add this? seems too specific for the intro. maybe b/c value function is an overloaded term?}
amounts to solving the market via \emph{t\^atonnement\/} in polynomial-time, as in (standard) Fisher markets.
The main difference between our algorithm and 
%%% SPACE
theirs
%that of \citeauthor{goktas2021minmax}
is that ours requires a Nash-equilibrium oracle, so that, given prices \emph{and the other buyers' allocations}, buyers can play best responses to one another.}

% In Fisher markets (without social influence), each buyer's utility maximization problem is independent from others' as it depends only on the allocation of goods obtained by the buyer themselves. Therefore, the Eisenberg-Gale program is able to compute an optimal allocation which maximizes every buy0er's utility in a centralized way, through maximizing the sum of their adjusted utilities, subject to feasibility constraints. In the Fisher markets with social influence, however, maximizing the adjusted sum of buyers' utility doesn't make sense anymore. Instead, we can decentralize the Eisenberg-Gale program such that each buyer maximizes her own adjusted utility, subject to a shared feasibility constraint, and we model that through a jointly-convex pseudo game. 
% Finally, following \citeauthor{goktas2021minmax}'s idea of reformulating the competitive equilibrium in a CCH Fisher market as a Stackelberg equilibrium of a single-leader/single-follower game, we transfer the competitive equilibrium of a CCH influence Fisher markets as a Stackelberg equilibrium of a single-leader/multi-follower game. Based on this reformulation, we proposed an algorithm to solve the market equilibrium in polynomial time and prove it is equivalent to a natrual market dynamic called t\^{a}tonnement \cite{walras}. 

\amy{i don't know exactly how the FFP algo works. but it feels like there is opportunity for a sort-of multi-player envelope theorem, where the leader could run a nested gradient descent algo computing its derivative as a function of ALL the players' eqm actions.} \sadie{We tried the current envelope theorem for NE and believed that it would not provide us with the correct gradient for the aggregate value function (since it would take the derivative for all buyers' allocations at the same time, which ruins the thing). Instead, we use the old envelope theorem on each value function, treating other buyers' allocations as constant and obtain desired results.} \amy{what do you mean by the ``current'' envelope theorem?} \sadie{There is an Envelope theorem for NE in the paper "The Envelope Theorem and Comparative Statics of Nash Equilibria"}

\paragraph{Related Work} 
\if 0
The study of the computation of competitive equilibria in Fisher markets was initiated by \citet{devanur2002market}, who provided a polynomial-time method for the case of linear utilities.
\citet{jain2005market} subsequently showed that a large class of Fisher markets could be solved in polynomial-time using interior point methods. 
Both of the results rely on the connection between Fisher markets and the Eisenberg-Gale program, which not only leads to efficient algorithms by which to compute competitive equilibria, but also leads to a surprising formulation of the problem as equilibrium prices are not a part of the Eisenberg-Gale primal, but rather are characterized by its dual. 
Recently, \citet{goktas2021minmax} observed that solving for the competitive equilibrium of a Fisher market assuming CCH utilities can also be seen as solving a (convex-concave) min-max Stackelberg game, and they further proposed first-order methods for solving such games.  
\fi
\citet{gao2020polygm} studied an alternative family of first-order methods for solving Fisher markets (only; not min-max Stackelberg games more generally), assuming linear, quasilinear, and Leontief utilities; such methods can be more efficient when markets are large.

Following \citeauthor{arrow1954existence}'s introduction of GNE, \citet{rosen1965gne} initiated the study of the mathematical and computational properties of GNE in pseudo-games with jointly convex constraints, proposing a projected gradient method to compute GNE.
Thirty years later, %after \citeauthor{rosen1965gne}'s foray into pseudo-games and their equilibria,
\citet{uryas1994relax} developed the first relaxation methods for finding GNEs, which were improved upon in subsequent works \citep{Krawczyk2000relax, von2009relax}.
%Following \citeauthor{uryas1994relax}'s work,
Two other types of algorithms were also introduced to the literature: Newton-style methods \citep{facchinei2009generalized, dreves2017computing, von2012newton, izmailov2014error, fischer2016globally, dreves2013newton} and interior-point potential 
%reduction 
methods \citep{dreves2013newton}.
Many of these approaches are based on minimizing the exploitability of the pseudo-game, 
%similar to the approach taken in this paper.
but others use variational inequality \citep{facchinei2007vi, nabetani2011vi} and Lemke methods \citep{Schiro2013lemke}.
Recently, this literature has established convergence guarantees for exploitability minimization \cite{goktas2022exploitability} and relaxation \cite{jordan2023first} methods.

\section{Preliminaries}

In this section, we define our main modeling tool, pseudo-games, and 
%%% SPACE
%then 
we introduce our object of study, Fisher markets with social influence, as a particular pseudo-game.

\subsection{Notation}

We use caligraphic uppercase letters to denote sets and set correspondences (e.g., $\calX$);
bold lowercase letters to denote vectors (e.g., $\price, \bm \pi$);
bold uppercase letters to denote matrices and vector-valued random variables (e.g., $\allocation$, $\bm \Gamma$);
lowercase letters to denote scalar quantities (e.g., $x, \gamma$);
and uppercase letters to denote scalar-valued random variables (e.g., $X, \Gamma$).
We denote the $i$th row vector of a matrix (e.g., $\allocation$) by the corresponding bold lowercase letter with subscript $i$ (e.g., $\allocation[\buyer])$. 
Similarly, we denote the $j$th entry of a vector (e.g., $\price$ or $\allocation[\buyer]$) by the corresponding lowercase letter with subscript $j$ (e.g., $\price[\good]$ or $\allocation[\buyer][\good]$).
Lowercase letters also denote functions: e.g., $f$ if the function is scalar valued, and $\f$ if the function is vector valued.
We denote the vector of ones of size $\numbuyers$ by $\ones[\numbuyers]$, the set of integers $\left\{1, \hdots, n\right\}$ by $[n]$, the set of natural numbers by $\N$, the set of real numbers by $\R$, and the postive and strictly positive elements of a set by a $+$ and $++$ subscript, respectively, e.g., $\R_+$ and $\R_{++}$. 
% the set of positive natural numbers by $\N_+$ , the set of non-negative real numbers by $\R_+$, and the set of strictly positive real numbers by $\R_{++}$.
Finally, we denote the orthogonal projection operator onto a set $C$ by $\project[C]$, i.e., $\project[C](\x) = \argmin_{\y \in C} \left\|\x - \y \right\|^2$.

\subsection{Pseudo-games}

A (concave) \mydef{pseudo-game} \cite{arrow1954existence} $\pgame \doteq (\numplayers, \actionspace,  \actions, \actionconstr, \utilp)$ 
%\amy{i think the definition should list $\{ \actionspace[\player] \}_{\player \in [\numplayers]}$, as $\actionspace$ can be otained from what i propose, but the converse is not true}\sadie{I think we can also obtain each $\actionspace[\buyer]$ from $\actionspace$ since $\actionspace = \bigtimes_{\player \in \players} \actionspace[\player]$ is simply the Cartesian product?}
comprises $\numplayers \in \N_+$ players, each $\player \in \players$ of whom chooses an action $\action[\player] \in \actionspace[\player] \subset \R^{\numactions}$, with the players' joint action space $\actionspace = \bigtimes_{\player \in \players} \actionspace[\player]$.
Each player $\player$ aims to maximize their continuous utility $\utilp[\player]: \actionspace \to \R$, which is concave in $\action[\player]$, by choosing a feasible action from a set of actions $\actions[\player](\naction[\player]) \subseteq \actionspace[\player]$ determined by the actions $\naction[\player] \in \actionspace[-\player] \subset \R^\numactions$ of the other players, where $\actions[\player]: \actionspace[-\player] \rightrightarrows  \actionspace[\player]$ is a non-empty, continuous, compact- and convex-valued action correspondence.
For convenience, we represent each such correspondence as a set $\actions[\player](\naction[\player]) = \{ \action[\player] \in \actionspace[\player] \mid \actionconstr[\player][\numconstr](\action[\player], \naction[\player]) \geq \zeros, \text{ for all } \numconstr \in [\numconstrs]\}$, where for all $\numconstr \in [\numconstrs]$, $\actionconstr[\player][\numconstr]$ is a continuous and concave function in $\action[\player]$, which defines the constraints.
%\ssadie{We denote the product of these feasible action correspondences by $\actions(\action) = \bigtimes_{\player \in \players} \actions[\player](\naction[\player])$, which we note is guaranteed to be non-empty, continuous, and compact-valued, but not necessarily convex-valued.}{} \deni{Do we use the previous sentence anywhere?} \sadie{I don't think we use it, so I will just delete it.}
For notational convenience, we also define the \mydef{joint constraint function} $\constr = \left(\actionconstr[1], \hdots, \actionconstr[\numplayers] \right): \actionspace \to \R^{\numplayers \times\numconstrs}$.
If $\constr[\numconstr](\action)$ is concave in $\action$, for all $\numconstr \in [\numconstrs]$, then we say that the pseudo-game has \mydef{jointly convex constraints}, in which case the joint action correspondence is simply a convex set, i.e., $\actions = \{\action  \in \actionspace \mid \constr(\action) \geq \zeros\}$.
A pseudo-game is called \mydef{monotone}%
\footnote{We call a pseudo-game monotone if $- (\grad[{\action[1]}] \util[1], \hdots, \grad[{\action[\numplayers]}] \util[\numplayers])$ is a monotone operator. Such pseudo-games are also sometimes called dissipative pseudo-games, since $(\grad[{\action[1]}] \util[1], \hdots, \grad[{\action[\numplayers]}] \util[\numplayers])$ is called a dissipative operator if $- (\grad[{\action[1]}] \util[1], \hdots, \grad[{\action[\numplayers]}] \util[\numplayers])$ is a monotone operator.} 
if for all $\x, \y \in \actionspace$, $\sum_{\player \in \players} \left(\grad[{\action[\player]}] \utilp[\player](\x) - \grad[{\action[\player]}] \utilp[\player](\y) \right)^T \left( \x_\player - \y_\player \right) \leq 0$.
Finally, a (concave) \mydef{game} \cite{nash1950existence} is a pseudo-game where, for all players $\player \in \players$, $\actions[\player]$ is a constant correspondence, i.e., for all players $\player \in \players,$ $\actions[\player](\naction[\player]) = \actions[\player](\otheraction[-\player])$, for all $\action, \otheraction \in \actionspace$.

\if 0
\amy{what is a monotone operator?}
\deni{I think the definition is clear because the characterization is an if and only if. So the definition of a monotone pseudo-game gives exactly the monotone operator def'n. If you still want to define it though, here it is: an operator $\h$ said to be \mydef{monotone} if for all $\x, \y \in \actionspace$,
$\left( \h(\x) - \h(\y) \right)^T \left( \x - \y \right) \leq 0$}.
\fi

%\sadie{monotone and jointly-convex are specific types of pseudo-games.}

Given a pseudo-game $\pgame$, an $\varepsilon$-\mydef{generalized Nash equilibrium (GNE)} is a strategy profile $\action^* \in \actions(\action^*)$ s.t.\ for all $\player \in \players$ and $\action[\player] \in \actions[\player](\naction[\player][][][*])$, $\utilp[\player](\action^*) \geq \utilp[\player](\action[\player], \naction[\player][][][*]) - \varepsilon$.
An $\varepsilon$-\mydef{variational equilibrium (VE)} (or \mydef{$\epsilon$-normalized GNE}) of a pseudo-game \emph{with joint constraints\/} is a strategy profile $\action^* \in \actions$ s.t.\ for all $\player \in \players$ and $\action \in \actions$, $\utilp[\player](\action^*) \geq  \utilp[\player](\action[\player], \naction[\player][][][*]) - \varepsilon$.
A GNE (VE) is an $\varepsilon$-GNE (VE) with $\varepsilon = 0$.
While GNE are guaranteed to exist in all pseudo-games under standard assumptions (see \Cref{thm:existence_GNE}), VE are only guaranteed to exist in pseudo-games with jointly convex constraints (see \Cref{thm:jointly_convex_ve_gne}) \cite{arrow1954existence}.
\if 0
Note that the set of $\varepsilon$-VE of a pseudo-game is contained in the set of the $\varepsilon$-GNE, as $\actions(\action^*) \subseteq \actions$, for all $\action^*$ which are GNE of $\pgame$.
The converse, however, is not true, unless $\actionspace \subseteq \actions$.
Further, when $\pgame$ is a game, GNE and VE coincide; we refer to this set simply as NE.
\fi

% \subsection{Generalized Stackelberg Games}
% A \mydef{Stackelberg game} $(\numplayers, \outerset, \innerset, \constr, \util)$ is a sequential game in which one leader called the outer, or $\outer$-player takes an action $\outer \in \outerset$ from a action space $\outerset$, following which $\numplayers$ followers, each $\player \in \players$ with an action space $\{ \inner[][][\player] \in \innerset[\player] \mid \forall \numconstr \in [\numconstrs],  \constr[ ][\numconstr](\outer, \inner)\}$ simultaneous respond by playing a strategy profile $\inner \in \innerset$. Let $\outerset = \bigtimes_{\player \in \players} \innerset[\player]$, then the goal of each follower $\player \in \players$ is to maximize their payoff $\util[\player]: \outerset \times \innerset$, while the leader aims to maximize their payoff $\util[\numplayers+1]: \outerset \times \innerset \to \R$ assuming that the followers best-respond. A strategy profile $(\outer, \inner^*) \in \outerset \times \innerset$ is called Stackelberg-Nash equilibrium if 1) $\outer^* \in \argmax_{\outer \in \outerset} \util[\numplayers +1](\outer, \hat{\inner}(\outer))$ where for all $\player \in \player$, $\hat{\inner[\player]}(\outer) \in \argmax$, and 2)

\if 0
\subsection{Fisher Markets}
\deni{Most likely delete this section and say in the next section when utility only depends on yourself you have a regular fisher market.} \amy{didn't proofread}

% \paragraph{Fisher Market Model}

A \mydef{Fisher Market} comprises of $\numbuyers$ buyers and $\numgoods$ divisible goods. Without loss of generality, we assume that there is one unit of each good $\good \in \good$ available. Each buyer $\buyer\in \buyers$ has a budget $\budget[\buyer]\in \Rp$ and a utility function $\util[\buyer]: \Rp^{\numgoods}\to \R$, giving the utility $\util[\buyer](\allocation[\buyer])$ that buyer $\buyer$ derives from each allocation of good $\allocation[ ] \in \R^\numgoods_+$.
An instance of a Fisher market is given by a tuple $(\numbuyers, \numgoods, \util, \budget)$, where $\util= \{\util[1],\hdots, \util[\numbuyers]\}$ is a set of utility functions, one per buyer, $\budget\in \Rp^{\numbuyers}$ is the vector of buyer budgets, $\supply \in \Rp^{\numgoods}$ is the vector of good supplies. When clear from context, we simply denote $(\util, \budget)$. 

% \paragraph{Fisher Market Outcome}

An \mydef{allocation} $\allocation = (\allocation[1], \hdots, \allocation[\numbuyers])^{T}\in \Rp^{\numbuyers\times \numgoods}$ is a map from goods to buyers, represented as a matrix, s.t. $\allocation[\buyer][\good]\geq 0$ denotes the amount of good $\good\in \goods$ allocated to buyer $\buyer\in \buyers$. 
% We say an allocation $\allocation$ is \mydef{feasible} if for any good $\good\in \goods$, no more than one unit of it is allocated in total, i.e., $\forall\good\in \goods, \sum_{\buyer\in \buyers}\allocation[\buyer][\good]\leq 1$.
The goods are assigned \mydef{prices} $\price=(\price[1]\hdots, \price[\numgoods])^{T} \in \Rp^{\numgoods}$ s.t. $\price[\good] \geq 0$ denotes the unit cost of good $\good$. 
A \mydef{competitive (or Walrasian) equilibrium} is a tuple $(\allocation^*, \price^*)\in \Rp^{\numbuyers\times \numgoods} \times \Rp^{\numgoods}$ consisting of an allocation and prices such that each buyer's utility is maximized constrained by their budget, i.e. $\forall \buyer\in \buyers$, $\allocation[\buyer]^* \in \argmax_{\allocation[ ] \in \R^\numgoods_+:\allocation[ ]\cdot \price^*\leq \budget[\buyer]} \util[\buyer](\allocation[])$. 
\fi

\subsection{Fisher Markets with Social Influence}

In this paper, we study a model of Fisher markets with social influence, in which a buyer's utility may be influenced by the goods allocated to her neighbors.
A \mydef{Fisher market with social influence}, or an \mydef{influence Fisher market} for short, comprises $\numbuyers \in \N_+$ buyers and $\numgoods \in \N_+$ divisible goods.
Without loss of generality,
%\deni{Because goods are divisible, which means that the allocations represent the percentage of the good allocated to a buyer.} 
we assume that exactly one unit of each good $\good \in \goods$ is available.

The buyers are connected through a directed social influence graph $\graph=(\vertices, \edges)$, where $\vertices = \buyers$ is the set of buyers, and for any $\buyer, \buyerp\in \buyers$, there is an edge from $\buyerp$ to $\buyer$ iff the utility of $\buyer$ is influenced by the allocation $\allocation[\buyerp]$ of $\buyerp$.
We let $ \neighborset[\buyer]=\{\buyerp \mid (\buyerp, \buyer)\in \edges\}$ be the (incoming, and hence influential) neighbors of a buyer $\buyer$, and we define $\neighbordegree[\buyer]=|\neighborset[\buyer]|$, for all $\buyer\in \buyers$.

Each buyer $\buyer\in \buyers$ has a budget $\budget[\buyer] \in \Rp$ and a utility function $\util[\buyer]: \Rp^{(\neighbordegree[\buyer]+1) \times \numgoods} \to \R$ that depends on not only her own allocation, but also her neighbors'.
An instance of an influence Fisher market is thus given by a tuple $(\numbuyers, \numgoods, \graph, \util, \budget)$, where $\graph$ is the social network, $\util= \{\util[1],\hdots, \util[\numbuyers]\}$ is a set of utility functions, one per buyer, and $\budget\in \Rp^{\numbuyers}$ is a vector of buyer budgets.
When $\numbuyers$ and $\numgoods$ are clear from context, we denote influence Fisher markets simply by $(\graph, \util, \budget)$.
%\deni{If not used remove}\amy{i saw it used somewhere}

Given an influence Fisher market $(\graph, \util, \budget)$, an \mydef{allocation} $\allocation = (\allocation[1], \hdots, \allocation[\numbuyers])^{T}\in \Rp^{\numbuyers\times \numgoods}$ is a map from goods to buyers, represented as a matrix, s.t. $\allocation[\buyer][\good]\geq 0$ denotes the amount of good $\good\in \goods$ allocated to buyer $\buyer\in \buyers$. 
Likewise, we denote by $\allocation[{\neighborset[\buyer]}]=(\allocation[\buyerp])_{\buyerp\in \neighborset[\buyer]}^T \in \Rp^{\neighbordegree[\buyer]\times \numgoods}$ the matrix representing the bundles of goods obtained by buyer $\buyer$'s neighbors. 
A utility function is \mydef{locally non-satiated} if for all $\allocation[\buyer]\in \Rp^{\numgoods}, \allocation[\nei]\in \Rp^{\neighbordegree[\buyer]\times \numgoods}$, and $\varepsilon > 0$, there exists an $\allocationp[\buyer]\in \Rp^{\numgoods}$ with $\|\allocationp[\buyer]-\allocation[\buyer]\|\leq \varepsilon$ such that $\util[\buyer](\allocationp[\buyer], \allocation[\nei]) > \util[\buyer](\allocation[\buyer], \allocation[\nei])$.
Related, a utility function satisfies \mydef{no saturation} if $\forall \allocation[\buyer]\in \Rp^{\numgoods}$ and $\allocation[\nei]\in \Rp^{\neighbordegree[\buyer]\times \numgoods}$, there exists an $\allocationp[\buyer] \in \Rp^{\numgoods}$ such that $\util[\buyer](\allocationp[\buyer], \allocation[\nei]) > \util[\buyer](\allocation[\buyer], \allocation[\nei])$.
Note that if $\util[\buyer]$ is quasi-concave 
%\amy{we do not define quasi-concave. might be fine. just saying.} \sadie{I think it's fine :)} 
in $\allocation[\buyer]$ and satisfies no saturation, then it is locally non-satiated \citep{arrow1954existence}.

\mydef{Feasibility} asserts that
no more of each good $j$ is allocated than its available supply, i.e., $\forall \good\in \goods, \sum_{\buyer\in \buyers} \allocationstar[\buyer][\good] \leq 1$.
\mydef{Walras' law} states that, for each good $j$, either all of it is allocated, in which case its price is strictly positive, and otherwise, its price is zero.
Mathematically, $\sum_{\good\in \goods} \pricestar[\good](\sum_{\buyer\in \buyers} \allocationstar[\buyer][\good]-1) = 0$.

A tuple $(\allocationstar, \pricestar)$, which consists of an allocation $\allocationstar$ and prices $\pricestar = (\pricestar[1] \hdots, \pricestar[\numgoods])^{T} \in \Rp^{\numgoods}$, is a \mydef{competitive equilibrium (CE)} in an influence Fisher market $(\graph, \util, \budget)$ if (1)  fixing other buyers' allocations, buyers maximize their utilities constrained by their budget, i.e., $\forall \buyer\in \buyers, \allocationstar[\buyer] \in  \argmax_{\allocation[\buyer] \in \Rp^{\numgoods}: \allocation[\buyer] \cdot \pricestar \leq \budget[\buyer]} \util[\buyer](\allocation[\buyer],\allocationstar[\nei] )$, and (2) feasibility and Walras' law hold.

A Fisher market is a special influence Fisher market $(\graph, \util, \budget)$ where $\graph=(\vertices, \edges)$ satisfies $\edges=\emptyset$. In other words, each buyer $\buyer$ is isolated, so her utility $\util[\buyer]:\Rp^{\numgoods}\to \Rp$ depends only on her own allocation. As $\graph$ is simply a graph with $\numbuyers$ vertices and no edges, we can denote a Fisher market by the tuple $(\util, \budget)$.

When $\util$ is a set of specific utility functions, we refer to the influence Fisher market $(\graph, \util, \budget)$ by the name of the utility function: e.g., if $\util$ is a set of linear utility functions then $(\graph, \util, \budget)$ is a linear influence Fisher market.

%\amy{why is the FM a special case of AD markets? b/c the numeraire is just another good. sort of. its presence implies an additional constraint: ``the sum of the prices equals the sum of the budgets.''}

\amy{i want to say here that FMs are a special case of AD markets, but we have not defined AD markets. i need this, b/c we need to justify the constraint that ``the sum of the prices equals the sum of the budgets.'' Deni suggested that this constraint would be implied by the fact that budgets are expressed in terms of units a numeraire good. and that the value of the budget has to equal the price of the real goods, for each buyer, b/c o/w buyers would gravitate towards collecting more of the numeraire good than the other goods. iow, this constraint seems part of the def'n of FMs, and it requires explanation.} \sadie{Deni added something in the next section, but I don't know if that's enough. I somehow think define AD markets and reduce it to Fisher market is not trivial, and I'm afraid it may cause more confusions. }

\section{Existence of Competitive Equilibrium via Pseudo-Games}
\label{sec:existence}

\amy{IMPORTANT: the auctioneer in this section does not seem very fictitious to me. it is a real player. otoh, it is fictitious in the buyer-only pseudo-game, since it arises thru the dual. so i think it is confusing to call the auctioneer fictitious here. i think we should reserve this terminology for the next section/tatonnement.}\sadie{I agree with this! I will check we use auctioneer instead of fictitious auctioneer in this section!}

In this section, we investigate the properties of competitive equilibrium in Fisher markets with social influence.
Our main tool is the pseudo-game (or abstract economy) model introduced by \citeauthor{arrow1954existence}, as both a generalization of the standard normal-form game in game theory and of the Arrow-Debreu market in microeconomics \citep{arrow1954existence}.
We provide a proof of existence of competitive equilibrium in influence Fisher markets, using methods similar to those employed by \citeauthor{arrow1954existence} in their seminal proof of the existence of competitive equilibria in Arrow-Debreu economies.

Following Arrow and Debreu, we define a pseudo-game with an auctioneer who sets prices.
Our pseudo-game then both generalizes and specializes theirs.
While in theirs, each trader's utility depends only on their own allocations, ours captures social influence through augmented utility functions.
While in theirs, buyers are constrained by their endowment, in ours, buyers are constrained by budgets.

Specifically, we construct an auctioneer-buyer pseudo-game comprising a single auctioneer and $\numbuyers$ individual buyers in which the auctioneer sets the good prices, while the buyers choose their allocations. 
Given an allocation $\allocation \in \R^{\numbuyers \times \numgoods}$, let $\excessd = \left( \sum_{\buyer\in \buyers} \allocation[\buyer] \right) - \ones[\numgoods]$ be the vector of \mydef{excess demands}, i.e., the total amount by which the demand for each good exceeds its supply. 
In our auctioneer-buyer pseudo-game, each buyer $\player$ chooses allocations $\allocation[\player]$ that maximize her utility subject to her budget constraint, given prices $\price$ determined by the auctioneer, the auctioneer chooses prices that maximize her total profit, i.e., $\price \cdot \excessd$,
%$= \sum_{\good\in \goods} \price[\good] (\sum_{\buyer \in \buyers} \allocation[\buyer][\good]-1)$, 
fixing the allocation $\allocation$, subject to Walras' law. 
More specifically, we assume a numeraire (i.e., a good whose price we normalize to 1), and we view the buyers' budgets as quantities of this numeraire, in which case Walras' law can be restated as the sum of the prices being equal to the sum of the budgets: i.e., $\sum_{\good \in \goods}\price[\good] = \sum_{\buyer \in \buyers}\budget[\buyer]$.
\if 0
\footnote{In Fisher markets, we assume a numeraire: i.e., a good whose price we normalize to 1. 
This latter constraint then arises when we view the buyers' budgets as quantities of this numeraire, as a consequence of Walras' law,
%\sdeni{}{In particular, Fisher markets correspond to Arrow-Debreu exchange economies in which buyers are only endowed with one good, namely their budget, which we take as numeraire, i.e., a good whose price we normalize to 1. Since competitive equilibrium prices in Arrow-Debreu exchange economies are normalized to live within the unit simplex, once the price of the numeraire is set to 1, the price of every other good has to be scaled up appropriately to maintain the internal consistency of the Arrrow-Debreu price system. This then gives rise to the latter constraint as a consequence of Walras' law}
which relates the total value of the numeraire to the total value of all other goods.}
\fi

In what follows, we show that the set of GNE in this auctioneer-buyer pseudo-game corresponds to the set of CE in an influence Fisher market; existence of a CE thus follows from existence of GNE.

\amy{discuss!} \sdeni{}{Importantly, in this pseudo-game we require the auctioneer to choose prices whose sum is equal to the sum of the budgets, because the budgets correspond are a numeraire good, i.e., their price is set to 1 (this is without loss of generality since in any Arrow-Debreu market only the proportions of the prices to one another matters), hence the prices of every other good have to be scaled appropriately, i.e., by the sum of the budget, so as to preserve the internal consistency of the Arrow-Debreu price system that emerges at the competitive equilibrium prices (since Arrow-Debreu assume wlog that prices are in the unit simplex).} \deni{Alternatively, we can get rid of this action space restriction by adding an additional seller player who has all the the goods in the market, considering budgets to be part of the commodity space and then restricting price to be in the unit simplex just like Arrow-Debreu does, I highly discourage this, it is so incomprehensible to a novice reader imo.}

\begin{assumption}
\label{assumption:existence_assum}
The influence Fisher market $(\graph, \util, \budget)$ satisfies for each buyer $\buyer \in \buyers$, $\util[\buyer]$ is
1.~continuous in $(\allocation[\buyer], \allocation[\nei])$,
%\amy{continuous in this pair of things? Deni, could we call this jointly continuous?}
2.~concave in $\allocation[\buyer]$, and
3.~satisfies no saturation.
%\samy{$\forall \allocation[\buyer] \in \R^{\numgoods}$ and $\allocation[\nei] \in \R^{\neighbordegree[\buyer] \times \numgoods}$, there exists an $\allocationp[\buyer] \in \R^{\numgoods}$ such that $\util[\buyer] (\allocationp[\buyer], \allocation[\nei]) > \util[\buyer] (\allocation[\buyer], \allocation[\nei])$.}{}
\end{assumption}

\if 0
\begin{remark}
\Cref{assumption:existence_assum} implies the local nonsatiation of buyer preferences.
%, that is, $\forall \allocation[\buyer] \in \R^{\numgoods}$, $\allocation[\nei] \in \R^{\neighbordegree[\buyer] \times \numgoods}$, and $\varepsilon>0$, there exists a $\allocationp[\buyer] \in \R^{\numgoods}$ with $\|\allocationp[\buyer]- \allocation[\buyer] \|\leq \varepsilon$ such that $\util[\buyer] (\allocationp[\buyer], \allocation[\nei]) > \util[\buyer] (\allocation[\buyer], \allocation[\nei])$.
Local nonsatiation is a key assumption in Walras's law and existence of competitive equilibrium.
\end{remark}
\fi

% \begin{theorem}
% Under \Cref{assumption:existence_assum},  there exists a competitive equilibrium $(\allocationstar, \pricestar)$ in any Fisher market with social influence $(\graph, \util, \budget)$.
% \end{theorem}

\begin{definition}[Auctioneer-Buyer Pseudo-game] 
\label{def:Auctioneer-Buyer_pseudo}
Let $(\graph, \util, \budget)$ be an influence Fisher market.
The corresponding \mydef{auctioneer-buyer pseudo-game} $\pgame=(\numbuyers+1, \actionspace, \actions, \constr, \utilp)$ is defined by
\begin{itemize}
    \item an auctioneer and $\numbuyers$ buyers.
    
    \item Each buyer chooses an allocation $\allocation[\buyer] \in \actionspace[\buyer] = \Rp^{\numgoods}$, while the auctioneer chooses prices $\price \in \actionspace[\priceplayer] = \Rp^{\numgoods}$. 
    
    \item For all buyers $\buyer \in \buyers$,
    the feasible action set given the actions of other players is $\actions[\buyer] (\allocation[- \buyer], \price) = \{ \allocation[\buyer] \in \actionspace[\buyer] \mid \constr( \allocation[\buyer], \allocation[- \buyer], \price ) = \budget[\buyer]- \allocation[\buyer] \cdot \price \geq \zeros\}$.
    
    \item For the auctioneer, the feasible action set is the fixed set $\actions[\priceplayer] = \{ \price \in \actionspace[\priceplayer] \mid \price \cdot \ones[\numgoods] = \budget \cdot \ones[\numbuyers]\}$.
    
    \item For all players $\player \in [\numbuyers+1]$, $\player$ maximizes her utility $\utilp[\buyer]: \bigtimes_{\player \in [\numbuyers+1]} \actionspace[\player] \to \R$,
    %\sadie{The feasibility condition will be satisfied at the equilibrium since we have the budget constraints. We actually showed this in our proof.}
    defined by $\utilp[\buyer] (\allocation, \price) = \util[\buyer] (\allocation[\buyer], \allocation[\nei])$, for the buyers $\buyer \in [\numbuyers]$, and $\utilp[\priceplayer] (\allocation, \price) = \price\cdot \excessd$ 
    %where $\excessd= \left(\sum_{\buyer \in \buyers}\allocation[\buyer] \right) - \ones[\numgoods]$, 
    for the auctioneer.
\end{itemize}
\end{definition}

Next, we prove the existence of CE in influence Fisher markets that satisfy \Cref{assumption:existence_assum}.%
\footnote{Proofs of all theorems appear in the appendix.}

\begin{restatable}{theorem}{thmExistence}
\label{thm:competitive_equ_existence}
The set of competitive equilibria of any influence Fisher market $(\graph, \util, \budget)$ that satisfies \Cref{assumption:existence_assum} is equal to the set of generalized Nash equilibria of the associated auctioneer-buyer pseudo-game $\pgame=(\numbuyers+1, \actionspace, \actions, \constr, \utilp)$.
\end{restatable}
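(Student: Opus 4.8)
The plan is to prove the set equality by establishing the two inclusions separately, exploiting the fact that the buyers' programs coincide almost verbatim in the two notions. In both the competitive equilibrium (CE) and the generalized Nash equilibrium (GNE) of the pseudo-game, buyer $\buyer$ solves $\max_{\allocation[\buyer]:\, \allocation[\buyer]\cdot\pricestar \le \budget[\buyer]} \util[\buyer](\allocation[\buyer], \allocationstar[\nei])$, since $\utilp[\buyer](\allocation,\price)=\util[\buyer](\allocation[\buyer],\allocation[\nei])$ and the feasible set $\actions[\buyer](\allocation[-\buyer],\price)$ is exactly the budget set $\{\allocation[\buyer]\mid \budget[\buyer]-\allocation[\buyer]\cdot\price\geq\zeros\}$; hence this part of each definition transfers immediately. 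The whole argument therefore reduces to reconciling the \emph{auctioneer's} behavior and its feasibility constraint with market feasibility and Walras' law. The single workhorse I would isolate first is a budget-exhaustion fact: under \Cref{assumption:existence_assum}, each $\util[\buyer]$ is concave (hence quasi-concave) and satisfies no saturation, so it is locally non-satiated; consequently, at any utility-maximizing allocation the budget binds, $\allocationstar[\buyer]\cdot\pricestar=\budget[\buyer]$, and summing over buyers gives $\pricestar\cdot\sum_{\buyer\in\buyers}\allocationstar[\buyer]=\budget\cdot\ones[\numbuyers]$.

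For the inclusion CE $\subseteq$ GNE, I would first check pseudo-game feasibility of a CE $(\allocationstar,\pricestar)$: the buyer budget constraints hold because they are imposed inside each CE argmax, and the auctioneer constraint $\pricestar\cdot\ones[\numgoods]=\budget\cdot\ones[\numbuyers]$ follows by combining the budget-exhaustion identity with Walras' law, which rearranges to $\pricestar\cdot\sum_{\buyer}\allocationstar[\buyer]=\pricestar\cdot\ones[\numgoods]$. These two identities together also give $\pricestar\cdot\excessd=0$, where $\excessd$ is the excess demand evaluated at the fixed $\allocationstar$. It then remains to verify the auctioneer plays a best response: for any feasible deviation $\price$ with $\price\cdot\ones[\numgoods]=\budget\cdot\ones[\numbuyers]$, I would write $\price\cdot\excessd=\price\cdot\sum_{\buyer}\allocationstar[\buyer]-\budget\cdot\ones[\numbuyers]$, and then use market feasibility $\sum_{\buyer}\allocationstar[\buyer]\le\ones[\numgoods]$ together with $\price\ge\zeros$ to conclude $\price\cdot\sum_{\buyer}\allocationstar[\buyer]\le\price\cdot\ones[\numgoods]=\budget\cdot\ones[\numbuyers]$, so that $\price\cdot\excessd\le 0=\pricestar\cdot\excessd$. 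Hence $\pricestar$ is optimal and $(\allocationstar,\pricestar)$ is a GNE.

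For the reverse inclusion GNE $\subseteq$ CE, the buyer conditions again transfer, so I need only recover feasibility and Walras' law from the auctioneer's equilibrium behavior. Budget exhaustion together with the auctioneer's own constraint $\pricestar\cdot\ones[\numgoods]=\budget\cdot\ones[\numbuyers]$, which holds because a GNE satisfies $(\allocationstar,\pricestar)\in\actions(\allocationstar,\pricestar)$, gives $\pricestar\cdot\excessd=0$, which is precisely Walras' law. Feasibility is the crux and is where the construction of the auctioneer's action set does the real work: I would argue by contradiction, supposing $\sum_{\buyer}\allocationstar[\buyer][\good]>1$ for some good $\good$, and then exhibiting the deviation price that concentrates the entire mass $\budget\cdot\ones[\numbuyers]$ on good $\good$ and zero elsewhere; this price is admissible (its coordinates sum to $\budget\cdot\ones[\numbuyers]$) and earns the auctioneer profit $(\budget\cdot\ones[\numbuyers])(\sum_{\buyer}\allocationstar[\buyer][\good]-1)>0=\pricestar\cdot\excessd$, contradicting optimality. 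I expect this feasibility direction to be the main obstacle: it relies on the total budget $\budget\cdot\ones[\numbuyers]$ being strictly positive and, more delicately, on the auctioneer's constraint set being exactly the scaled simplex $\{\price\ge\zeros\mid\price\cdot\ones[\numgoods]=\budget\cdot\ones[\numbuyers]\}$, so that profit maximization over it is equivalent to forbidding any positive excess demand. Care is also needed to confirm that the GNE feasibility requirement supplies precisely the buyer budget feasibility and the auctioneer price-normalization used throughout, closing both inclusions.
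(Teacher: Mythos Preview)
Your proposal is correct and, for the direction $\text{GNE}\subseteq\text{CE}$, follows essentially the same route as the paper: establish budget exhaustion from local non-satiation (concavity plus no saturation), sum to obtain $\pricestar\cdot\excessd^*=0$, and then derive feasibility from the auctioneer's optimality by testing against prices concentrated on a single good. Your contradiction phrasing and the paper's direct inequality $0=\pricestar\cdot\excessd^*\ge (\budget\cdot\ones[\numbuyers])\,e_{\good}\cdot\excessd^*$ are the same argument.

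Where you differ is that you also prove the reverse inclusion $\text{CE}\subseteq\text{GNE}$, which the paper's proof does not address despite the theorem asserting set equality; the paper only shows that any GNE is a CE (which suffices for the existence corollary). Your argument for this direction---deriving the auctioneer's feasibility $\pricestar\cdot\ones[\numgoods]=\budget\cdot\ones[\numbuyers]$ by combining budget exhaustion with Walras' law, and then bounding $\price\cdot\excessd^*\le 0$ for every admissible $\price$ via market feasibility---is clean and closes that gap. The caveat you flag about needing $\budget\cdot\ones[\numbuyers]>0$ is real but harmless in context.
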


Existence of a CE in an influence Fisher market now follows immediately from existence of GNE in pseudo-games:

\begin{restatable}{corollary}{thmExistence}
\label{cor:competitive_equ_existence}
There exists a CE $(\allocationstar, \pricestar)$ in all influence Fisher markets $(\graph, \util, \budget)$ satisfying \Cref{assumption:existence_assum}.
\end{restatable}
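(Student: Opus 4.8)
The plan is to read this Corollary off \Cref{thm:competitive_equ_existence} together with the general GNE-existence theorem \Cref{thm:existence_GNE}. Since \Cref{thm:competitive_equ_existence} identifies the set of CE of $(\graph, \util, \budget)$ with the set of GNE of the associated auctioneer-buyer pseudo-game $\pgame$ of \Cref{def:Auctioneer-Buyer_pseudo}, it suffices to exhibit a single GNE of $\pgame$: any such equilibrium is, by that theorem, a CE $(\allocationstar, \pricestar)$. Thus the entire argument reduces to verifying that $\pgame$ satisfies the hypotheses under which \Cref{thm:existence_GNE} guarantees a GNE, and then invoking it.

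First I would check the structural conditions demanded of a concave pseudo-game. For continuity, each buyer's payoff $\utilp[\buyer](\allocation, \price) = \util[\buyer](\allocation[\buyer], \allocation[\nei])$ is continuous by the continuity hypothesis in \Cref{assumption:existence_assum}, and the auctioneer's payoff $\utilp[\priceplayer](\allocation, \price) = \price \cdot \excessd$ is bilinear, hence continuous. For own-action concavity, $\util[\buyer]$ is concave in $\allocation[\buyer]$ by the concavity hypothesis in \Cref{assumption:existence_assum}, and $\price \cdot \excessd$ is linear, thus concave, in $\price$. The auctioneer's constraint set $\actions[\priceplayer] = \{\price \in \Rp^{\numgoods} \mid \price \cdot \ones[\numgoods] = \budget \cdot \ones[\numbuyers]\}$ is a nonempty, compact, convex polytope (a scaled simplex), and each buyer's budget correspondence $\actions[\buyer](\allocation[-\buyer], \price) = \{\allocation[\buyer] \in \Rp^{\numgoods} \mid \allocation[\buyer] \cdot \price \le \budget[\buyer]\}$ is nonempty (it contains $\zeros$) and convex.

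The hard part will be compactness and continuity of the budget correspondence, which is precisely where the literal definition in \Cref{def:Auctioneer-Buyer_pseudo} falls short of the pseudo-game framework: the buyer action spaces $\actionspace[\buyer] = \Rp^{\numgoods}$ are not compact, and $\actions[\buyer](\allocation[-\buyer], \price)$ fails to be compact-valued once some coordinate of $\price$ is zero. Following \citeauthor{arrow1954existence}, I would introduce a \emph{truncated} pseudo-game in which each $\actionspace[\buyer]$ is replaced by a box $[0, \bar{c}]^{\numgoods}$, with $\bar{c}$ chosen large enough --- using that prices lie in the fixed compact set $\actions[\priceplayer]$ and that the total budget $\budget \cdot \ones[\numbuyers]$ is finite --- that the truncation bound is never attained at a budget-optimal allocation. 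On this truncated game every action space is nonempty, compact, and convex, the utilities remain continuous and own-action concave, and the budget correspondence is now compact- and convex-valued; its continuity is the one genuinely delicate point, which I would secure by the standard cheaper-point argument, since $\zeros$ is strictly affordable whenever $\budget[\buyer] > 0$, giving lower hemicontinuity (upper hemicontinuity being routine). \Cref{thm:existence_GNE} then supplies a GNE of the truncated game, and I would finish by verifying that the truncation is slack at this profile --- so that it is also a GNE of the original $\pgame$ --- whence \Cref{thm:competitive_equ_existence} delivers the desired CE $(\allocationstar, \pricestar)$.
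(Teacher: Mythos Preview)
Your proposal is correct and follows the same route as the paper: deduce existence by invoking \Cref{thm:existence_GNE} on the auctioneer--buyer pseudo-game and then applying \Cref{thm:competitive_equ_existence}. The paper treats the corollary as immediate and, in its proof of \Cref{thm:competitive_equ_existence}, simply asserts that \Cref{thm:existence_GNE} yields a GNE of $\pgame$; you are actually more careful in flagging the compactness gap (the buyer action spaces in \Cref{def:Auctioneer-Buyer_pseudo} are $\Rp^{\numgoods}$, so \Cref{thm:existence_GNE} does not apply verbatim) and in sketching the standard Arrow--Debreu truncation and budget-correspondence continuity argument needed to close it.
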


\section{Computation of Competitive Equilibrium via Pseudo-Games}

Although we have established the existence of competitive equilibrium in all influence Fisher markets with continuous and concave utility functions, the proof itself provides little insight into equilibrium computation, as computing a GNE is PPAD-hard in general \citep{daskalakis2009complexity}. 
In order to gain further computational insights, we focus on a subset of influence Fisher markets in which each buyer's utility function is also homogeneous in its own allocation.

A utility function $\util[\buyer]$ is \mydef{homogeneous} in $\allocation[\buyer] \in \Rp^{\numgoods}$ if $\util[\buyer](\allocation[\buyer], \allocation[\nei])$ is homogeneous for all $\allocation[\nei] \in \Rp^{\neighbordegree[\buyer] \times \numgoods}$, i.e., $\util[\buyer] (\lambda \allocation[\buyer], \allocation[\nei]) = \lambda \util[\buyer] (\allocation[\buyer], \allocation[\nei])$, for all $\lambda \geq 0$.
As above, we also assume continuity and concavity.
We call utility functions that satisfy all three of these assumptions CCH utility functions, and Fisher markets inhabited by buyers with such utility functions CCH Fisher markets.

We can compute competitive equilibria in CCH Fisher markets (without social influence) via the Eisenberg-Gale convex program and its dual \citep{eisenberg1959consensus}. 
To generalize this convex program to CCH influence Fisher markets, we propose another pseudo-game, which we call the buyer (only) pseudo-game, that is jointly convex, and whose variational equilibria correspond to CE allocations.
Moreover, we observe that the ``dual'' of this pseudo-game
%of each individual buyer's utility maximization problem constrained by supply 
simultaneously characterizes CE prices.
In other words, while the auctioneer-buyer pseudo-game explicitly models an auctioneer who updates prices in response to the buyers' behavior,
in the buyer (only) pseudo-game, the auctioneer is ``fictitious,'' as it is implicit in the dual.

If $(\util, \budget)$ is a CCH Fisher market, then an optimal solution $\allocationstar$ to the Eisenberg-Gale program (Eq.~\ref{eq:eisenberg_gale_primal}) constitutes a CE allocation, and an optimal solution to the Lagrangian that represents the feasibility constraints (Eq.~\ref{eq:eisenberg_gale_primal_feas_constr}) are the corresponding equilibrium prices \citep{devanur2002market, jain2005market}.

\paragraph{Primal:}
\begin{subequations}
\label{eq:eisenberg_gale_primal}
    \begin{align}
        &\max_{\allocation\in \Rp^{\numbuyers\times \numgoods}} 
        &\sum_{\buyer \in \buyers} \budget[\buyer] \log(\util[\buyer] (\allocation[\buyer])) \tag{\ref{eq:eisenberg_gale_primal}}\\
        &\text{subject to}
        & \forall \good\in \goods, \:\sum_{\buyer \in \buyers}\allocation[\buyer][\good] \leq 1 \label{eq:eisenberg_gale_primal_feas_constr}
    \end{align}
\end{subequations}

In Fisher markets (without social influence), each buyer's utility maximization problem is independent of the others', as each depends only on the buyer's own allocation.
The Eisenberg-Gale program takes advantage of this independence.
It takes an aggregate perspective, maximizing the \emph{sum\/} of the buyers' utilities subject to their feasibility constraints, and nonetheless computes an optimal allocation that maximizes each buyer's \emph{individual\/} utility.

In influence Fisher markets, however, where this independence assumption does not hold, we can no longer compute CE from this aggregate perspective.
In our solution---CE as the VE of a jointly-convex pseudo-game---each buyer maximizes her own utility, subject to a shared feasibility constraint.
Note that the only players in this buyer pseudo-game are the $\numbuyers$ buyers; there is no auctioneer updating prices based on the buyers' behavior.

\begin{definition}[Buyer Pseudo-game] \label{def:Buyer_pseudo}
Let $(\graph, \util, \budget)$ be an influence Fisher market.
The corresponding jointly-convex \mydef{buyer pseudo-game} $\pgame=(\numbuyers, \actionspace, \actions, \constr, \utilp)$ is defined by 
\begin{itemize}
    \item For all buyers $\buyer \in \buyers, \actionspace[\buyer] = \Rp^{\numgoods}$.
    
    \item For all buyers $\buyer \in \buyers$, the feasible action set given the actions of other players is $\actions[\buyer] (\allocation[- \buyer]) = \{ \allocation[\buyer] \in \actionspace[\buyer] \mid \constr( \allocation[\buyer], \allocation[- \buyer] ) = \ones - \sum_{\buyer \in \buyers} \allocation[\buyer] \geq \zeros\}$.
    
    \item For all buyers $\buyer \in \buyers$, $\buyer$ maximizes her utility $\utilp[\buyer]: \bigtimes_{\player \in \players} \actionspace[\player] \to \R$ defined by $\utilp[\buyer] (\allocation) = \budget[\buyer] \log (\util[\buyer] (\allocation[\buyer], \allocation[\nei]))$.
    %\samy{}{$\utilp[\buyer] (\allocation) = \util[\buyer] (\allocation[\buyer], \allocation[\nei])$} \deni{Sadie chime in but I am pretty sure this edit is wrong Amy.} \sadie{I think we really need $\utilp[\buyer] (\allocation) = \util[\buyer] (\allocation[\buyer], \allocation[\nei])$ as our utility functions since that's the magical part that uses homogeneity! So I don't think this edit is correct.} \amy{ok, but the sol'ns to the optimization problems are the same, aren't they? it's just that you cannot derive the dual from the simplification i propose.} \sadie{Yes!!}
    %\amy{what is $\allocation$? shouldn't we say $\allocation \in$ something? like $\actions[\buyer] (\allocation[- \buyer])$? or do we define utilities even for infeasible action profiles?} \sadie{I think we can do so. In my opinion, feasible action profiles are just constraints for outcomes, utility function for each buyer can be defined on the whole action space.}
\end{itemize}
\end{definition}

\begin{assumption}
\label{assumption:pseudo_assum}
For each buyer $\buyer \in \buyers$, $\util[\buyer]$ is
1.~continuous in $(\allocation[\buyer], \allocation[\nei])$; and \sadie{add remark in the appendix relating to cont. diff.}\sadie{Added.}\amy{Deni, please take a look.}
%\sadie{I think we need continuously differentiable for existence of VE!}
2.~concave and homogeneous%
\footnote{We note that homogeneity implies no saturation, since for all $\x \in \R_+^\numgoods$ and $\allocation[\nei] \in \Rp^{\neighbordegree[\buyer]\times \numgoods}$, there exists an allocation $(1+\varepsilon)\x$ for some $\epsilon > 0$ s.t.\@ $\util[\buyer]((1+\varepsilon)\x, \allocation[\nei]) = (1+\varepsilon)\util[\buyer](\x, \allocation[\nei])> \util[\buyer](\x, \allocation[\nei])$.}
in $\allocation[\buyer]$.
%\samy{2. jointly concave in $(\allocation[\buyer], \allocation[\nei])$}{} \amy{in both variables?} \sadie{I'm not very sure about it :( Deni, can you help me with this? Thank you so much!} \deni{THis is used in the computation results so in order to get monotonicity of the game, we need concavity in both variables.}, \sadie{For Theorem 6, we don't need the jointly concave, we only need it for computational results. So maybe}
% 
%\footnote{Note that to obtain our computational results, we now assume that the players' utilities are concave in all the players' allocations jointly and not only on their own.},
%3.~satisfies no saturation.
%\samy{$\forall \allocation[\buyer] \in \R^{\numgoods}$ and $\allocation[\nei] \in \R^{\neighbordegree[\buyer] \times \numgoods}$, there exists a $\allocationp[\buyer] \in \R^{\numgoods}$ such that $\util[\buyer] (\allocationp[\buyer], \allocation[\nei]) > \util[\buyer] (\allocation[\buyer], \allocation[\nei])$.}{} 
\end{assumption}

\begin{restatable}{theorem}{thmPseudoPrimal}
\label{thm:pseudo_game_equ}
Let $(\graph, \util, \budget)$ be an influence Fisher market satisfying \Cref{assumption:pseudo_assum}. 
Then, $\allocationstar$ is a CE allocation of $(\graph, \util, \budget)$ if and only if it is a variational equilibrium (VE) of the corresponding buyer pseudo-game $\pgame=(\numbuyers, \actionspace, \actions, \constr, \utilp)$.
Moreover, if $\allocationstar$ is a VE of the buyer pseudo-game, then the corresponding KKT conditions are satisfied with optimal Langrange multipliers $\pricelangstar[1] = \hdots = \pricelangstar[\numbuyers] = \pricestar$, which correspond to CE prices.

% \deni{(x, p) is a solution to primal and dual resp. iff it is a competitive equilibrium}
\end{restatable}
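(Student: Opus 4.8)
The plan is to characterize both notions of equilibrium through their (necessary and sufficient) KKT conditions and to show that these two systems coincide once we exploit the degree-one homogeneity of the utilities via Euler's theorem. Every constraint in sight is affine---the budget halfspace $\allocation[\buyer] \cdot \pricestar \leq \budget[\buyer]$ together with $\allocation[\buyer] \geq \zeros$ in the CE definition, and the joint feasibility polytope $\{\allocation \geq \zeros : \sum_{\buyer \in \buyers} \allocation[\buyer] \leq \ones\}$ in the buyer pseudo-game---so the linearity constraint qualification holds automatically and, for each underlying concave program, KKT is both necessary and sufficient for optimality. Concavity of $\utilp[\buyer] = \budget[\buyer] \log \util[\buyer]$ in $\allocation[\buyer]$ follows from \Cref{assumption:pseudo_assum}, since $\log$ is concave and increasing and $\util[\buyer]$ is concave; positivity of $\util[\buyer]$ at equilibrium, needed to take logarithms and to divide below, is guaranteed by no saturation, which is implied by homogeneity.

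First I would write the VE of the buyer pseudo-game as the solution of the associated variational inequality over the jointly convex set $\actions$. Dualizing the single shared constraint $\ones - \sum_{\buyer \in \buyers} \allocation[\buyer] \geq \zeros$ with one multiplier $\pricestar \in \Rp^{\numgoods}$ yields, for every buyer $\buyer$, the stationarity condition $\budget[\buyer]\, \grad[{\allocation[\buyer]}] \util[\buyer](\allocationstar) / \util[\buyer](\allocationstar) \leq \pricestar$ with complementary slackness $\allocationstar[\buyer] \odot \bigl( \budget[\buyer]\, \grad[{\allocation[\buyer]}] \util[\buyer](\allocationstar) / \util[\buyer](\allocationstar) - \pricestar \bigr) = \zeros$, together with primal feasibility $\sum_{\buyer \in \buyers} \allocationstar[\buyer] \leq \ones$ and the complementary slackness $\pricestar \odot (\ones - \sum_{\buyer \in \buyers} \allocationstar[\buyer]) = \zeros$. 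Because the VE is a normalized GNE, this \emph{same} multiplier $\pricestar$ can be taken as the KKT multiplier of each buyer's individual best-response program $\max_{\allocation[\buyer]} \utilp[\buyer](\allocation[\buyer], \allocationstar[- \buyer])$ over $\actions[\buyer](\allocationstar[- \buyer])$: on a good that is not fully allocated each per-buyer constraint is slack and so its multiplier is forced to zero (as is $\pricestar$ on that good), while on a fully allocated good the stationarity above pins the per-buyer multiplier to $\pricestar$ wherever the allocation is positive and leaves it free to be set to $\pricestar$ otherwise. This is exactly the assertion $\pricelangstar[1] = \cdots = \pricelangstar[\numbuyers] = \pricestar$.

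The heart of the argument is to identify this KKT system with the CE conditions. For the VE-to-CE direction I would dot the stationarity condition with $\allocationstar[\buyer]$ and apply Euler's theorem for degree-one homogeneous functions, $\grad[{\allocation[\buyer]}] \util[\buyer](\allocationstar) \cdot \allocationstar[\buyer] = \util[\buyer](\allocationstar)$, which collapses the left-hand side to $\budget[\buyer]$ and hence gives $\pricestar \cdot \allocationstar[\buyer] = \budget[\buyer]$: each buyer exactly exhausts her budget at prices $\pricestar$. The joint complementary slackness is then precisely Walras' law and primal feasibility is precisely feasibility, so $(\allocationstar, \pricestar)$ is a CE. Conversely, a buyer's CE program $\max_{\allocation[\buyer] \geq \zeros,\ \allocation[\buyer] \cdot \pricestar \leq \budget[\buyer]} \util[\buyer](\allocation[\buyer], \allocationstar[\nei])$ has KKT stationarity $\grad[{\allocation[\buyer]}] \util[\buyer] \leq \mu_{\buyer} \pricestar$; no saturation forces the budget to bind, so $\mu_{\buyer} > 0$, and repeating the Euler computation gives $\mu_{\buyer} = \util[\buyer](\allocationstar) / \budget[\buyer]$, whereupon this condition reads $\budget[\buyer]\, \grad[{\allocation[\buyer]}] \util[\buyer] / \util[\buyer] \leq \pricestar$---literally the VE stationarity with the shared multiplier $\pricestar$. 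Since Slater's condition holds for the CE program ($\allocation[\buyer] = \zeros$ is strictly budget-feasible as $\budget[\buyer] > 0$) and all programs are concave, KKT is sufficient, so running both implications establishes the claimed equivalence with $\pricestar$ serving simultaneously as the VE multiplier and the CE price vector.

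I expect the main obstacle to be two coupled technical points rather than the algebra: first, justifying that the single variational-inequality multiplier $\pricestar$ legitimately doubles as each buyer's individual multiplier (the normalized-equilibrium property), so that the per-buyer duals genuinely coincide; and second, ensuring $\util[\buyer](\allocationstar) > 0$ throughout, since otherwise the logarithmic objective and the divisions by $\util[\buyer](\allocationstar)$ in the Euler step are ill-defined. The latter I would discharge using no saturation (implied by homogeneity) to rule out a satiated, zero-utility equilibrium, and the former via the complementary-slackness case analysis sketched above. Differentiability, needed for Euler's theorem and the gradient form of stationarity, is the standing smoothness assumption; absent it, the same steps go through with supergradients using the subgradient form of Euler's identity for concave homogeneous functions.
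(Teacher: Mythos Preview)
Your proposal is correct and follows essentially the same route as the paper: write the KKT conditions for the VE, invoke the common-multiplier property of variational equilibria in jointly convex pseudo-games to get $\pricelangstar[1] = \cdots = \pricelangstar[\numbuyers] = \pricestar$, and use Euler's theorem for degree-one homogeneous functions on the dotted stationarity condition to recover budget exhaustion and the equimarginal principle. Your treatment is in fact more complete than the paper's, which only argues the VE $\Rightarrow$ CE direction explicitly and simply cites the Facchinei--Kanzow characterization for the shared multiplier rather than giving the complementary-slackness case analysis you sketch; your handling of the converse via the budget-program KKT and the computation $\mu_{\buyer} = \util[\buyer](\allocationstar)/\budget[\buyer]$ is the natural complement and is correct.
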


The construction of competitive equilibrium via the auctioneer-buyer pseudo-game (\Cref{thm:competitive_equ_existence}) is more general than the construction of competitive equilibrium via the buyer pseudo-game (\Cref{thm:pseudo_game_equ}); however, the existence of the auctioneer precludes monotonicity, and hence polynomial-time computability.
To obtain efficient algorithms, we assume the buyers' utilities are concave not only in their own allocations but in one another's allocations as well, which implies monotonicity.
We also require twice-continuous differentiability.

\begin{assumption}
\label{assumption:comp_pseudo_assum}
For each buyer $\buyer \in \buyers$,
1.~The conditions in \Cref{assumption:pseudo_assum}, and
2.~$\util[\buyer]$ is jointly concave: i.e., concave in $(\allocation[\buyer], \allocation[\nei])$, and
%i.e., all variables simultaneously, and
3.~and twice-continuously differentiable in $(\allocation[\buyer], \allocation[\nei])$. 
%N.B. for differentiability, "in all variables one at a time" implies "in all variables simultaneously".
\end{assumption}

Under \Cref{assumption:comp_pseudo_assum}, an influence Fisher market  can be expressed as a monotone variational inequality. 
There exist methods
%\amy{there exists? does it have a name? can we use the name here, and say the method is described in the appendix?}\sadie{In our proof, we stated it's the Extragradient method by \cite{gorbunov2022extragradient}, but I'm not sure about Deni's new citations. Need to double check this!} 
that converge in last iterates%
\footnote{\citet{Solodov1999ExtraProximal} and \citet{Ryu2019ODEAO} also provide methods that guarantee average-iterate convergence with this same rate in monotone variational inequalities.} to a solution of any monotone variational inequality at a rate of $O(\nicefrac{1}{T})$  (e.g., \citet{gorbunov2022extragradient}).
{Our next theorem follows from these two assertions:

\begin{restatable}{theorem}{thmPseudoConvergence}
There exist methods that converge in last iterates to the CE allocations of influence Fisher markets at a rate of $O(\nicefrac{1}{T})$ under \Cref{assumption:comp_pseudo_assum}. 
In such markets, approximate competitive equilibrium allocations can be computed in polynomial time.
\end{restatable}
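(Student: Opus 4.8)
The plan is to read the theorem as the composition of the two assertions highlighted in the text just above it, with \Cref{thm:pseudo_game_equ} serving as the bridge that turns a convergence guarantee for a variational inequality into one for competitive equilibrium allocations. Concretely, \Cref{thm:pseudo_game_equ} identifies the CE allocations of $(\graph, \util, \budget)$ with the VE of the buyer pseudo-game; the first assertion recasts those VE as solutions of a monotone variational inequality; and the second assertion supplies a last-iterate $O(\nicefrac{1}{T})$ method for monotone variational inequalities. So I would (i) make the variational-inequality reformulation explicit, (ii) invoke the monotonicity established under \Cref{assumption:comp_pseudo_assum}, (iii) run the last-iterate method, and (iv) account for the per-iteration and iteration-count costs to obtain the polynomial-time conclusion.

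For step (i), define the pseudo-gradient operator $F(\allocation) = -\left(\grad[{\allocation[1]}]\utilp[1](\allocation), \hdots, \grad[{\allocation[\numbuyers]}]\utilp[\numbuyers](\allocation)\right)$ over the joint feasible set $\actions = \{\allocation \in \actionspace \mid \ones - \sum_{\buyer \in \buyers}\allocation[\buyer] \geq \zeros\}$ from \Cref{def:Buyer_pseudo}. Under \Cref{assumption:comp_pseudo_assum}, each $\utilp[\buyer]$ is concave in $\allocation[\buyer]$ and continuously differentiable, and $\actions$ is a nonempty, compact, convex polytope contained in $[0,1]^{\numbuyers \times \numgoods}$ that satisfies a Slater condition (e.g., $\allocation[\buyer][\good] = \nicefrac{1}{(2\numbuyers)}$ lies in its interior). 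By the standard equivalence between the VE of a jointly convex concave pseudo-game and its associated variational inequality \citep{facchinei2007vi}, $\allocationstar$ is a VE if and only if $F(\allocationstar)^T(\allocation - \allocationstar) \geq 0$ for all $\allocation \in \actions$. The first assertion is precisely that this $F$ is monotone on $\actions$; combined with \Cref{thm:pseudo_game_equ}, solving for CE allocations is therefore equivalent to solving a monotone variational inequality over a compact convex polytope.

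For steps (iii)--(iv), apply an extragradient-type scheme with last-iterate rate $O(\nicefrac{1}{T})$ for monotone variational inequalities \citep{gorbunov2022extragradient}. Each iteration requires one evaluation of $F$, i.e., the gradients $\grad[{\allocation[\buyer]}]\utilp[\buyer] = \budget[\buyer]\,\grad[{\allocation[\buyer]}]\util[\buyer] / \util[\buyer]$, which is polynomial given twice-continuous differentiability, and one Euclidean projection $\project[\actions]$, which decomposes across the $\numgoods$ goods into independent projections of each good's allocation column onto $\{\x \in \Rp^{\numbuyers} \mid \ones[\numbuyers]^T \x \leq 1\}$, each computable in near-linear time. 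Since the residual decays as $O(\nicefrac{1}{T})$, reaching a VI residual of at most $\varepsilon$ takes $T = O(\nicefrac{1}{\varepsilon})$ iterations, so the total work is polynomial in the market size and $\nicefrac{1}{\varepsilon}$; mapping the resulting iterate back through \Cref{thm:pseudo_game_equ} yields an $\varepsilon$-approximate CE allocation, establishing the polynomial-time claim.

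The main obstacle will be discharging the hypotheses the last-iterate theorems require, rather than the bookkeeping. First, although the first assertion grants monotonicity, a fully rigorous verification is delicate: joint concavity of each $\utilp[\buyer]$ does not by itself force $F$ to be monotone, since $F$ assembles its block rows $\grad[{\allocation[\buyerp]}]\grad[{\allocation[\buyer]}]\utilp[\buyer]$ from distinct buyers' Hessians, so one must exploit the specific log-transformed, graph-structured form of the utilities, e.g., by showing the symmetric part of the Jacobian of $F$ is positive semidefinite on $\actions$. Second, and more seriously, $F$ is not globally Lipschitz on $\actions$: the factor $\nicefrac{1}{\util[\buyer]}$ blows up as $\util[\buyer] \to 0$ near the boundary, whereas the cited $O(\nicefrac{1}{T})$ guarantees assume a Lipschitz (or at least locally bounded) operator. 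I would handle this by confining the iterates to a compact subregion of $\actions$ on which the utilities are bounded below---using homogeneity and no saturation to argue that equilibrium and near-equilibrium allocations remain in such a region---and by bounding the resulting Lipschitz and monotonicity constants polynomially in the market parameters, so that the iteration count and per-iteration cost both stay polynomial.
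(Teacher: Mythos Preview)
Your proposal is correct and follows essentially the same route as the paper: identify CE allocations with VE of the buyer pseudo-game via \Cref{thm:pseudo_game_equ}, recast the VE as a monotone variational inequality over the jointly convex feasible set, and invoke the extragradient last-iterate $O(\nicefrac{1}{T})$ guarantee of \citet{gorbunov2022extragradient}. The only cosmetic difference is in handling the singularity of the operator where $\util[\buyer]\to 0$: the paper shifts each $\util[\buyer]$ up by a small constant $\varepsilon$, whereas you restrict to a compact subregion bounded away from zero utility---both achieve the needed Lipschitz bound at the cost of a controlled approximation, and your caution about verifying monotonicity (which the paper deduces in one line from joint concavity plus the $\log$ composition) is, if anything, more careful than the paper's own argument.
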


\deni{Here are two papers to cite ``Solodov, M. V. and Svaiter, B. F. (1999). A hybrid approximate extragradient–proximal point algorithm
using the enlargement of a maximal monotone op-
erator.'' and ``Ryu, E. K., Yuan, K., and Yin, W. (2019). Ode analysis of stochastic gradient methods with optimism
and anchoring for minimax problems.''} \sadie{I though we were using results from "Extragradient Method: O(1/K) Last-Iterate Convergence for Monotone Variational Inequalities and Connections With Cocoercivity" by Eduard Gorbunov?} \deni{Either of the three is okay. The papers that are not from Gorbunov are more standard but they give average iterate convergence, the paper from Gorbunov gives last iterate. I think you can cite Gorbunov and use extragradient descent! Maybe add these other references as other possible algorithms in a footnote?}

\section{Computation of Competitive Equilibrium via Stackelberg Games}\label{sec:stackelberg}

Recently, \citet{cole2019balancing} presented a generalization of the Eisenberg-Gale dual for arbitrary CCH utility functions, which accurately characterizes competitive equilibrium prices, but fails to match the optimal objective value of the Eisenberg-Gale primal.
Building on their results, \citet{Goktas2021aConsumertheoretic} derived the exact Eisenberg-Gale dual, for which strong duality holds.

\paragraph{Dual:}
\begin{subequations}\label{eq:eisenberg_gale_dual}
    \begin{align}
        &\min_{\price\in \Rp^{\numgoods}}
        &&\sum_{\good\in \goods}\price[\good]
        + \sum_{\buyer \in \buyers} \budget[\buyer]
        \log \left( \util[\buyer] (\allocationstar[\buyer]) \right) - \budget[\buyer]
        \tag{\ref{eq:eisenberg_gale_dual}}\\
        &\text{s.t.} \:
        &&\forall \buyer \in \buyers, \; \allocationstar[\buyer] \in  \argmax_{\allocation[\buyer] \in \Rp^{\numgoods}: \allocation[\buyer] \cdot \price\leq \budget[\buyer]} \util[\buyer] (\allocation[\buyer])
    \end{align}
\end{subequations}

\noindent
We begin this section by deriving the ``duals'' of our buyer pseudo-game.

%\amy{so simplifying the second term in the obj fn, $\sum_{\buyer \in \buyers} \budget[\buyer] \log \left( \util[\buyer] (\allocationstar[\buyer]) \right) - \budget[\buyer] = \sum_{\buyer \in \buyers} \budget[\buyer] \left( \log \left( \util[\buyer] (\allocationstar[\buyer]) - 1 \right) \right)$. alternatively, we are maximizing $\sum_{\buyer \in \buyers} \budget[\buyer] - \budget[\buyer] \log \left( \util[\buyer] (\allocationstar[\buyer]) \right) = \sum_{\buyer \in \buyers} \budget[\buyer] \left( 1 - \log \left( \util[\buyer] (\allocationstar[\buyer]) \right) \right)$? either way, it feels like for buyers with high budgets, we really care about keep the log of their utility close to 1, meaning we can't let them get too too much utility; while for buyers with low budgets we care less about their utility value. can you two please elaborate on this intuition -- let's say something about how this term captures the sense in which Nash social welfare is fair?} \sadie{Sorry I am not sure that I understand this intuition :(.} \deni{This intuition is wrong imo, the second budget term does not affect the optimization.}

In the buyer pseudo-game $\pgame$, each buyer is solving an optimization problem (\Cref{eq:individual_primal}) in which they maximize their utility function by choosing an optimal action in their feasible action set, given the other buyers' VE actions.
Based on this observation, we can derive the ``dual'' of our buyer pseudo-game;
but as our pseudo-game comprises $\numbuyers$ different optimization problems, one for each buyer $\buyer \in \buyers$, instead of just one dual, we have $\numbuyers$ duals.
Moreover, because any VE of a jointly-convex pseudo-game satisfies the corresponding KKT conditions with optimal Langrange multipliers $\pricelangstar[1] = \hdots = \pricelangstar[\numbuyers]$ (\Cref{thm:jointly_convex_kkt} \cite{facchinei2009generalized}), all $\numplayers$ duals yield the same CE prices!
In other words, just as the dual of Eisenberg-Gale program characterizes the CE prices of a Fisher market, the $\numplayers$ duals of our pseudo-game characterize the CE prices of an influence Fisher market (satisfying \Cref{assumption:pseudo_assum}).

\begin{restatable}{theorem}{thmPseudoDual} 
\label{thm:pseudo_dual}
Let $(\graph, \util, \budget)$ be an influence Fisher market satisfying \Cref{assumption:pseudo_assum}, and let $\pgame$ be the corresponding buyer pseudo-game $\pgame=(\numbuyers, \actionspace, \actions, \constr, \utilp)$. 
%\deni{Here you have to mention that this optimization problem is only taking place fixing the GNE actions of the other players.} \amy{do we need to assume a Nash oracle?} \sadie{I don't think we need it here. Here is just deriving the dual of the buyer's optimization problem.}
For each buyer $\buyer \in \buyers$, fixing its neighbors' allocations $\allocationstar[\nei]$,
%\amy{what is the difference b/n ``fixing'' and ``given''? that is why i am asking about the Nash oracle? where do we get the neighbors' allocations from?}\sadie{I see your point! I personally prefer "fixing" since this gives me a sense of playing Nash in normal form games. I still think we don't need to introduce the NE oracle here since we are just caring about a individual buyer's optimization problem in the pseud-game (similar to the individual buyer's normal-form game. But we can discuss about it more!)} 
the dual of $\buyer$'s optimization problem, 
\begin{align} 
\label{eq:individual_primal}
    \max_{\allocation[\buyer] \in \Rp^{\numgoods}:
    \allocation[\buyer] + \sum_{k\neq \buyer}
    \allocationstar[k] \leq \ones}
    \budget[\buyer] \log (\util[\buyer] (\allocation[\buyer], \allocationstar[\nei]))
\end{align}
is given by 
%\amy{N.B. i removed the $\budget[\buyer]$ and the $\log$ in the buyers' optimization problems.}
% \begin{align}
%     \min_{\price\in \Rp^{\numgoods}}
%     &\sum_{\good\in \goods}\price[\good] - \sum_{k\neq i}\sum_{\good\in \goods} \price[\good] \allocationstar[k][\good] \nonumber\\
%     &+\max_{\allocation[\buyer] \in \Rp^{\numgoods}: \allocation[\buyer] \cdot \price \leq \budget[\buyer]}  \budget[\buyer] \log(\util[\buyer] (\allocation[\buyer], \allocationstar[\nei])) - \budget[\buyer] \label{eq:individual_dual}
% \end{align}

\begin{subequations}
\label{eq:individual_dual}
    \begin{align}
        &\min_{\price\in \Rp^{\numgoods}}
        &&\sum_{\good\in \goods} \price[\good] \left( 1 - \sum_{k \neq \buyer} \allocationstar[k][\good] \right) + \budget[\buyer] \log( \util[\buyer] (\allocationstar[\buyer], \allocationstar[\nei])) - \budget[\buyer]
        \tag{\ref{eq:individual_dual}}\\
        &\text{s.t.}
        && \allocationstar[\buyer] \in \argmax_{\allocation[\buyer] \in \Rp^{\numgoods}: \allocation[\buyer] \cdot \price \leq \budget[\buyer]} \util[\buyer] (\allocation[\buyer], \allocationstar[\nei])
    \end{align}
\end{subequations}
%\amy{can you rewrite this dual to look more like \cref{eq:nash_equ_follower}? i.e., as a bi-level optimization problem. if you do so, i think it will be easier to see how the new program generalizes the old. e.g., it seems all the buyers are all solving independent optimization problems in this case, so they are immediately finding a Nash eqm.}
\end{restatable}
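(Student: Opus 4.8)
The plan is to derive the Lagrangian dual of buyer $\buyer$'s program \Cref{eq:individual_primal} directly, holding every other buyer's allocation fixed at its VE value $\allocationstar[k]$ ($k \neq \buyer$), in close analogy with the derivation of the Eisenberg-Gale dual \Cref{eq:eisenberg_gale_dual} by \citet{Goktas2021aConsumertheoretic}. Under \Cref{assumption:pseudo_assum}, $\util[\buyer]$ is concave and homogeneous of degree one in $\allocation[\buyer]$, so $\budget[\buyer]\log\util[\buyer](\allocation[\buyer],\allocationstar[\nei])$ is concave in $\allocation[\buyer]$ wherever $\util[\buyer] > 0$; together with the affine supply constraint $\allocation[\buyer] + \sum_{k\neq\buyer}\allocationstar[k] \leq \ones$, the program is a concave maximization over a polyhedron. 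Because the constraints are affine and the primal is feasible (as $\allocationstar$ is a VE), strong duality holds without a Slater interior-point condition, so it suffices to compute the Lagrangian dual and show it has the claimed form.

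First, I introduce a multiplier $\price \in \Rp^{\numgoods}$ for the supply constraint $\ones - \sum_{k\neq\buyer}\allocationstar[k] - \allocation[\buyer] \geq \zeros$, retaining $\allocation[\buyer] \geq \zeros$ in the domain, and form
\[
L(\allocation[\buyer], \price) = \budget[\buyer]\log\util[\buyer](\allocation[\buyer], \allocationstar[\nei]) + \price \cdot \left(\ones - \sum_{k\neq\buyer}\allocationstar[k] - \allocation[\buyer]\right).
\]
Taking the supremum over $\allocation[\buyer] \in \Rp^{\numgoods}$ and pulling the price-linear constant out of the supremum splits the dual function into the term $\sum_{\good\in\goods}\price[\good]\bigl(1 - \sum_{k\neq\buyer}\allocationstar[k][\good]\bigr)$ --- which is already the leading term of the claimed dual \Cref{eq:individual_dual} --- plus the inner maximization $\sup_{\allocation[\buyer]\geq\zeros}\bigl[\budget[\buyer]\log\util[\buyer](\allocation[\buyer],\allocationstar[\nei]) - \price\cdot\allocation[\buyer]\bigr]$. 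It remains to evaluate this inner term and identify its maximizer as the budget-constrained demand.

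The crux of the argument, and the step I expect to be the main obstacle, is showing that the inner maximization equals $\budget[\buyer]\log\util[\buyer](\allocationstar[\buyer],\allocationstar[\nei]) - \budget[\buyer]$, where $\allocationstar[\buyer] \in \argmax_{\allocation[\buyer]\cdot\price\leq\budget[\buyer]}\util[\buyer](\allocation[\buyer],\allocationstar[\nei])$ is the demand at prices $\price$. I would exploit homogeneity by indexing candidate allocations by their spending $s = \price\cdot\allocation[\buyer]$: degree-one homogeneity of $\util[\buyer]$ gives $\max_{\price\cdot\allocation[\buyer]\leq s}\util[\buyer](\allocation[\buyer],\allocationstar[\nei]) = (s/\budget[\buyer])\,\util[\buyer](\allocationstar[\buyer],\allocationstar[\nei])$, reducing the inner term to the scalar problem $\max_{s\geq 0}\bigl[\budget[\buyer]\log(s/\budget[\buyer]) + \budget[\buyer]\log\util[\buyer](\allocationstar[\buyer],\allocationstar[\nei]) - s\bigr]$, whose first-order condition $\budget[\buyer]/s = 1$ gives optimal spending $s = \budget[\buyer]$ and optimal value $\budget[\buyer]\log\util[\buyer](\allocationstar[\buyer],\allocationstar[\nei]) - \budget[\buyer]$; equivalently, Euler's identity $\grad[{\allocation[\buyer]}]\util[\buyer](\allocation[\buyer],\allocationstar[\nei])\cdot\allocation[\buyer] = \util[\buyer](\allocation[\buyer],\allocationstar[\nei])$ applied at the stationarity condition of the inner problem yields $\price\cdot\allocation[\buyer] = \budget[\buyer]$ directly. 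Substituting back, the dual function equals $\sum_{\good\in\goods}\price[\good](1 - \sum_{k\neq\buyer}\allocationstar[k][\good]) + \budget[\buyer]\log\util[\buyer](\allocationstar[\buyer],\allocationstar[\nei]) - \budget[\buyer]$, and minimizing over $\price\in\Rp^{\numgoods}$ with $\allocationstar[\buyer]$ constrained to be the demand recovers \Cref{eq:individual_dual}. I would finish by invoking strong duality to equate this dual value with the primal optimum of \Cref{eq:individual_primal}, and note that the positivity $\util[\buyer](\allocationstar[\buyer],\allocationstar[\nei]) > 0$ needed for the logarithm follows from homogeneity and feasibility of a nonzero affordable bundle.
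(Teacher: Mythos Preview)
Your proposal is correct and follows essentially the same route as the paper: form the Lagrangian dual of \Cref{eq:individual_primal}, separate off the term $\sum_{\good}\price[\good](1-\sum_{k\neq\buyer}\allocationstar[k][\good])$, and then use homogeneity of $\util[\buyer]$ in $\allocation[\buyer]$ to show that the inner unconstrained maximization $\sup_{\allocation[\buyer]\geq\zeros}[\budget[\buyer]\log\util[\buyer]-\price\cdot\allocation[\buyer]]$ equals the budget-constrained log-utility minus $\budget[\buyer]$. The paper packages that last step as a standalone lemma proved via the KKT stationarity condition together with Euler's theorem (which you also mention as an alternative); your spending-level reparameterization $s=\price\cdot\allocation[\buyer]$ is a slightly more direct way to reach the same conclusion $s^*=\budget[\buyer]$, but the content is identical.
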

\if 0
\begin{align*}
    \min_{\price\in \Rp^{\numgoods}}
    &\sum_{\good\in \goods} \price[\good] \left( 1 - \sum_{k \neq \buyer} \allocationstar[k][\good] \right)
    + \budget[\buyer] \log( \util[\buyer] (\allocationstar[\buyer], \allocationstar[\nei])) - \budget[\buyer] \\
    &\text{s.t.}\:
    \allocationstar[\buyer] \in \argmax_{\allocation[\buyer] \in \Rp^{\numgoods}:\allocation[\buyer] \cdot \price \leq \budget[\buyer]} \budget[\buyer] \log (\util[\buyer] (\allocation[\buyer], \allocationstar[\nei]))
\end{align*}
\fi
%\amy{yes, thank you! can you please do this for Equation 2, as well? i find this much easier to parse than a dual that solves a minimization problem, with a maximization embedded in it. thanks! if you write them both this way, you can add a short discussion here of the relationship b/n the dual of your pseudo-game and the dual of the EG program. it's cool to see how similar they are!}\sadie{Added for Equation 2!}\amy{where did you add this? i cannot find it. can you highlight it for me with a comment, please. thanks!}

\if 0
\amy{all the terms involving the buyers' utilities are obvious generalizations of the corresponding terms in the EG dual. but the first term, involving the prices, is particularly interesting. it says that buyer $i$ wants to minimize the total prices of the goods that are available to them after all the other buyers are allocated their stuff. oh wait --- maybe this term is just a constant in this dual? where do the $\allocationstar[k][\good]$ terms come from, for $k \neq i$? in short, i think we should (if it would be technically correct to do so) include this term in both the EG dual and in ours, or in neither. which way makes more sense to you?
\begin{align*}
    \min_{\price\in \Rp^{\numgoods}}
    &\sum_{\good\in \goods} \price[\good] \left( 1 - \sum_{k \neq \buyer} \allocationstar[k][\good] \right)
    + \budget[\buyer] \log( \util[\buyer] (\allocationstar[\buyer], \allocationstar[\nei])) - \budget[\buyer] \\
    &\text{s.t.}\:
    \allocationstar[\buyer] \in \argmax_{\allocation[\buyer] \in \Rp^{\numgoods}: \allocation[\buyer] \cdot \price \leq \budget[\buyer]} \util[\buyer] (\allocation[\buyer], \allocationstar[\nei])
\end{align*}
}
\sadie{For each individual buyer, the $\allocationstar[k][\good]$ is fixed action of others, so we can treat them as constants.}
\amy{thanks! again, my question: about the term $\left( 1- \sum_{k\neq i} \allocationstar[k][\good] \right)$. is this term just a constant in Eqn 3 as well? if so, then we could include it there as well, or instead, just delete it here. it isn't needed if it is really a constant?} \sadie{In my opinion, even though $\left( 1- \sum_{k\neq i} \allocationstar[k][\good] \right)$ is a constant, it cannot be removed from the dual equation (4). Since $\allocationstar[\buyer]$ is a function of $\price$, the optimization problem is like $\min_{\price} \price\cdot \bm{c} + f(\price)$ for some constant $\bm{c}$, but we cannot remove $\bm{c}$ right?} \amy{good! then we need to add an English explanation of this term. in the old dual, the auctioneer just minimized the sum of all prices. in this dual, he does so again, but weighting each price according to this term in parentheses. so weighing each price in $i$'s opt'n problem by the total amount of stuff available to $i$ after everyone else buys their stuff. intuitively, can you explain why it makes sense to include a term like this in the new solution but not the old? i.e., for influence Fisher markets, but not standard Fisher markets?} \sadie{I'm still not sure about the English explanation part :( May need help for this. }
\fi

\citet{goktas2021minmax} further show that the dual of the Eisenberg-Gale program can be re-expressed as the solution to the following zero-sum convex-concave Stackelberg game characterizes the CE of any CCH Fisher market:
%Moreover, based on the Stackelberg formation of competitive equilibrium, 
% \samy{More specifically, they present the following zero-sum convex-concave Stackelberg game:}{}
%
\begin{align}
\label{eq:fisher_stackelberg}
    \min_{\price \in \Rp^{\numgoods}}
    \max_{\allocation \in \Rp^{\numbuyers\times \numgoods}: \allocation \cdot \price \leq \budget}
    \sum_{\good \in \goods} \price[\good] 
    + \sum_{\buyer \in \buyers} \budget[\buyer] \log (\util[\buyer] (\allocation[\buyer]))
\end{align}

\noindent
The leader in this game is a fictitious auctioneer (i.e., price setter), while the follower represents a set of buyers who effectively play as a team.
The objective function is the sum of the auctioneer's welfare (i.e., the sum of the prices) and the buyers' Nash social welfare.
\citeauthor{goktas2021minmax} also derive a first-order method that solves this game, which, via the aforementioned interpretation, can be understood as computing a competitive equilibrium of a Fisher market via \emph{t\^atonnement}.

We argue that competitive equilibria in \emph{influence\/} Fisher markets can likewise be characterized via Stackelberg equilibria.
This more general setting requires not just one, but a system of $\numbuyers$ 
%(single-leader single-follower) 
zero-sum convex-concave Stackelberg games \cite{goktas2021minmax}, one per buyer. 
In each game, the leader once again is a fictitious auctioneer (i.e., price setter), but the follower is just an individual buyer, not the set of all buyers.
Moreover, in each buyer's Stackelberg game, the objective function is the sum of the auctioneer's revenue (i.e., the sum of the good prices, each one discounted by the supply available to buyer $\buyer$ beyond what has been claimed by the others) and the individual buyer's utility.
These $\numbuyers$ Stackelberg games are played simultaneously
%with the buyers best-responding to one another.
with the fictitious auctioneer optimizing prices assuming all the buyers simultaneously best respond (i.e., play a Nash equilibrium), and the buyers best respond to the auctioneer's prices, given the other buyers' allocations.

% \sadie{Should we define the Stackelberg equilibrium? And should we even further define a new equilibrium definition that indicates the solution to the system of Stackelberg game playing simultaneously.}\deni{Did the convergence result of nested gda/tatonnement disappear? If so, then you don't need to define it, I don't think so!}\sadie{I think we still need the convergence results for tatonnement, I just didn't add it yet.}

\begin{definition}[Buyer $\buyer$'s Stackelberg Game]
Let $(\graph, \util, \budget)$ be an influence Fisher market.
The corresponding \mydef{Stackelberg game for buyer $\buyer$} is defined by
\begin{align}
    \min_{\price\in \Rp^{\numgoods}} 
    \max_{\allocation[\buyer] \in \Rp^{\numgoods}: \allocation[\buyer] \cdot \price \leq \budget[\buyer]}
    &\sum_{\good\in \goods} \price[\good] \left( 1 - \sum_{k \neq \buyer} \allocationstar[k][\good] \right)
    + \notag \\
    &\budget[\buyer] \log( \util[\buyer] (\allocation[\buyer], \allocationstar[\nei]))  \label{eq:buyer_stackelberg}
\end{align}
\end{definition}

The following corollary follows from Theorems~\ref{thm:pseudo_game_equ} and~\ref{thm:pseudo_dual}.

\begin{restatable}{corollary}{corollaryStackelberg}
\label{cor:stackelberg_iff_ce}
$(\allocationstar, \pricestar)$ is a competitive equilibrium of an influence CCH Fisher market $(\graph, \util, \budget)$ satisfying \Cref{assumption:pseudo_assum} iff $(\allocationstar[\buyer], \pricestar)$ is a Stackelberg equilibrium in buyer $\buyer$'s Stackelberg game, for all buyers $\buyer \in \buyers$.
\end{restatable}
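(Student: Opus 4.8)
The plan is to chain \Cref{thm:pseudo_game_equ} and \Cref{thm:pseudo_dual} together with a direct min-max reformulation of the dual \Cref{eq:individual_dual}, mirroring the argument that \citet{goktas2021minmax} use to recast the Eisenberg-Gale dual as the Stackelberg game \Cref{eq:fisher_stackelberg}. The crux is a per-buyer equivalence: for each fixed buyer $\buyer$ with its neighbors' allocations pinned at $\allocationstar[\nei]$, a Stackelberg equilibrium of buyer $\buyer$'s game \Cref{eq:buyer_stackelberg} coincides with an optimal primal-dual pair of \Cref{eq:individual_dual}. To see this, I would first observe that the price term $\sum_{\good\in\goods}\price[\good]\left(1-\sum_{k\neq\buyer}\allocationstar[k][\good]\right)$ does not depend on $\allocation[\buyer]$, so it factors out of the inner maximization; the follower's problem thus reduces to $\max_{\allocation[\buyer]\cdot\price\leq\budget[\buyer]}\budget[\buyer]\log(\util[\buyer](\allocation[\buyer],\allocationstar[\nei]))$, whose maximizers coincide with those of $\max_{\allocation[\buyer]\cdot\price\leq\budget[\buyer]}\util[\buyer](\allocation[\buyer],\allocationstar[\nei])$ since $\log$ is strictly increasing and $\util[\buyer]>0$ on the relevant budget set (guaranteed by homogeneity and no saturation under \Cref{assumption:pseudo_assum}). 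This is exactly the argmax constraint appearing in \Cref{eq:individual_dual}. Substituting such a best response back into \Cref{eq:buyer_stackelberg} produces the leader's value function, which equals the objective of \Cref{eq:individual_dual} plus the constant $\budget[\buyer]$; hence the two share identical minimizers over $\price$, and the Stackelberg equilibrium $(\allocationstar[\buyer],\pricestar)$ is precisely an optimal solution of \Cref{eq:individual_dual}.

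Given this equivalence I would prove both directions. For the forward direction, suppose $(\allocationstar,\pricestar)$ is a CE. By \Cref{thm:pseudo_game_equ}, $\allocationstar$ is a VE of the buyer pseudo-game whose KKT conditions hold with common multipliers $\pricelangstar[1]=\cdots=\pricelangstar[\numbuyers]=\pricestar$. Being a VE means that, for every buyer $\buyer$, the allocation $\allocationstar[\buyer]$ solves the primal \Cref{eq:individual_primal} against the others' allocations; by \Cref{thm:pseudo_dual} and strong duality, $\pricestar$ is then optimal for the dual \Cref{eq:individual_dual} with $\allocationstar[\buyer]$ a corresponding follower best response, so the per-buyer equivalence makes $(\allocationstar[\buyer],\pricestar)$ a Stackelberg equilibrium of buyer $\buyer$'s game for all $\buyer$. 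For the reverse direction, suppose $(\allocationstar[\buyer],\pricestar)$ is a Stackelberg equilibrium of buyer $\buyer$'s game for every $\buyer$, sharing the single price vector $\pricestar$. The equivalence gives that $(\allocationstar[\buyer],\pricestar)$ solves \Cref{eq:individual_dual} with $\allocationstar[\buyer]$ a best response; invoking \Cref{thm:pseudo_dual} in reverse, $\allocationstar[\buyer]$ solves buyer $\buyer$'s primal \Cref{eq:individual_primal}, i.e., it is a utility-maximizing best response to $\allocationstar[-\buyer]$ under price $\pricestar$. Since this holds simultaneously for all buyers against the \emph{same} $\pricestar$, $\allocationstar$ is a VE whose common KKT multiplier is $\pricestar$, and \Cref{thm:pseudo_game_equ} yields that $(\allocationstar,\pricestar)$ is a CE.

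The step I expect to demand the most care is the Stackelberg-to-dual equivalence of the first paragraph: one must verify that the follower's logarithmic objective and the dual's raw-utility argmax genuinely select the same allocations (hinging on strict positivity of $\util[\buyer]$ on the budget set) and that the additive constant $\budget[\buyer]$ separating the leader value function from the dual objective leaves the minimizing price set unchanged. The remaining subtlety is the bookkeeping around the \emph{common} price quantifier: the forward direction relies on \Cref{thm:pseudo_game_equ}'s guarantee that the per-buyer KKT multipliers coincide, so that one $\pricestar$ serves all $\numbuyers$ Stackelberg games, while the reverse direction feeds a shared $\pricestar$ back into the VE/KKT characterization. Although each of the $\numbuyers$ games is solved independently, it is precisely this common price that ties their individual Stackelberg equilibria back into a single market-clearing competitive equilibrium.
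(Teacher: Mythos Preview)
Your proposal is correct and follows essentially the same approach as the paper's proof: both chain \Cref{thm:pseudo_game_equ} and \Cref{thm:pseudo_dual} through the observation that buyer $\buyer$'s Stackelberg game differs from the dual \Cref{eq:individual_dual} only by the additive constant $\budget[\buyer]$. You supply more detail than the paper does---in particular the argument that the follower's $\log$-utility argmax coincides with the raw-utility argmax, and the explicit separation into forward and reverse directions---but the underlying logic is identical.
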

\begin{proof}  
  For all $\buyer\in \buyers$, $(\allocationstar[\buyer], \pricestar)$ is a Stackelberg equilibrium in buyer $\buyer$'s Stackelberg game iff $(\allocationstar[\buyer], \pricestar)$ solves 
    \begin{align}
    &\min_{\price\in \Rp^{\numgoods}} \! \!
    &&\sum_{\good\in \goods} \price[\good] \! \! \left( \! 1 - \sum_{k \neq \buyer} \allocationstar[k][\good] \!  \right) \! \!
    + \budget[\buyer] \log( \util[\buyer] (\allocationstar[\buyer], \allocationstar[\nei])) - \budget[\buyer] \nonumber \\
    &\text{s.t.}\:
    &&\allocationstar[\buyer] \in \argmax_{\allocation[\buyer] \in \Rp^{\numgoods}: \allocation[\buyer] \cdot \price \leq \budget[\buyer]} \util[\buyer] (\allocation[\buyer], \allocationstar[\nei])
\end{align}

\noindent
By \Cref{thm:pseudo_dual}, $\pricestar$ is a solution to this bi-level optimization problem (\Cref{eq:individual_dual}) iff $\allocationstar[\buyer]$ is a solution to buyer $\buyer$'s optimization problem (\Cref{eq:individual_primal}) in the buyer pseudo-game corresponding to $(\graph, \util, \budget)$.
Finally, by \Cref{thm:pseudo_game_equ}, $(\allocationstar, \pricestar)$ is a competitive equilibrium of $(\graph, \util, \budget)$.
\end{proof}

Our Stackelberg game formulation of CE in influence Fisher markets enables us to compute CE by solving a system of buyer Stackelberg games: i.e., solving for a Stackelberg equilibrium in each of the buyer Stackelberg games together with a Nash equilibrium among the buyers in the system.
Towards that end, for convenience, we define the \mydef{objective function} for buyer $\buyer$'s Stackelberg game:
\begin{align} 
\label{eq:obj_func}
 \obj_{\buyer}(\allocation[\buyer], \price)
 &\coloneqq
 \sum_{\good\in \goods} \price[\good] \left( 1 - \sum_{k \neq \buyer} \allocationstar[k][\good] \right)
    + \budget[\buyer] \log( \util[\buyer] (\allocation[\buyer], \allocationstar[\nei]))
\end{align}

\noindent
and the $i$th (fictional) auctioneer's \mydef{value function} in buyer $\buyer$'s Stackelberg game:
\begin{align}
\label{eq:value_func}
    \val_{\buyer}(\price)
    &\coloneqq
    \max_{\allocation[\buyer] \in \Rp^{\numgoods}: \allocation[\buyer] \cdot \price \leq \budget[\buyer]}
    \sum_{\good\in \goods} \price[\good] \left( 1 - \sum_{k \neq \buyer} \allocationstar[k][\good] \right) \notag \\
    & \quad \quad \quad \quad \quad \quad \quad  + \budget[\buyer] \log( \util[\buyer] (\allocation[\buyer], \allocationstar[\nei]))
\end{align}

\noindent
Moreover, while each buyer is playing a Stackelberg game with its fictitious auctioneer, all buyers are also playing an $\numbuyers$-buyer simultaneous game with one another, in which each buyer maximizes its objective function $\obj_{\buyer}(\allocation[\buyer], \price)$ (\Cref{eq:obj_func}), given the prices $\price$ set by the auctioneer and the other buyers' allocations.
We can characterize a Nash equilibrium of this $\numbuyers$-buyer game as follows:
\begin{subequations}
\label{eq:normal_form_game}
\begin{align}
\allocationstar[\buyer] 
&\in \argmax_{\allocation[\buyer] \in  \Rp^{\numgoods}: \allocation[\buyer] \cdot \price \leq \budget[\buyer]} \sum_{\good\in \goods} \price[\good] \left( 1 - \sum_{k \neq \buyer} \allocationstar[k][\good] \right) \notag \\
    & \quad \quad \quad \quad \quad \quad \quad + \budget[\buyer] \log( \util[\buyer] (\allocation[\buyer], \allocationstar[\nei]))
    \tag{\ref{eq:normal_form_game}} \\
    &= \argmax_{\allocation[\buyer] \in \Rp^{\numgoods}: \allocation[\buyer] \cdot \price \leq \budget[\buyer]} \util[\buyer] (\allocation[\buyer], \allocationstar[\nei]) 
\end{align}
\end{subequations}
\noindent
As the first summand in Equation~\ref{eq:normal_form_game} and $\budget[\buyer]$ are constants (i.e., they do not depend on $\allocation[\buyer]$), and $\log$ is a monotonic function,
%given prices $\price$ set by the auctioneer and the other buyers' allocations $\allocationstar[\nei]$, 
buyer $i$ simply seeks to maximize its utility subject to its budget constraint.
% Therefore, assuming access to a Nash equilibrium oracle that returns a Nash equilibrium $\allocationstar$ of this game (\Cref{eq:normal_form_game}), we can compute the subdifferential of the value function. 

Using a subdifferential envelope theorem \citep{goktas2021minmax}, we now derive the subgradient of each auctioneer's value function $\val_{\buyer}$ (\Cref{eq:value_func}).

\begin{restatable}{theorem}{thmSubdiff}
\label{thm:subdiff_equal_excess_demands}
Given an influence CCH Fisher market $(\graph, \util, \budget)$, the subdifferential of the $i$th auctioneer's value function in buyer $\buyer$'s Stackelberg game (\Cref{eq:value_func}) at given prices $\price$ is equal to the negative excess demand 
%\amy{excess supply?} 
at $\price$: i.e.,
%
%\begin{align*}
    $\subdiff[\price]
        \val_{\buyer}(\price)
        = \ones - \sum_{\buyer \in \buyers} \allocationstar[\buyer]$.
%\end{align*}
\end{restatable}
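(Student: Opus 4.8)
The plan is to invoke the subdifferential envelope theorem of \citet{goktas2021minmax} applied to the value function $\val_{\buyer}$ (\Cref{eq:value_func}). Here the inner maximization is over buyer $\buyer$'s own allocation $\allocation[\buyer]$, subject to the \emph{price-dependent} budget constraint $\allocation[\buyer] \cdot \price \leq \budget[\buyer]$, while the neighbors' allocations $\allocationstar[\nei]$ and the other buyers' allocations $(\allocationstar[k])_{k \neq \buyer}$ are held fixed as parameters. Under \Cref{assumption:pseudo_assum}, $\util[\buyer]$ is continuous and concave in $\allocation[\buyer]$, so $\obj_{\buyer}(\cdot, \price)$ is concave in $\allocation[\buyer]$ over the convex feasible set; together with compactness of the budget set and a standard constraint qualification, this ensures a nonempty set of maximizers and existence of KKT multipliers, which is exactly what the envelope theorem requires.

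The next step is to form the Lagrangian of the inner problem with respect to the budget constraint, $L_{\buyer}(\allocation[\buyer], \lambda_{\buyer}; \price) = \obj_{\buyer}(\allocation[\buyer], \price) + \lambda_{\buyer}\left(\budget[\buyer] - \allocation[\buyer] \cdot \price\right)$; the nonnegativity constraints $\allocation[\buyer] \geq \zeros$ do not involve $\price$ and so contribute nothing to the price-derivative. The envelope theorem then states that any element of $\subdiff[\price]\val_{\buyer}(\price)$ has the form $\grad[\price] L_{\buyer}(\allocationstar[\buyer], \lambda_{\buyer}; \price)$ for an optimal allocation $\allocationstar[\buyer]$ and optimal multiplier $\lambda_{\buyer}$. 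Computing this price-gradient term by term: the summand $\budget[\buyer]\log(\util[\buyer](\allocation[\buyer], \allocationstar[\nei]))$ is independent of $\price$ and vanishes; the summand $\sum_{\good\in\goods} \price[\good](1 - \sum_{k\neq\buyer}\allocationstar[k][\good])$ contributes the vector whose $\good$th entry is $1 - \sum_{k\neq\buyer}\allocationstar[k][\good]$; and the constraint term contributes $-\lambda_{\buyer}\allocationstar[\buyer]$. Thus the $\good$th entry of the resulting subgradient is $1 - \sum_{k\neq\buyer}\allocationstar[k][\good] - \lambda_{\buyer}\,\allocationstar[\buyer][\good]$, and it remains only to show $\lambda_{\buyer} = 1$ to collapse this to the negative excess demand $1 - \sum_{\buyer\in\buyers}\allocationstar[\buyer][\good]$.

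The crucial step is $\lambda_{\buyer} = 1$, and this is exactly where homogeneity is used. Since $\util[\buyer]$ is homogeneous of degree one in $\allocation[\buyer]$, Euler's identity gives $\grad[{\allocation[\buyer]}]\util[\buyer](\allocationstar[\buyer], \allocationstar[\nei]) \cdot \allocationstar[\buyer] = \util[\buyer](\allocationstar[\buyer], \allocationstar[\nei])$. The stationarity condition of the inner problem (in which the price-linear summand is constant in $\allocation[\buyer]$) reads $\budget[\buyer]\,\grad[{\allocation[\buyer]}]\util[\buyer] / \util[\buyer] = \lambda_{\buyer}\price$; taking the inner product of both sides with $\allocationstar[\buyer]$ (the contributions of any active nonnegativity constraints drop out by complementary slackness) and applying Euler's identity yields $\budget[\buyer] = \lambda_{\buyer}(\price \cdot \allocationstar[\buyer])$. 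Finally, since degree-one homogeneity implies no saturation and hence local non-satiation (per the footnote to \Cref{assumption:pseudo_assum}), the budget constraint binds at the optimum, i.e.\ $\price \cdot \allocationstar[\buyer] = \budget[\buyer]$, which forces $\lambda_{\buyer} = 1$. Substituting back gives $\subdiff[\price]\val_{\buyer}(\price) \ni \ones - \sum_{\buyer\in\buyers}\allocationstar[\buyer]$, the negative excess demand.

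I expect the main obstacle to be the careful and rigorous invocation of the subdifferential envelope theorem in the presence of a constraint set that depends on the very variable $\price$ being differentiated: one must verify concavity of the inner objective, the continuity/compactness conditions guaranteeing a nonempty maximizer correspondence, and a constraint qualification so that the multiplier $\lambda_{\buyer}$ is well defined. A secondary subtlety is that when the inner maximizer is not unique, the envelope theorem returns the convex hull over all optimal allocations; however, every optimal demand $\allocationstar[\buyer]$ spends the full budget and yields the same $\lambda_{\buyer} = 1$, so the displayed equality should be read as identifying a representative subgradient (which is all that the t\^atonnement/subgradient-descent procedure of the next section requires), and uniqueness of demand makes $\subdiff[\price]\val_{\buyer}(\price)$ the single vector $\ones - \sum_{\buyer\in\buyers}\allocationstar[\buyer]$.
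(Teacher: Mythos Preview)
Your proposal is correct and follows essentially the same argument as the paper: both apply the subdifferential envelope theorem of \citet{goktas2021minmax} and both rely on Euler's identity for homogeneous functions together with complementary slackness to establish that the optimal budget-constraint multiplier equals $1$. The only cosmetic difference is packaging: the paper first isolates the $\lambda_{\buyer}=1$ fact as a separate lemma (\Cref{lemma:opt_lambda}) and uses it to rewrite the constrained inner problem as the unconstrained problem $\max_{\allocation[\buyer]\in\Rp^{\numgoods}} \budget[\buyer]\log(\util[\buyer](\allocation[\buyer],\allocationstar[\nei])) + \budget[\buyer] - \allocation[\buyer]\cdot\price$ before invoking the envelope theorem, whereas you apply the envelope theorem directly to the constrained problem and derive $\lambda_{\buyer}=1$ afterward.
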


Interestingly, this subgradient turns out to equal the negative excess demand in the market at the given prices.
As excess demand is an aggregate quantity, it is independent of buyer $i$.
Indeed, the subgradients of \emph{all\/} the fictional auctioneers are the same; so there is effectively just one auctioneer.

Based on this observation, we now present our \mydef{Nash Equilibrium (NE)-oracle gradient descent} algorithm (\Cref{alg:ne_oracle_gd}), which follows the subgradient of the auctioneer's value function, assuming access to a NE-oracle.
Given prices $\price \in \Rp^{\numgoods}$, this oracle returns a Nash equilibrium $\allocationstar$ of the $\numbuyers$-buyer concave game specified by \Cref{eq:normal_form_game}.
%That is, \amy{fix primes!} it returns $\allocationp \in \Rp^{\numbuyers \times \numgoods}$ with $\allocationp \cdot \price \leq \budget$ such that for all $\buyer \in \buyers$, $\util[\buyer] (\allocationp[\buyer], \allocationp[\nei]) \geq \util[\buyer] (\allocation[\buyer], \allocationp[\nei]) - \delta$, for all $\allocation[\buyer] \in \Rp^{\numgoods}$ satisfying $\allocation[\buyer] \cdot \price \leq \budget[\buyer]$.
The algorithm then runs subgradient descent on the auctioneer's value function.
Overall, this approach corresponds to solving for a CE allocation and 
%%% SPACE
%supporting 
prices via \emph{t\^{a}tonnement}, assuming the NE oracle is exact.
%T\^atonnement, which was first proposed in 1870s by \cite{walras}, is an auction-like process in which the seller of good $\good \in \goods$ increases the price of the good if its demand exceeds its supply, and decreases the price of the good if its supply exceeds its demand. 
As NE-oracles are rarely exact, \Cref{alg:ne_oracle_gd} assumes a NE-oracle that finds a Nash equilibrium up to some approximation error $\delta$.

Finally, under standard assumptions (i.e., \Cref{assumption:comp_pseudo_assum}), the auctioneer's value function (\Cref{eq:value_func}) is convex and $\lipcont[\val]$-Lipschitz continuous in $\price$ with $\lipcont [\val] = \max_{\price\in \Rpp^{\numgoods}} \| \grad[\price] \val (\price) \|$.%
\footnote{Although $\grad[\price] \val (\price)$ is not necessarily bounded at $\price=0$, we can remedy this fact by shifting $\price$ by a small constant $\varepsilon > 0$, albeit losing some accuracy.}
%\amy{don't we then want to take limits as $\epsilon \to 0$?} \deni{This does not work because as $\varepsilon \to \infty$, then $\val \to \infty$, i.e., the bound for all $\varepsilon$ is not the same.}
These properties imply that our NE-oracle gradient descent algorithm converges to 
%CE allocations and \samy{}{supporting} prices 
competitive equilibrium at a rate of $O(\nicefrac{1}{\sqrt{T}})$.
%\ssadie{, since t\^atonnement is then equivalent to running a subgradient method on the auctioneer's value function.}{.} \amy{only now? it wasn't already equivalent in the previous paragraph? did we need the add'l assumptions to draw this conclusion?} \sadie{I think we can conclude that in the previous paragraph, so maybe just delete the second part of this sentence?} 
We include a more detailed statement 
%%% SPACE
%and proof
of the following theorem in the appendix. 

\begin{restatable}{theorem}{thmStackelbergConvergence}
\label{thm:convergence_of_tatonnment}
\Cref{alg:ne_oracle_gd} (i.e., \emph{t\^atonnement}) converges to a competitive equilibrium in any influence CCH Fisher market $(\graph, \util, \budget)$ satisfying \Cref{assumption:comp_pseudo_assum} at a rate of $O(\nicefrac{1}{\sqrt{T}})$.
\end{restatable}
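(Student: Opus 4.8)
The plan is to recognize that \Cref{alg:ne_oracle_gd} is precisely inexact projected subgradient descent applied to a single convex, Lipschitz value function, and then to invoke the standard $O(\nicefrac{1}{\sqrt{T}})$ guarantee for such methods while carefully accounting for the error introduced by the approximate NE-oracle. First I would assemble the structural facts already in hand. The preamble to the theorem asserts that, under \Cref{assumption:comp_pseudo_assum}, each $\val_{\buyer}$ is convex and $\lipcont[\val]$-Lipschitz continuous in $\price$; and by \Cref{thm:subdiff_equal_excess_demands}, a subgradient of $\val_{\buyer}$ at $\price$ is the negative excess demand $\ones - \sum_{\buyer \in \buyers} \allocationstar[\buyer]$, which is identical across all buyers. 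Hence I may treat the whole system as a single convex program $\min_{\price \in \Rp^{\numgoods}} \val(\price)$ with $\val \coloneqq \val_{\buyer}$. Combining \Cref{thm:pseudo_dual} with \Cref{cor:stackelberg_iff_ce}, a minimizer $\pricestar$ of $\val$ together with the NE allocation $\allocationstar$ that best-responds to $\pricestar$ (\Cref{eq:normal_form_game}) is a CE; this is the target to which I must establish convergence.

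Next I would write the update rule explicitly. Given current prices $\price^{(t)}$, the NE-oracle returns a $\delta$-approximate NE allocation $\allocation^{(t)}$ of the concave game in \Cref{eq:normal_form_game}, and the algorithm performs $\price^{(t+1)} = \project[{\Rp^{\numgoods}}]\!\left(\price^{(t)} - \eta_t\left(\ones - \sum_{\buyer \in \buyers} \allocation[\buyer]^{(t)}\right)\right)$ for step size $\eta_t$. Since the step direction is the negative excess demand, this update increases (resp.\ decreases) the price of each overdemanded (resp.\ underdemanded) good, i.e.\ it is exactly \emph{t\^{a}tonnement}. Because the oracle is inexact, this direction is a perturbation of a genuine subgradient of $\val$ at $\price^{(t)}$; I would quantify this perturbation as $O(\delta)$ via a sensitivity argument on the buyers' best responses, leaning on the twice-continuous differentiability in \Cref{assumption:comp_pseudo_assum} to ensure that the excess demand depends continuously on the NE accuracy.

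I would then run the standard telescoping argument for projected inexact subgradient descent. Using nonexpansiveness of $\project[{\Rp^{\numgoods}}]$ together with the convexity subgradient inequality for $\val$, I obtain $\|\price^{(t+1)} - \pricestar\|^2 \leq \|\price^{(t)} - \pricestar\|^2 - 2\eta_t\left(\val(\price^{(t)}) - \val(\pricestar)\right) + \eta_t^2 \lipcont[\val]^2 + O(\eta_t \delta)$, where the last term absorbs the oracle error. Summing over $t \in [T]$, telescoping, and choosing $\eta_t = \Theta(\nicefrac{1}{\sqrt{T}})$ yields $\min_{t \in [T]} \val(\price^{(t)}) - \val(\pricestar) = O(\nicefrac{1}{\sqrt{T}}) + O(\delta)$. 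Because the subgradient of $\val$ equals the excess demand, this same quantity controls the feasibility and Walras'-law violation, so the best-iterate prices together with the oracle's NE allocation constitute an $O(\nicefrac{1}{\sqrt{T}} + \delta)$-approximate CE; for $\delta$ driven to zero (or taken sufficiently small) this is the claimed rate.

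The hard part will be the sensitivity step: showing rigorously that a $\delta$-approximate Nash equilibrium of \Cref{eq:normal_form_game} produces an excess-demand vector within $O(\delta)$ of a true subgradient of $\val$. A bound on each buyer's utility suboptimality does not by itself bound the allocation error, so I would need a stability estimate for the buyers' best responses—for instance exploiting strict/strong concavity induced by the CCH and joint-concavity structure, or a local conditioning bound from twice-differentiability—to convert the $\delta$-NE guarantee into a controlled perturbation of the gradient direction, and then propagate that perturbation through the recursion without degrading the $O(\nicefrac{1}{\sqrt{T}})$ rate. A secondary nuisance, dispatched exactly as the footnote suggests, is that $\grad[\price]\val$ is unbounded as $\price \to \zeros$; I would restrict to prices bounded away from $\zeros$ by a small $\varepsilon$ and fold the resulting error into the approximation term.
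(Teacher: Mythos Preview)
Your approach is essentially the same as the paper's: establish that the (common) value function is convex and Lipschitz in $\price$, identify its subgradient as the negative excess demand via \Cref{thm:subdiff_equal_excess_demands}, and invoke the standard $O(\nicefrac{1}{\sqrt{T}})$ rate for projected subgradient descent, handling the $\price \to \zeros$ issue by the $\varepsilon$-shift. In fact you are more careful than the paper, which does not explicitly track the $\delta$-oracle error at all and simply cites the exact subgradient rate; your sensitivity analysis for the approximate NE is additional rigor rather than a different route.
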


\begin{remark}
We can implement an approximate NE oracle by computing the buyers' equilibrium allocations via extragradient ascent~\cite{gorbunov2022extragradient}, which is guaranteed to converge to a Nash equilibrium at a rate of $O(\nicefrac{1}{T})$, as the $\numplayers$-buyer concave game defined by \Cref{eq:normal_form_game} is monotone.
This observation gives rise to \Cref{alg:nested_ne_gd} (see Appendix, \nameref{app:algo} Section), which computes a competitive equilibrium in influence CCH Fisher markets in polynomial time.
\end{remark}

\begin{algorithm}[htbp]
\caption{NE-Oracle \emph{T\^{a}tonnement\/} For Influence Fisher Markets}
\textbf{Inputs:} $\graph, \util, \budget, \price^{(0)}, \learnrate, \delta$\\
\textbf{Outputs:} $\allocationstar, \pricestar$
\label{alg:ne_oracle_gd}
\begin{algorithmic}[1]
\For{$\iter = 1, \hdots, \iters$}
    \State Find $\allocationp\in \Rp^{\numbuyers\times \numgoods}$ with $\allocationp\cdot\price^{(\iter-1)} \leq \budget$ such that:
    \State for all $\buyer \in \buyers$, $\util[\buyer] (\allocationp[\buyer], \allocationp[\nei])\geq \util[\buyer] (\allocation[\buyer], \allocationp[\nei]) - \delta$, \\
    \State for any $\allocation[\buyer] \in \Rp^{\numgoods}$ satisfying $\allocation[\buyer] \cdot\price^{(\iter-1)} \leq \budget[\buyer]$
    \State Set $\allocation^{(\iter)}=\allocationp$
    \State Set $\price^{(\iter)} = \project[\Rp^{\numgoods}] \left(\price^{(\iter-1)} - \learnrate(1 -\sum_{\buyer \in \buyers} \allocation[\buyer]^{(\iter)}) \right)$
\EndFor
\State \Return $\allocation^{(\iters)}, \price^{(\iters)}$
\end{algorithmic}
\end{algorithm}

\section{Experiments}

We ran a series of experiments%
\footnote{We include a detailed description of our experimental setup in the Appendix.}
to see how the empirical convergence rates of \Cref{alg:ne_oracle_gd} compare to its theoretical guarantees under various utility structures.
We considered three standard utility functions: linear, in which buyers practice utilitarian social welfare in their neighborhoods; Cobb-Douglas, in which practice Nash social welfare in their neighborhoods; and Leontief, in which practice egalitarian social welfare in their neighborhoods.
Each utility structure endows the objective function (\Cref{eq:obj_func}) and the value functions (\Cref{eq:value_func}) with different smoothness properties, which in turn varies the convergence properties of our algorithms. 

Let $\valuation[\buyer] \in \R^\numgoods$ be a vector of parameters that describes the utility function $\utilp[\buyer]: \Rp^{\numgoods}\to \Rp$ of buyer $\buyer \in \buyers$.
%, i.e., $\valuation[\buyer]$ is the type of player $\buyer$.
We consider the following (standard) utility functions: for all $\buyer \in \buyers$,
\begin{enumerate}
    \item Linear: $\util[\buyer](\allocation[\buyer],\allocation[\nei]) = \sum_{k\in {\buyer}\cup \nei} \utilp[k](\allocation[k])$, where $\utilp[\buyer](\allocation[\buyer])=\sum_{\good\in \goods} \valuation[\buyer][\good] \allocation[\buyer][\good]$
    
    \item Cobb-Douglas: $\util[\buyer](\allocation[\buyer], \allocation[\nei])=\prod_{k\in \{\buyer\}\cup \nei } \utilp[k](\allocation[k])$, where $\utilp[\buyer](\allocation[\buyer]) = \prod_{\good \in \goods} \allocation[\buyer][\good]^{\valuation[\buyer][\good]}$
    
    \item Leontief: $\util[\buyer](\allocation[\buyer], \allocation[\nei]) = \min_{k\in \{\buyer\}\cup \nei } \utilp[k](\allocation[k])$, where $\utilp[\buyer](\allocation[\buyer]) = \min_{\good \in \goods} \left\{ \frac{\allocation[\buyer][\good]}{\valuation[\buyer][\good]}\right\}$
\end{enumerate}

\begin{figure*}[htpb]
  % \begin{minipage}[c]{0.67\textwidth}
    \begin{subfigure}[b]{0.31\textwidth}
         \centering
         \includegraphics[width=\textwidth,height=5cm]{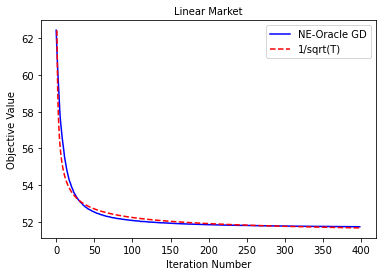}
         \label{fig:linear_market}
     \end{subfigure}
     \hfill
     \begin{subfigure}[b]{0.31\textwidth}
         \centering
         \includegraphics[width=\textwidth,height=5cm]{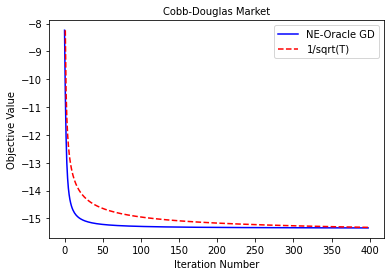}
         \label{fig:cd_market}
     \end{subfigure}
     \hfill
     \begin{subfigure}[b]{0.31\textwidth}
         \centering
         \includegraphics[width=\textwidth,height=5cm]{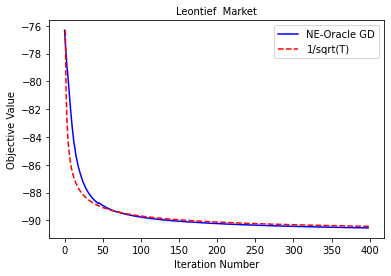}
         \label{fig:leontief_market}
     \end{subfigure}
  % \end{minipage}\hfill
  % \begin{minipage}[c]{0.3\textwidth}
    \caption{In \textcolor{blue}{blue}, we depict a trajectory of average value of the objective function across experiments (\Cref{eq:obj_func}), for \Cref{alg:ne_oracle_gd} with EG as the NE-oracle, in randomly initialized linear, Cobb-Douglas, and Leontief Fisher markets with social influence. In red, we plot an arbitrary $O(\nicefrac{1}{\sqrt{T}})$ function.}\label{fig:convergence_results}
  % \end{minipage}
\end{figure*}

Assuming any of these three utility functions, we can solve for a Nash equilibrium among buyers by formulating a monotone variational inequality problem, and solving it via the extragradient method (EG) in $O(\nicefrac{1}{T})$ iterations \cite{gorbunov2022extragradient}. 
Then, by using EG as the NE-oracle, we can efficiently compute an optimal $\allocationstar (\price)$ for any given $\price$, which yields \Cref{alg:nested_ne_gd} (see Appendix),
%%% SPACE
%\nameref{app:algo} Section)
a specific implementation of \Cref{alg:ne_oracle_gd}.

\Cref{fig:convergence_results} depicts the empirical convergence of \Cref{alg:ne_oracle_gd} with EG as the NE-oracle.
We observe that convergence is fastest in influence Fisher markets with Cobb-Douglas utilities, followed by linear, and then Leontief. 
For influence Fisher markets with Cobb-Douglas utilities, both the value and the objective function are differentiable; in fact, they are both twice continuously differentiable, making them both Lipschitz-smooth.
These factors combined seem to lead to a faster convergence rate than $O(\nicefrac{1}{\sqrt{T}})$. 
On the other hand, for influence Fisher markets with linear utilities, we seem to obtain a tight convergence rate of $O(\nicefrac{1}{\sqrt{T}})$, which seems plausible, as the value function is not differentiable assuming linear utilities, and hence we are unlikely to achieve a better convergence rate. 
Finally, influence Fisher markets with Leontief utilities, in which the objective function is not differentiable, are the hardest markets for our algorithm to solve.
Nonetheless, we still observe a decent convergence rate, one that appears only slightly slower than $O(\nicefrac{1}{\sqrt{T}})$.

\section{Conclusion}

In this paper, we studied a special case of Arrow-Debreu markets with social influence, which we call Fisher markets with social influence, or influence Fisher markets for short.
First, we extended known results on the existence of competitive equilibrium in markets with social influence to a larger more natural class of markets.
Our proof proceeds by reducing an influence Fisher market to an auctioneer-buyer pseudo-game such that every generalized Nash equilibrium in the pseudo-game is a competitive equilibrium of the influence Fisher market.
The existence of generalized Nash equilibrium in pseudo-games thus implies the existence of competitive equilibrium in influence Fisher markets.

We then introduced a monotone jointly convex buyer-only pseudo-game as a generalization of the Eisenberg-Gale program, whose variational equilibria correspond to the competitive equilibria in influence Fisher markets.
In this pseudo-game, the duals of the individual buyers' utility-maximization problems constrained by the supply constraint comprise a system of $\numbuyers$ simultaneously-played zero-sum Stackelberg games, which simultaneously characterize the competitive equilibrium prices of the influence Fisher market.
We then show that running gradient descent on the leaders'/auctioneers' value functions in these games is equivalent to solving the market via a variant of \emph{t\^{a}tonnement}, where in addition to the auctioneers iteratively adjusting prices, the buyers iteratively learn a Nash equilibrium in response to these prices.

Our results pave the way for future work developing methods to compute competitive equilibria in more general types of influence markets beyond those considered in this paper \cite{Chen2011MakretwithSocialInfluence}, and other market models with graphical structure, such as graphical economies \cite{Kakade2004GraphicalE} \amy{insert reference to KKO paper}\sadie{Added}.

\if 0
Our results pave the way for future work on more general zero-sum single-leader multiple-follower Stackelberg games whose Stackelberg equilibria can be computed in polynomial-time.
% \sadie{Add that future work is to formalize zero-sum single-leader multiple-follower Stackelberg games}
\fi

\section{Acknowledgments}
This research was partially supported by the National Science Foundation (CMMI-1761546).

\bibliographystyle{plainnat}  
\bibliography{references}  

\begin{thebibliography}{38}
\providecommand{\natexlab}[1]{#1}
\providecommand{\url}[1]{\texttt{#1}}
\expandafter\ifx\csname urlstyle\endcsname\relax
  \providecommand{\doi}[1]{doi: #1}\else
  \providecommand{\doi}{doi: \begingroup \urlstyle{rm}\Url}\fi

\bibitem[Arrow and Debreu(1954)]{arrow1954existence}
Kenneth~J Arrow and Gerard Debreu.
\newblock Existence of an equilibrium for a competitive economy.
\newblock \emph{Econometrica: Journal of the Econometric Society}, pages
  265--290, 1954.

\bibitem[Brainard et~al.(2000)Brainard, Scarf, et~al.]{brainard2000compute}
William~C Brainard, Herbert~E Scarf, et~al.
\newblock \emph{How to compute equilibrium prices in 1891}.
\newblock Citeseer, 2000.

\bibitem[Chen and Teng(2011)]{Chen2011MakretwithSocialInfluence}
X.~Chen and Shang-Hua Teng.
\newblock A complexity view of markets with social influence.
\newblock \emph{ArXiv}, abs/1009.0309, 2011.

\bibitem[Chen and Deng(2006)]{chen2006settling}
Xi~Chen and Xiaotie Deng.
\newblock Settling the complexity of two-player nash equilibrium.
\newblock In \emph{2006 47th Annual IEEE Symposium on Foundations of Computer
  Science (FOCS'06)}, pages 261--272. IEEE, 2006.

\bibitem[Chen and Teng(2009)]{chen2009spending}
Xi~Chen and Shang-Hua Teng.
\newblock Spending is not easier than trading: on the computational equivalence
  of fisher and arrow-debreu equilibria.
\newblock In \emph{International Symposium on Algorithms and Computation},
  pages 647--656. Springer, 2009.

\bibitem[Cheung et~al.(2013)Cheung, Cole, and Devanur]{fisher-tatonnement}
Yun~Kuen Cheung, Richard Cole, and Nikhil Devanur.
\newblock Tatonnement beyond gross substitutes? gradient descent to the rescue.
\newblock In \emph{Proceedings of the Forty-Fifth Annual ACM Symposium on
  Theory of Computing}, STOC '13, page 191–200, New York, NY, USA, 2013.
  Association for Computing Machinery.
\newblock ISBN 9781450320290.
\newblock \doi{10.1145/2488608.2488633}.
\newblock URL \url{https://doi.org/10.1145/2488608.2488633}.

\bibitem[Cole and Tao(2019)]{cole2019balancing}
Richard Cole and Yixin Tao.
\newblock Balancing the robustness and convergence of tatonnement, 2019.

\bibitem[Daskalakis et~al.(2009)Daskalakis, Goldberg, and
  Papadimitriou]{daskalakis2009complexity}
Constantinos Daskalakis, Paul~W Goldberg, and Christos~H Papadimitriou.
\newblock The complexity of computing a nash equilibrium.
\newblock \emph{SIAM Journal on Computing}, 39\penalty0 (1):\penalty0 195--259,
  2009.

\bibitem[{Devanur} et~al.(2002){Devanur}, {Papadimitriou}, {Saberi}, and
  {Vazirani}]{devanur2002market}
N.~R. {Devanur}, C.~H. {Papadimitriou}, A.~{Saberi}, and V.~V. {Vazirani}.
\newblock Market equilibrium via a primal-dual-type algorithm.
\newblock In \emph{The 43rd Annual IEEE Symposium on Foundations of Computer
  Science, 2002. Proceedings.}, pages 389--395, 2002.
\newblock \doi{10.1109/SFCS.2002.1181963}.

\bibitem[Devanur et~al.(2008)Devanur, Papadimitriou, Saberi, and
  Vazirani]{devanur2008market}
Nikhil~R Devanur, Christos~H Papadimitriou, Amin Saberi, and Vijay~V Vazirani.
\newblock Market equilibrium via a primal--dual algorithm for a convex program.
\newblock \emph{Journal of the ACM (JACM)}, 55\penalty0 (5):\penalty0 1--18,
  2008.

\bibitem[Dreves(2017)]{dreves2017computing}
Axel Dreves.
\newblock Computing all solutions of linear generalized nash equilibrium
  problems.
\newblock \emph{Mathematical Methods of Operations Research}, 85\penalty0
  (2):\penalty0 207--221, 2017.

\bibitem[Dreves et~al.(2013)Dreves, Heusinger, Kanzow, and
  Fukushima]{dreves2013newton}
Axel Dreves, Anna Heusinger, Christian Kanzow, and Masao Fukushima.
\newblock A globalized newton method for the computation of normalized nash
  equilibria.
\newblock \emph{J. of Global Optimization}, 56\penalty0 (2):\penalty0
  327–340, jun 2013.
\newblock ISSN 0925-5001.
\newblock \doi{10.1007/s10898-011-9824-9}.
\newblock URL \url{https://doi.org/10.1007/s10898-011-9824-9}.

\bibitem[Eisenberg and Gale(1959)]{eisenberg1959consensus}
Edmund Eisenberg and David Gale.
\newblock Consensus of subjective probabilities: The pari-mutuel method.
\newblock \emph{The Annals of Mathematical Statistics}, 30\penalty0
  (1):\penalty0 165--168, 1959.

\bibitem[Facchinei et~al.(2007)Facchinei, Fischer, and
  Piccialli]{facchinei2007vi}
Francisco Facchinei, Andreas Fischer, and Veronica Piccialli.
\newblock On generalized nash games and variational inequalities.
\newblock \emph{Operations Research Letters}, 35\penalty0 (2):\penalty0
  159--164, 2007.
\newblock ISSN 0167-6377.
\newblock \doi{https://doi.org/10.1016/j.orl.2006.03.004}.
\newblock URL
  \url{https://www.sciencedirect.com/science/article/pii/S0167637706000484}.

\bibitem[Facchinei et~al.(2009)Facchinei, Fischer, and
  Piccialli]{facchinei2009generalized}
Francisco Facchinei, Andreas Fischer, and Veronica Piccialli.
\newblock Generalized nash equilibrium problems and newton methods.
\newblock \emph{Mathematical Programming}, 117\penalty0 (1):\penalty0 163--194,
  2009.

\bibitem[Fischer et~al.(2016)Fischer, Herrich, Izmailov, and
  Solodov]{fischer2016globally}
Andreas Fischer, Markus Herrich, Alexey~F Izmailov, and Mikhail~V Solodov.
\newblock A globally convergent lp-newton method.
\newblock \emph{SIAM Journal on Optimization}, 26\penalty0 (4):\penalty0
  2012--2033, 2016.

\bibitem[Gao and Kroer(2020)]{gao2020polygm}
Yuan Gao and Christian Kroer.
\newblock First-order methods for large-scale market equilibrium computation.
\newblock In Hugo Larochelle, Marc'Aurelio Ranzato, Raia Hadsell,
  Maria{-}Florina Balcan, and Hsuan{-}Tien Lin, editors, \emph{Advances in
  Neural Information Processing Systems 33: Annual Conference on Neural
  Information Processing Systems 2020, NeurIPS 2020, December 6-12, 2020,
  virtual}, 2020.
\newblock URL
  \url{https://proceedings.neurips.cc/paper/2020/hash/f75526659f31040afeb61cb7133e4e6d-Abstract.html}.

\bibitem[Goktas and Greenwald(2021)]{goktas2021minmax}
Denizalp Goktas and Amy Greenwald.
\newblock Convex-concave min-max stackelberg games.
\newblock \emph{Advances in Neural Information Processing Systems}, 34, 2021.

\bibitem[Goktas and Greenwald(2022)]{goktas2022exploitability}
Denizalp Goktas and Amy Greenwald.
\newblock Exploitability minimization in games and beyond.
\newblock \emph{arXiv preprint arXiv:2210.10207}, 2022.

\bibitem[Goktas et~al.(2021)Goktas, Viqueira, and
  Greenwald]{Goktas2021aConsumertheoretic}
Denizalp Goktas, Enrique~Areyan Viqueira, and Amy Greenwald.
\newblock A consumer-theoretic characterization of fisher market equilibria.
\newblock In \emph{Web and Internet Economics: 17th International Conference,
  WINE 2021, Potsdam, Germany, December 14–17, 2021, Proceedings}, page
  334–351, Berlin, Heidelberg, 2021. Springer-Verlag.
\newblock ISBN 978-3-030-94675-3.
\newblock \doi{10.1007/978-3-030-94676-0_19}.
\newblock URL \url{https://doi.org/10.1007/978-3-030-94676-0_19}.

\bibitem[Goktas et~al.(2022)Goktas, Zhao, and Greenwald]{goktas2022robust}
Denizalp Goktas, Jiayi Zhao, and Amy Greenwald.
\newblock Robust no-regret learning in min-max stackelberg games.
\newblock \emph{arXiv preprint arXiv:2203.14126}, 2022.

\bibitem[Gorbunov et~al.(2022)Gorbunov, Loizou, and
  Gidel]{gorbunov2022extragradient}
Eduard Gorbunov, Nicolas Loizou, and Gauthier Gidel.
\newblock Extragradient method: O (1/k) last-iterate convergence for monotone
  variational inequalities and connections with cocoercivity.
\newblock In \emph{International Conference on Artificial Intelligence and
  Statistics}, pages 366--402. PMLR, 2022.

\bibitem[Heusinger and Kanzow(2009)]{von2009relax}
A.~Heusinger and C.~Kanzow.
\newblock Relaxation methods for generalized nash equilibrium problems with
  inexact line search.
\newblock \emph{Journal of Optimization Theory and Applications}, 143\penalty0
  (1):\penalty0 159--183, 2009.
\newblock URL
  \url{https://EconPapers.repec.org/RePEc:spr:joptap:v:143:y:2009:i:1:d:10.1007_s10957-009-9553-0}.

\bibitem[Izmailov and Solodov(2014)]{izmailov2014error}
Alexey~F Izmailov and Mikhail~V Solodov.
\newblock On error bounds and newton-type methods for generalized nash
  equilibrium problems.
\newblock \emph{Computational Optimization and Applications}, 59\penalty0
  (1):\penalty0 201--218, 2014.

\bibitem[Jain et~al.(2005)Jain, Vazirani, and Ye]{jain2005market}
Kamal Jain, Vijay~V Vazirani, and Yinyu Ye.
\newblock Market equilibria for homothetic, quasi-concave utilities and
  economies of scale in production.
\newblock In \emph{SODA}, volume~5, pages 63--71, 2005.

\bibitem[Jordan et~al.(2023)Jordan, Lin, and Zampetakis]{jordan2023first}
Michael~I Jordan, Tianyi Lin, and Manolis Zampetakis.
\newblock First-order algorithms for nonlinear generalized nash equilibrium
  problems.
\newblock \emph{Journal of Machine Learning Research}, 24\penalty0
  (38):\penalty0 1--46, 2023.

\bibitem[Kakade et~al.(2004)Kakade, Kearns, and Ortiz]{Kakade2004GraphicalE}
Sham~M. Kakade, Michael Kearns, and Luis~E. Ortiz.
\newblock Graphical economics.
\newblock In \emph{Annual Conference Computational Learning Theory}, 2004.

\bibitem[Krawczyk and Uryasev(2000)]{Krawczyk2000relax}
Jacek~B. Krawczyk and Stanislav Uryasev.
\newblock Relaxation algorithms to find nash equilibria with economic
  applications.
\newblock \emph{Environmental Modeling \& Assessment}, 5\penalty0 (1):\penalty0
  63--73, 2000.
\newblock \doi{10.1023/A:1019097208499}.
\newblock This revised version was published online in July 2006 with
  corrections to the Cover Date.

\bibitem[Mas-Colell et~al.(1995)Mas-Colell, Whinston, and Green]{mas-colell}
Andreu Mas-Colell, Michael~D. Whinston, and Jerry~R. Green.
\newblock \emph{{Microeconomic Theory}}.
\newblock Number 9780195102680 in OUP Catalogue. Oxford University Press, 1995.
\newblock ISBN ARRAY(0x4cf9c5c0).
\newblock URL \url{https://ideas.repec.org/b/oxp/obooks/9780195102680.html}.

\bibitem[Nabetani et~al.(2011)Nabetani, Tseng, and Fukushima]{nabetani2011vi}
Koichi Nabetani, Paul Tseng, and Masao Fukushima.
\newblock Parametrized variational inequality approaches to generalized nash
  equilibrium problems with shared constraints.
\newblock \emph{Comput. Optim. Appl.}, 48\penalty0 (3):\penalty0 423–452, apr
  2011.
\newblock ISSN 0926-6003.
\newblock \doi{10.1007/s10589-009-9256-3}.
\newblock URL \url{https://doi.org/10.1007/s10589-009-9256-3}.

\bibitem[Nash(1950)]{nash1950existence}
John~F. Nash.
\newblock Equilibrium points in <i>n</i>-person games.
\newblock \emph{Proceedings of the National Academy of Sciences}, 36\penalty0
  (1):\penalty0 48--49, 1950.
\newblock \doi{10.1073/pnas.36.1.48}.
\newblock URL \url{https://www.pnas.org/doi/abs/10.1073/pnas.36.1.48}.

\bibitem[Rosen(1965)]{rosen1965gne}
J.~B. Rosen.
\newblock Existence and uniqueness of equilibrium points for concave n-person
  games.
\newblock \emph{Econometrica}, 33\penalty0 (3):\penalty0 520--534, 1965.
\newblock ISSN 00129682, 14680262.
\newblock URL \url{http://www.jstor.org/stable/1911749}.

\bibitem[Ryu et~al.(2019)Ryu, Yuan, and Yin]{Ryu2019ODEAO}
Ernest~K. Ryu, K.~Yuan, and Wotao Yin.
\newblock Ode analysis of stochastic gradient methods with optimism and
  anchoring for minimax problems and gans.
\newblock \emph{ArXiv}, abs/1905.10899, 2019.

\bibitem[Schiro et~al.(2013)Schiro, Pang, and Shanbhag]{Schiro2013lemke}
Dane~A. Schiro, Jong-Shi Pang, and Uday~V. Shanbhag.
\newblock On the solution of affine generalized nash equilibrium problems with
  shared constraints by lemke’s method.
\newblock \emph{Mathematical Programming}, 142\penalty0 (1-2):\penalty0 1--46,
  2013.
\newblock \doi{10.1007/s10107-012-0558-3}.
\newblock This work was based on research partially supported by the National
  Science Foundation under grant CMMI-0969600 and the Department of Energy
  under grant DOE DE-SC0003879.

\bibitem[Solodov and Svaiter(1999)]{Solodov1999ExtraProximal}
M.~Solodov and Benar Svaiter.
\newblock A hybrid approximate extragradient-proximal point algorithm using the
  enlargement of a maximal monotone operator.
\newblock \emph{Set-Valued Analysis}, 7:\penalty0 323--345, 07 1999.
\newblock \doi{10.1023/A:1008777829180}.

\bibitem[Uryas'ev and Rubinstein(1994)]{uryas1994relax}
S.~Uryas'ev and R.Y. Rubinstein.
\newblock On relaxation algorithms in computation of noncooperative equilibria,
  1994.

\bibitem[von Heusinger et~al.(2012)von Heusinger, Kanzow, and
  Fukushima]{von2012newton}
Anna von Heusinger, Christian Kanzow, and Masao Fukushima.
\newblock Newton’s method for computing a normalized equilibrium in the
  generalized nash game through fixed point formulation.
\newblock \emph{Mathematical Programming}, 132\penalty0 (1):\penalty0 99--123,
  2012.

\bibitem[Walras(1896)]{walras}
Leon Walras.
\newblock \emph{Elements de l'economie politique pure, ou, Theorie de la
  richesse sociale}.
\newblock F. Rouge, 1896.

\end{thebibliography}

\appendix

\newpage
\phantom{ }
\newpage

\section{Preliminaries} \label{app:prelim}
\begin{theorem}  \label{thm:existence_GNE} \cite{facchinei2009generalized}
Consider a pseudo-game $\pgame \doteq (\numplayers, \actionspace,  \actions, \actionconstr, \utilp)$ and suppose that for all players $\player \in \players$:
\begin{enumerate}
    \item $\actionspace[\player]$ is a nonempty, convex, and compact set.
    \item $\actions[\player]$ is continuous, i.e., upper and lower hemicontinuous, and for all action profiles $\naction[\player] \in \actionspace[-\player]$, $\actions[\player](\naction[\player])$ is nonempty, closed, and concave.
    \item $\utilp[\player](\cdot, \naction[\player])$ is quasi-concave on $\actions[\player](\naction[\player])$.
\end{enumerate}
Then a GNE exists. 
\end{theorem}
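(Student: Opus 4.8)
The plan is to establish existence of a GNE by a Kakutani fixed-point argument applied to the joint best-response correspondence, exactly in the spirit of \citeauthor{arrow1954existence}'s original social-equilibrium theorem. For each player $\player \in \players$, I would define the best-response correspondence $\mathcal{B}_{\player}(\action) \doteq \argmax_{\action[\player] \in \actions[\player](\naction[\player])} \utilp[\player](\action[\player], \naction[\player])$, and let $\mathcal{B}(\action) \doteq \bigtimes_{\player \in \players} \mathcal{B}_{\player}(\action)$ be the joint best-response correspondence on $\actionspace = \bigtimes_{\player \in \players}\actionspace[\player]$. The first observation is that a fixed point $\action^* \in \mathcal{B}(\action^*)$ is precisely a GNE: membership in $\mathcal{B}(\action^*)$ means that for each $\player$ the $\player$-component of $\action^*$ maximizes $\utilp[\player](\cdot, \naction[\player][][][*])$ over $\actions[\player](\naction[\player][][][*])$, which simultaneously yields feasibility $\action^* \in \actions(\action^*)$ and the no-profitable-deviation condition. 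Hence it suffices to show that $\mathcal{B}$ admits a fixed point.

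\textbf{Verifying Kakutani's hypotheses.} Next I would check the hypotheses of Kakutani's theorem for $\mathcal{B}$. The domain $\actionspace$ is nonempty, convex, and compact by condition~1, as a finite product of such sets. For each fixed $\action$, the feasible set $\actions[\player](\naction[\player])$ is nonempty and closed by condition~2 and is contained in the compact set $\actionspace[\player]$ by condition~1, hence compact; since $\utilp[\player](\cdot, \naction[\player])$ is continuous, Weierstrass' theorem gives $\mathcal{B}_{\player}(\action) \neq \emptyset$. Quasi-concavity of $\utilp[\player](\cdot, \naction[\player])$ together with convexity of $\actions[\player](\naction[\player])$ (condition~2, reading its ``concave'' as convex-valued) makes each $\mathcal{B}_{\player}(\action)$ convex. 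The crux is upper hemicontinuity, for which I would invoke Berge's maximum theorem: when the constraint correspondence $\actions[\player]$ is continuous (both upper and lower hemicontinuous) and compact-valued, and the objective $\utilp[\player]$ is jointly continuous, the value function is continuous and the argmax correspondence $\mathcal{B}_{\player}$ is upper hemicontinuous with nonempty compact values. A finite product of upper-hemicontinuous, nonempty-, compact-, convex-valued correspondences is again of this type, so $\mathcal{B}$ meets every hypothesis of Kakutani's theorem and has a fixed point $\action^*$, which by the first step is a GNE.

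\textbf{Main obstacle.} The substantive step is the upper hemicontinuity of $\mathcal{B}$, i.e.\ the appeal to Berge's maximum theorem; the remaining requirements (nonemptiness, convexity of values, compactness of the domain) are routine consequences of conditions~1--3. The delicate ingredient is the \emph{lower} hemicontinuity of $\actions[\player]$ demanded by condition~2: upper hemicontinuity alone would guarantee only that the value function is upper semicontinuous, whereas lower hemicontinuity is what prevents the feasible set, and hence the attained optimum, from collapsing under limits, giving a continuous value function and an upper-hemicontinuous argmax. This is precisely where the continuity assumption on the action correspondence is genuinely used. Since the statement is attributed to \citet{facchinei2009generalized}, I would either cite their argument directly or reproduce the Kakutani/Berge skeleton sketched above.
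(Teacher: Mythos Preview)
Your proposal is correct and follows the classical Arrow--Debreu/Kakutani route. Note, however, that the paper does not actually prove this theorem: it is stated in the appendix as a cited preliminary result from \citet{facchinei2009generalized}, with no proof given. Your Kakutani-plus-Berge argument is exactly the standard proof of this existence result, and your own closing remark---that one could simply cite \citeauthor{facchinei2009generalized}---is precisely what the paper does.
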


In this paper, we focus on pseudo-games with jointly convex constraints, for which a much more complete theory exists than general pseudo-games.
We first introduce the \mydef{variational inequality problem (VI)} $\vi[\actions][\objs]$, which consists of finding a vector $\actionstar\in \actions$ such that $(\otheraction-\actionstar)^T\objs(\actionstar)\geq 0$, for all $\otheraction\in \actions$. 
A valuable property of jointly convex pseudo-games is that we can reduce the problem of finding a GNE to solving a VI problem.

\begin{theorem}
\label{thm:jointly_convex_ve_gne}
\cite{facchinei2009generalized}
Let $\pgame \doteq (\numplayers, \actionspace,  \actions, \constr, \utilp)$ be a pseudo-game jointly convex such that for all players $\player \in \players$, the utility functions $\utilp[\player]$ are continuously differentiable. Let $\actions=\{\action\in \actionspace\mid \constr(\action)\geq \zeros\}$ be the joint action correspondence set, and let $\objs(\action)\coloneqq (\grad[{\action[\player]}]\utilp[\player](\action))_{\player\in \players}$. Then every solution of the variational inequality $\vi[\actions][\objs]$ is also a GNE of the pseudo-game $\pgame$. 
\end{theorem}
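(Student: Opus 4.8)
The plan is to show that any solution $\actionstar$ of $\vi[\actions][\objs]$ satisfies, for every player $\player \in \players$, the first-order optimality condition characterizing $\actionstar[\player]$ as a best response to $\naction[\player][][][*]$, and then to use concavity to upgrade this condition to global optimality, thereby exhibiting $\actionstar$ as a GNE (indeed a variational equilibrium).

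First I would fix a player $\player \in \players$ and restrict attention to unilateral deviations $\otheraction = (\action[\player], \naction[\player][][][*])$, in which player $\player$ moves to some $\action[\player]$ while the others remain at their solution blocks. The crucial structural observation is that, since the pseudo-game is jointly convex, the shared feasible set $\actions = \{\action \in \actionspace \mid \constr(\action) \geq \zeros\}$ is convex, and the action set available to player $\player$ given $\naction[\player][][][*]$ is exactly its slice $\actions[\player](\naction[\player][][][*]) = \{\action[\player] \in \actionspace[\player] \mid (\action[\player], \naction[\player][][][*]) \in \actions\}$. Hence every unilateral deviation with $\action[\player] \in \actions[\player](\naction[\player][][][*])$ lies in $\actions$, and so is an admissible test point in the VI.

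Next I would substitute these deviations into the variational inequality. Because $\otheraction - \actionstar$ is zero in every block except the $\player$-th, the inner product with $\objs(\actionstar)$ collapses to a single-player term, leaving the scalar inequality $(\action[\player] - \actionstar[\player])^T \grad[{\action[\player]}] \utilp[\player](\actionstar) \leq 0$ for all $\action[\player] \in \actions[\player](\naction[\player][][][*])$ (the inequality points this way because, in the maximization setting, $\objs$ stacks the players' payoff gradients with the sign appropriate for ascent directions). This is precisely the first-order stationarity condition for $\actionstar[\player]$ to maximize $\utilp[\player](\cdot, \naction[\player][][][*])$ over the convex set $\actions[\player](\naction[\player][][][*])$.

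Finally, since the pseudo-game is concave, $\utilp[\player](\cdot, \naction[\player][][][*])$ is concave and $\actions[\player](\naction[\player][][][*])$ is convex, so this necessary first-order condition is also sufficient for global optimality, giving $\actionstar[\player] \in \argmax_{\action[\player] \in \actions[\player](\naction[\player][][][*])} \utilp[\player](\action[\player], \naction[\player][][][*])$. As $\player$ was arbitrary, $\actionstar$ is a GNE. I expect the main obstacle to be the decoupling step: the VI is a single inequality coupling all players simultaneously, and the argument relies entirely on the jointly convex structure to guarantee that unilateral deviations stay feasible in the shared set $\actions$, which is what lets one aggregated inequality certify each player's individual optimality; the concavity assumption then does the separate, comparatively routine job of converting this necessary condition into a sufficient one.
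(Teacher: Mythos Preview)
The paper does not supply its own proof of this statement; it is quoted from the cited reference as a preliminary result, without argument. Your proof is the standard one and is correct: joint convexity guarantees that every unilateral deviation $(\action[\player],\naction[\player][][][*])$ with $\action[\player]\in\actions[\player](\naction[\player][][][*])$ lies in the shared set $\actions$, so the single VI inequality, tested against such deviations, collapses block-by-block to each player's first-order stationarity condition on the convex slice, and concavity of $\utilp[\player](\cdot,\naction[\player][][][*])$ then upgrades that necessary condition to sufficiency for a best response.

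One small point worth flagging: as literally written in the paper, the VI reads $(\otheraction-\actionstar)^T\objs(\actionstar)\geq 0$ with $\objs$ the stacked \emph{utility} gradients, which taken at face value yields $(\action[\player]-\actionstar[\player])^T\grad[{\action[\player]}]\utilp[\player](\actionstar)\geq 0$ rather than the $\leq 0$ you need for maximization. Your parenthetical about ``the sign appropriate for ascent directions'' is effectively absorbing a sign-convention inconsistency in the paper's own statement (the paper's monotonicity footnote works with $-\grad$, which is the convention that matches your derived inequality). With the intended sign, your argument is complete.
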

\ssadie{}{
\begin{remark}
Note that we can also relieve the constraints such that each utility functions $\utilp[\buyer]$ is continuous rather than continuously differentiable. When utility functions are continuous, the set of solutions of the generalized variational inequality $\vi[\actions][\objs]$ corresponds to the set of  GNE of the pseudo-game $\pgame$. 
\end{remark}
}

We call a GNE of a jointly convex pseudo-game that is also a solution to $\vi[\actions][\objs]$ a \mydef{variational equilibrium (VE)}.
% \amy{already defined above.}
Note that the set of VE is a subset of the set of GNE.
The converse, however, is not true, unless $\actionspace \subseteq \actions$.
Further, when $\pgame$ is a game, GNE and VE coincide; we refer to this set simply as NE.

\begin{theorem}\label{thm:jointly_convex_kkt}
\cite{facchinei2009generalized}
Let $\pgame \doteq (\numplayers, \actionspace,  \actions, \constr, \utilp)$ be a pseudo-game with jointly convex constraints such that $\utilp[\player]$, $\constr$ are continuously differentiable, then the following statement holds:
\begin{enumerate}
    \item Let $\actionstar$ be a solution of the $\vi[\actions][\objs]$ such that the KKT conditions hold with some multiplier $\pricelangstar$. Then $\actionstar$ is a GNE of the pseudo-game, and the corresponding KKT conditions are satisfies with $\pricelangstar[1]=\hdots=\pricelangstar[\numplayers]=\pricelangstar$.
    \item Conversely, assume that $\actionstar$ is a GNE of the pseudo-game $\pgame$ such that the KKT conditions are satisfies with $\pricelangstar[1]=\hdots=\pricelangstar[\numplayers]=\pricelangstar$. Then $(\actionstar, \pricelangstar)$ is a KKT point of $\vi[\actions][\objs]$, and $\actionstar$ itself is a solution of $\vi[\actions][\objs]$. 
\end{enumerate}
\end{theorem}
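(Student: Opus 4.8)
The plan is to prove both directions of the equivalence by passing through the Karush--Kuhn--Tucker (KKT) system, exploiting the fact that, under joint convexity, each player's best-response problem and the aggregate variational inequality $\vi[\actions][\objs]$ are optimization problems over the \emph{same} convex feasible set $\actions = \{\action \in \actionspace \mid \constr(\action) \geq \zeros\}$. First I would record the two relevant KKT systems. Since $\actions$ is convex (the constraints $\constr$ are concave) and $\constr$, $\utilp[\player]$ are continuously differentiable, a suitable constraint qualification (e.g.\@ Slater's condition) lets me write KKT conditions in both cases. For the VI, a pair $(\actionstar, \pricelangstar)$ is a KKT point exactly when the stacked stationarity condition $\objs(\actionstar) + (\grad[\action]\constr(\actionstar))^T \pricelangstar = \zeros$ holds, together with primal feasibility $\constr(\actionstar) \geq \zeros$, dual feasibility $\pricelangstar \geq \zeros$, and complementary slackness. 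For a GNE, each player $\player$ solves a concave program in $\action[\player]$ over the shared constraint, so its KKT system is the $\player$-th block $\grad[{\action[\player]}]\utilp[\player](\actionstar) + (\grad[{\action[\player]}]\constr(\actionstar))^T \pricelangstar[\player] = \zeros$, now with a \emph{player-specific} multiplier $\pricelangstar[\player]$, plus the same feasibility and slackness conditions.

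For direction (1), I would start from a VI solution $\actionstar$ whose KKT holds with multiplier $\pricelangstar$ and read the stacked stationarity condition block by block. Because $\objs(\actionstar) = (\grad[{\action[\player]}]\utilp[\player](\actionstar))_{\player \in \players}$, the $\player$-th block of the VI stationarity is precisely player $\player$'s stationarity condition under the choice $\pricelangstar[\player] = \pricelangstar$, while dual feasibility and complementary slackness are inherited verbatim. The crucial step is then sufficiency: since $\utilp[\player]$ is concave in $\action[\player]$ and $\actions$ is convex, these KKT conditions are not merely necessary but \emph{sufficient} for $\actionstar[\player]$ to globally maximize $\utilp[\player](\cdot, \naction[\player][][][*])$ over $\actions[\player](\naction[\player][][][*])$. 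Hence $\actionstar$ is a GNE, and by construction all per-player multipliers coincide with $\pricelangstar$.

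For direction (2), I would reverse this reasoning: given a GNE $\actionstar$ whose per-player KKT holds with a \emph{common} multiplier $\pricelangstar[1] = \hdots = \pricelangstar[\numplayers] = \pricelangstar$, I stack the $\numplayers$ per-player stationarity conditions into the single vector equation $\objs(\actionstar) + (\grad[\action]\constr(\actionstar))^T \pricelangstar = \zeros$, which, together with the shared feasibility and slackness, is exactly the KKT system of $\vi[\actions][\objs]$; thus $(\actionstar, \pricelangstar)$ is a KKT point of the VI. To upgrade ``KKT point'' to ``solution,'' I would again invoke convexity of $\actions$, under which the VI's KKT conditions are sufficient, so the stationary point genuinely satisfies $(\otheraction - \actionstar)^T \objs(\actionstar) \geq 0$ for all $\otheraction \in \actions$.

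The main obstacle I anticipate is not the algebra of stacking and unstacking blocks but the conceptual asymmetry between the two equilibrium notions. An arbitrary GNE admits \emph{distinct} multipliers $\pricelangstar[\player]$ across players, whereas the VI forces a \emph{single} common $\pricelangstar$; it is exactly this normalization that the variational-equilibrium refinement enforces. I would therefore take care to emphasize that the theorem does not identify the VI solutions with all GNE, but precisely with those GNE whose multipliers can be equalized, and I would lean on joint convexity both to obtain KKT necessity (via the constraint qualification) and, more importantly, to obtain KKT sufficiency, which is what makes the per-player and aggregate stationarity conditions interchangeable in both directions.
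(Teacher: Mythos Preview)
The paper does not supply its own proof of this theorem; it is quoted as a background result from \citet{facchinei2009generalized} and used as a black box (notably in the proof of \Cref{thm:pseudo_game_equ}). So there is nothing in the paper to compare your argument against.

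That said, your proposal is the standard argument and is correct in outline. The crux is exactly what you identify: under joint convexity the feasible set $\actions$ is one and the same for every player's best-response problem and for the aggregate VI, so the VI's KKT system is literally the vertical stacking of the $\numplayers$ per-player KKT systems \emph{with a single shared multiplier}. Concavity of each $\utilp[\player]$ in $\action[\player]$ and convexity of $\actions$ then give KKT sufficiency in both directions, which is what lets you pass from stationarity to optimality. Your emphasis on the asymmetry---that arbitrary GNE carry player-specific multipliers while VI solutions force a common one---is precisely the content of the result, and it is worth keeping that sentence in any write-up. One small point of care: be explicit about whatever constraint qualification you invoke (Slater, linearity of the constraints, or Abadie), since KKT \emph{necessity} at a GNE is where it is actually needed; sufficiency comes for free from concavity and does not require it.
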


\section{Algorithms}\label{app:algo}

\begin{algorithm}[H]\label{alg:nested_ne_gd}
\caption{Nested-NE T\^{a}tonnement For Influence Fisher Markets}
\textbf{Inputs:} $\graph, \util, \budget, \learnrate[\allocation], \learnrate[\price], \allocation^{(0)}, \price^{(0)}$\\
\textbf{Outputs:} $\allocationstar, \pricestar$
\begin{algorithmic}[1]
\For{$\iterouter = 1, \hdots, \iters[\price]$}
    \For{$\iterinner=1, \hdots, \iters[\allocation]$}
        \State For all $\buyer\in \buyers$,
        $\allocation[\buyer]^{(\iterinner+1/2)} = \project[{\allocation[\buyer] \in \Rp^{\numgoods}: \allocation[\buyer] \cdot \price^{(\iterouter)} \leq \budget[\buyer]}]
        \left(\allocation[\buyer]^{(\iterinner)} + \learnrate[\allocation] \grad[{\allocation[\buyer]}]
        \util[\buyer] (\allocation[\buyer]^{(\iterinner)}, \allocation[\nei]^{(\iterinner)}) \right)$
        \State For all $\buyer\in \buyers$,
        $\allocation[\buyer]^{(\iterinner+1)} = \project[{\allocation[\buyer] \in \Rp^{\numgoods}: \allocation[\buyer] \cdot \price^{(\iterouter)} \leq \budget[\buyer]}]
        \left(\allocation[\buyer]^{(\iterinner)} + \learnrate[\allocation] \grad[{\allocation[\buyer]}]
        \util[\buyer] (\allocation[\buyer]^{(\iterinner+1/2)}, \allocation[\nei]^{(\iterinner+1/2)}) \right)$
    \EndFor
    \State Set $\allocation^{(\iter)}= \allocation^{(\iters[\allocation])}$
    \State Set $\price^{(\iterouter)} = \project[\Rp^{\numgoods}] \left(\price^{(\iterouter-1)} - \learnrate[\price] (1-\sum_{\buyer\in \buyers} \allocation[\buyer]^{(\iter)} \right)$
\EndFor
\State \Return $\allocation^{(\iters)}, \price^{(\iters)}$
\end{algorithmic}
\end{algorithm}

\section{Omitted Proof}\label{app:proof}

\thmExistence*
\begin{proof}
We will define a pseudo-game $\pgame$ whose generalized Nash equilibrium points will correspond to a competitive equilibrium in an influence Fisher market $(\graph, \util, \budget)$. For convenience, we denote that $\sum_{\buyer\in \buyers}\budget[\buyer]=\budget[ ]$.

First, there will be $\numbuyers+1$ players--the $\numbuyers$ buyers and a fictitious player who chooses prices, which is termed the price player. 
Each buyer will choose an allocation $\allocation[\buyer]\in \actionspace[\buyer]=\Rp^{\numgoods}$ and the price player will choose a price $\price\in \actionspace[\priceplayer]=\Rp^{\numgoods}$. 
For convenience, let $\allocation[-\buyer]$ denotes the allocations of other buyers except buyer $\buyer$, and let $\allocation$ denotes the allocations of all buyers. 
Then, for all buyer $\buyer\in \buyers$, the feasible action set given actions of other player is $\actions[\buyer](\allocation[-\buyer], \price) = \{ \allocation[\buyer] \in \actionspace[\buyer] \mid \actionconstr[\buyer](\allocation[\buyer], \allocation[-\buyer], \price)=\budget[\buyer]-\allocation[\buyer]\cdot \price \geq \zeros\}$, and for the price player, the feasible set is a fixed set $\actions[\priceplayer]=\{\price\in \actionspace[\priceplayer]\mid \ones^T\price= \budget[ ]\}$.
Finally, each buyer $\buyer$ is maximizing her utility defined by $\utilp[\buyer](\allocation,\price)=\util[\buyer](\allocation[\buyer], \allocation[\nei])$, while the price player maximizes her utility $\utilp[\priceplayer](\allocation, \price)=\price\cdot \excessd$ where $\excessd=\left(\sum_{\buyer\in \buyers}\allocation[\buyer]\right) - \ones$. 

By \Cref{thm:existence_GNE}, we know that there exists an GNE $(\allocationstar,\pricestar)$ in $\pgame$. Next, we want to show that the GNE $(\allocationstar, \pricestar)$ is a competitive equilibrium in influence Fisher market $(\graph, \util, \budget)$. 

First, by the definition of GNE, we know that $\forall\buyer\in \buyers$, $\utilp[\buyer](\allocationstar, \pricestar)\geq \utilp[\buyer](\allocation[\buyer], \allocationstar[-\buyer], \pricestar)$ for all $\allocation[\buyer]\in \actions[\buyer](\allocationstar[-\buyer], \pricestar)$. That is, for any $\allocation[\buyer]$ s.t. $\allocation[\buyer]\cdot\pricestar\leq\budget[\buyer]$, $\util[\buyer](\allocationstar[\buyer], \allocationstar[\nei])\geq \util[\buyer](\allocation[\buyer], \allocationstar[\nei])$ by our construction of $\pgame$. Thus, $(\allocationstar, \pricestar)$ is utility maximizing. 

Next, we will show that the market clears by showing that both Walras' law and the feasibility condition hold. 

To do so, we first show that $\forall \buyer\in \buyers$, $\allocationstar[\buyer]\cdot \pricestar=\budget[\buyer]$. 
For any $\buyer\in \buyers$, there exists $\allocation[\buyer]\in \R^{\numgoods}$ such that $\util[\buyer](\allocation[\buyer], \allocationstar[\nei])>\util[\buyer](\allocationstar[\buyer], \allocationstar[\nei])$, since $\util[\buyer](\cdot,\allocationstar[\nei])$ is quasi-concave, for any $0<t<1$, $\util[\buyer](t\allocationstar[\buyer]+(1-t)\allocation[\buyer], \allocationstar[\nei])>\util[\buyer](\allocationstar[\buyer], \allocationstar[\nei])$. If $\allocationstar[\buyer]\cdot\pricestar<\budget[\buyer]$, we can pick $t$ small enough such that $\allocationp[\buyer]=t\allocationstar[\buyer]+(1-t)\allocation[\buyer]$ satisfies $\allocationp[\buyer]\cdot\pricestar\leq \budget[\buyer]$ and $\util[\buyer](\allocationp[\buyer], \allocationstar[\nei])>\util[\buyer](\allocationstar[\buyer], \allocationstar[\nei])$, this contradicts the utility maximizing condition which we have proved. Then, summing across the buyers, we get $\sum_{\buyer\in \buyers}\allocationstar[\buyer]\cdot \pricestar=\budget[ ]
\implies \pricestar\cdot (\sum_{\buyer\in \buyers}\allocationstar[\buyer]) - \pricestar\cdot \ones=0$ $\implies \pricestar\cdot (\sum_{\buyer\in \buyers}\allocationstar[\buyer]-\ones)=\pricestar\cdot \excessdstar=0$. Therefore, $(\allocationstar, \pricestar)$ satisfies the Walras' law. 

Finally, let $e_{\good}\in \R^{\numgoods}$ be the vector in which every component is 0, except the $\good$th, which is 1. It is obvious that $e_{\good}\in \actions[\priceplayer]=\{\price\in \actionspace[\priceplayer]\mid \ones^T\price= 1\}$. Thus, since the price player's utility is maximized, we know that $\utilp[\priceplayer](\allocationstar, \pricestar)\geq \utilp[\priceplayer](\allocationstar, e_{\good})$ for all $\good\in \goods$. That is, $0=\pricestar\cdot \excessdstar \geq e_{\good}\cdot \excessdstar = \excessdstar[\good]\implies \forall \good\in \goods, \sum_{\buyer\in \buyers}\allocationstar[\buyer][\good]\leq 1$.
\end{proof}

\thmPseudoPrimal*
\begin{proof}
First, by \Cref{thm:jointly_convex_ve_gne}, there exists a VE $\allocationstar$ of $\pgame$, and we want to show that $\allocationstar$ is a competitive equilibirum allocation of the influence Fisher market  $(\graph, \util, \budget)$. 

For each buyer $\buyer\in \buyers$, the Lagrangian is given by:
\begin{align*}
    \lang[\buyer](\allocation[\buyer], \allocationstar[-\buyer], \pricelang[\buyer], \bmu[\buyer])
    &= -\budget[\buyer]\log(\util[\buyer](\allocation[\buyer], \allocationstar[\nei]))\\
    &+ \sum_{\good\in \goods} \pricelang[\buyer][\good] (\allocation[\buyer][\good] + \sum_{k\neq \buyer} \allocationstar[k][\good]-1) 
    + \sum_{\good\in \goods} \bmu[\buyer][\good] (-\allocation[\buyer][\good])
\end{align*}
and $(\allocationstar[\buyer], \pricelangstar[\buyer])$ satisfies the KKT conditions of buyer $\buyer$ iff
\begin{itemize}
    \item (Stationarity)$\grad[{\allocation[\buyer]}] \lang[\buyer](\allocationstar[\buyer], \allocationstar[-\buyer], \pricelangstar[\buyer], \bmustar[\buyer])=\zeros$.
    \item (Complementary Slackness) $\forall \good\in \goods,\; \pricelangstar[\buyer][\good](\sum_{\buyer\in \buyers} \allocationstar[\buyer][\good]-1)=0$, $\bmustar[\buyer][\good](-\allocationstar[\buyer][\good])=0$.
    \item (Primal Feasibility) $\forall \good\in \goods,\;
    \sum_{\buyer\in \buyers} \allocationstar[\buyer][\good]-1\leq 0$, $-\allocationstar[\buyer][\good]\leq 0$.
    \item (Dual Feasibility) $\forall \good\in \goods,\;\pricelangstar[\buyer][\good]\geq 0$, $\bmustar[\buyer][\good]\geq 0$.
\end{itemize}
Then, since $\allocationstar$ is a VE of the jointly-convex pseudo-game $\pgame$, by \Cref{thm:jointly_convex_kkt}, there exists a common optimal multiplier $\pricelangstar[1]=\hdots=\pricelangstar[\numbuyers]=\pricestar$ such that for each buyer $\buyer$, $(\allocationstar[\buyer], \pricestar)$ satisfies her KKT conditions. We will show that $(\allocationstar, \pricestar)$ is a competitive equilibrium of the influence Fisher market $(\graph, \util, \budget)$. 

From the complementary slackness and primal feasibility conditions, we know that 
\begin{align*}
    \sum_{\good\in \goods} \pricestar[\good](\sum_{\buyer\in \buyers}\allocationstar[\buyer][\good]-1)=0\\
    \forall \good\in \goods,\; \sum_{\buyer\in \buyers} \allocationstar[\buyer][\good]\leq 1
\end{align*}
Thus, $(\allocationstar, \pricestar)$ satisfies the market clearance. 

Moreover, from the stationarity conditions, we get for any $\buyer\in \buyers$:
\begin{align*}
    \frac{\partial \lang[\buyer]}{\partial \allocation[\buyer][\good]}
    &= \frac{-\budget[\buyer]}{\util[\buyer](\allocationstar[\buyer], \allocationstar[\nei])}\left[
    \frac{\partial \util[\buyer]}{\partial \allocation[\buyer][\good]} 
    \right]_{\allocation[\buyer]=\allocationstar[\buyer]}
    + \pricestar[\good] - \bmustar[\buyer][\good]
    = 0\\
    \pricestar[\good] 
    &=  \frac{\budget[\buyer]}{\util[\buyer](\allocationstar[\buyer], \allocationstar[\nei])}\left[
    \frac{\partial \util[\buyer]}{\partial \allocation[\buyer][\good]} 
    \right]_{\allocation[\buyer]=\allocationstar[\buyer]} - \bmustar[\buyer][\good]\\
    \pricestar[\good]\allocationstar[\buyer][\good] 
    &=  \frac{\budget[\buyer]}{\util[\buyer](\allocationstar[\buyer], \allocationstar[\nei])}\left[
    \frac{\partial \util[\buyer]}{\partial \allocation[\buyer][\good]} 
    \right]_{\allocation[\buyer]=\allocationstar[\buyer]} \allocationstar[\buyer][\good] - \bmustar[\buyer][\good] \allocationstar[\buyer][\good]
\\
    \sum_{\good\in \goods}\pricestar[\good]\allocationstar[\buyer][\good] 
    &=  \frac{\budget[\buyer]}{\util[\buyer](\allocationstar[\buyer], \allocationstar[\nei])}
    \sum_{\good\in \goods}\left[
    \frac{\partial \util[\buyer]}{\partial \allocation[\buyer][\good]} 
    \right]_{\allocation[\buyer]=\allocationstar[\buyer]} \allocationstar[\buyer][\good]\\
    \sum_{\good\in \goods}\pricestar[\good]\allocationstar[\buyer][\good]
    &=   \frac{\budget[\buyer]}{\util[\buyer](\allocationstar[\buyer], \allocationstar[\nei])} \util[\buyer](\allocationstar[\buyer], \allocationstar[\nei])
\\
    \sum_{\good\in \goods}\pricestar[\good]\allocationstar[\buyer][\good]
    &= \budget[\buyer]
\end{align*}
where the third line is from multiplying both sides by $\allocation[\buyer][\good]$, and the fifth line is from the Euler's Theorem for homogeneous functions. 
Note that the left hand side of this expression is exactly the spending of buyer $\buyer$ at $(\allocationstar, \pricestar)$. This results implies that consumers are not spending more than their budgets. 

Finally, we want to show that $(\allocationstar, \pricestar)$ is utility maximizing. From the stationarity conditions again, we get for any $\buyer\in \buyers$:
\begin{align}
     \frac{\partial \lang[\buyer]}{\partial \allocation[\buyer][\good]}
    &= \frac{-\budget[\buyer]}{\util[\buyer](\allocationstar[\buyer], \allocationstar[\nei])}\left[
    \frac{\partial \util[\buyer]}{\partial \allocation[\buyer][\good]} 
    \right]_{\allocation[\buyer]=\allocationstar[\buyer]}
    + \pricestar[\good] - \bmustar[\buyer][\good]
    = 0\\
\end{align}
If $\allocationstar[\buyer][\good]>0$, by complementary slackness condition, $\bmustar[\buyer][\good]=0$, which gives us
\begin{align*}
    \pricestar[\good] 
    &=  \frac{\budget[\buyer]}{\util[\buyer](\allocationstar[\buyer], \allocationstar[\nei])}
    \left[
    \frac{\partial \util[\buyer]}{\partial \allocation[\buyer][\good]} 
    \right]_{\allocation[\buyer]=\allocationstar[\buyer]}\\
    \frac{\util[\buyer](\allocationstar[\buyer], \allocationstar[\nei]) }{\budget[\buyer]}
    &= \frac{ \left[
    \frac{\partial \util[\buyer]}{\partial \allocation[\buyer][\good]} 
    \right]_{\allocation[\buyer]=\allocationstar[\buyer]}}{\pricestar[\good]}
\end{align*}
The last condition is exactly the equimarginal principle \citep{mas-colell}, hence $(\allocationstar, \pricestar)$ is utility maximizing.

Therefore, any VE $\allocationstar$ of $\pgame$ constitutes an equilibrium allocation of $(\graph, \util, \budget)$, and the optimal Lagrangian multiplier that corresponds to the joint constraint are the corresponding equilibrium prices. 
\end{proof}

\thmPseudoConvergence*
\begin{proof}
Consider the variational inequality problem $\vi(\actions, \objs)$ for the jointly convex pseudo-game $\pgame$ defined in \Cref{thm:pseudo_game_equ} with $\actions=\{\allocation\in \R^{\numbuyers\times \numgoods}\mid \ones - \sum_{\buyer\in \buyers} \allocation[\buyer]\geq \zeros\}$ and $\objs(\allocation)\coloneqq (\grad[{\allocation[\buyer]}]\utilp[\buyer](\allocation))_{\player\in \players}$. Since each $\util[\buyer]$ is jointly-concave in every buyer's allocation $\allocation[\buyer]$, $\utilp[\buyer]$ defined by $\utilp[\buyer](\allocation)=\budget[\buyer]\log(\util[\buyer](\allocation[\buyer], \allocation[\nei]))$ is jointly-concave in every buyer's allocation by the composition of scalar functions as $\log$ is concave. Thus, the operator $\objs$ is monotone. 
Moreover, since each $\util[\buyer]$ is twice-differentiable, the operator $\objs$ is $\lipcont[\objs]$-Lipschitz where $\lipcont[\objs]=\max_{\allocation\in \Rp^{\numbuyers\cdot \numgoods}: \allocation\in \actions}\|\grad[\allocation]\objs(\allocation)\|$. Note that though $\objs$ is not differentiable at $\allocation=\zeros$ as $\util[\buyer](\zeros)=0$ for all $\buyer$, we can remedy that by shifting up all $\util[\buyer]$ by a small constant $\varepsilon$ albeit losing some accuracy. 

Then, given this monotone, Lipschitz variational inequality, 
Extragradient Method (EG) can converge  in last-iterate at rate of $O(\nicefrac{1}{T})$ \cite{gorbunov2022extragradient}. 
\end{proof}

\begin{lemma}\label{lemma:opt_lambda}
The optimization problem 
\begin{align}
    \max_{\allocation[\buyer] \in \Rp^{\numgoods}: \allocation[\buyer]\cdot \price \leq \budget[\buyer]}  \budget[\buyer] \log(\util[\buyer](\allocation[\buyer], \allocation[\nei]))
\end{align}
is equivalent to the optimization problem 
\begin{align}
    \max_{\allocation[\buyer] \in \Rp^{\numgoods}}  \budget[\buyer] \log(\util[\buyer](\allocation[\buyer], \allocation[\nei])) + \budget[\buyer] - \allocation[\buyer]\cdot \price
\end{align}
\end{lemma}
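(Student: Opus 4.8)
The plan is to exploit the degree-one homogeneity of $\util[\buyer]$ in $\allocation[\buyer]$ (\Cref{assumption:pseudo_assum}) to show that the two programs have the same optimizers and the same optimal value. The crux is that in both programs the budget is exhausted at any optimum, i.e.\ $\allocation[\buyer]\cdot\price = \budget[\buyer]$; once this is established, the extra terms $\budget[\buyer] - \allocation[\buyer]\cdot\price$ in the second objective vanish and the two objectives coincide there. So I would first pin down the binding-budget property in each program, then transfer optimizers between them.

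First I would show the budget binds in the penalized (second) program. Given an optimizer $\allocation[\buyer]$ with $\allocation[\buyer]\cdot\price > 0$, consider the ray $\lambda\allocation[\buyer]$ for $\lambda > 0$. By homogeneity the penalized objective along this ray equals $\budget[\buyer]\log(\util[\buyer](\allocation[\buyer], \allocation[\nei])) + \budget[\buyer]\log\lambda + \budget[\buyer] - \lambda\,(\allocation[\buyer]\cdot\price)$, whose derivative in $\lambda$ is $\budget[\buyer]/\lambda - \allocation[\buyer]\cdot\price$; since $\lambda=1$ must be optimal, this forces $\allocation[\buyer]\cdot\price = \budget[\buyer]$. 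Hence every penalized optimizer is feasible for the constrained (first) program and the two objectives agree at it. For the constrained program, degree-one homogeneity (which implies no saturation) gives that scaling up any strictly-cheaper allocation strictly increases utility, so its budget binds too.

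Next I would link the two optima. For ``constrained $\Rightarrow$ penalized'', take $\allocation[\buyer]^{*}$ optimal for the first program (so $\allocation[\buyer]^{*}\cdot\price = \budget[\buyer]$) and an arbitrary $\allocation[\buyer]$ with $t \coloneqq \allocation[\buyer]\cdot\price > 0$. Rescaling to $\frac{\budget[\buyer]}{t}\allocation[\buyer]$ makes it feasible for the first program, and optimality of $\allocation[\buyer]^{*}$ together with homogeneity yields $\budget[\buyer]\log(\util[\buyer](\allocation[\buyer]^{*}, \allocation[\nei])) \geq \budget[\buyer]\log(\util[\buyer](\allocation[\buyer], \allocation[\nei])) + \budget[\buyer]\log(\budget[\buyer]/t)$. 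It then remains to verify the elementary inequality $\budget[\buyer]\log(\budget[\buyer]/t) \geq \budget[\buyer] - t$, which follows from $\log y \leq y-1$ applied to $y = t/\budget[\buyer]$; this reproduces exactly the penalized-objective bound at $\allocation[\buyer]$. The reverse direction ``penalized $\Rightarrow$ constrained'' is then immediate: any $\allocation[\buyer]$ feasible for the first program satisfies $\budget[\buyer] - \allocation[\buyer]\cdot\price \geq 0$, so if it had strictly higher utility than a penalized optimizer it would give a strictly larger penalized value, a contradiction.

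The only delicate point, and the main obstacle, is the degenerate case $\allocation[\buyer]\cdot\price = 0$ with $\allocation[\buyer] \neq \zeros$, i.e.\ when the support of $\allocation[\buyer]$ lies entirely in zero-price goods. There scaling $\allocation[\buyer]$ costs nothing while driving $\util[\buyer]$ to infinity by homogeneity, so both programs are unbounded through the \emph{same} scaling map, and the equivalence still holds with both optimal values equal to $+\infty$. I would dispatch this by observing that the scaling argument makes one program unbounded if and only if the other is, and that along any direction of strictly positive cost the penalized objective tends to $-\infty$ (linear cost beats logarithmic utility growth), so no extra unbounded directions are introduced. Everything else reduces to the homogeneity rescaling and the single log inequality above.
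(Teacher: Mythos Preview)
Your argument is correct but proceeds along a genuinely different route from the paper's. The paper works via the Lagrangian: it writes $\lang(\allocation[\buyer],\lambda,\bmu)=\budget[\buyer]\log(\util[\buyer](\allocation[\buyer],\allocation[\nei]))+\lambda(\budget[\buyer]-\allocation[\buyer]\cdot\price)+\bmu^{T}\allocation[\buyer]$, takes the KKT stationarity condition, multiplies through by $\allocationstar[\buyer][\good]$ and sums over $\good$, and then invokes Euler's theorem for homogeneous functions to collapse the sum to $\budget[\buyer]-\lambda^{*}\budget[\buyer]=0$, forcing $\lambda^{*}=1$. Substituting $\lambda^{*}=1$ back into the Lagrangian (restricted to $\Rp^{\numgoods}$) yields exactly the penalized objective, and the equality of optimal values follows from strong duality. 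You instead stay entirely on the primal side: you use the homogeneity scaling $\allocation[\buyer]\mapsto\lambda\allocation[\buyer]$ directly to establish that the budget binds at any optimum of either program, and then compare the two programs using a rescaling argument combined with the scalar inequality $\log y\le y-1$ (equivalently $\budget[\buyer]\log(\budget[\buyer]/t)\ge\budget[\buyer]-t$). Your treatment of the zero-price degenerate case is also more explicit than the paper's. What the paper's route buys is brevity and a tight fit with the surrounding duality machinery (the same Euler-theorem step reappears in \Cref{thm:pseudo_game_equ}); what your route buys is an elementary, self-contained argument that avoids differentiability and KKT regularity altogether, at the cost of a slightly longer case analysis.
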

%%%%%% Don't delete!
\begin{proof}
The Lagrangian associated with $\max_{\allocation[\buyer] \in \Rp^{\numgoods}: \allocation[\buyer]\cdot \price \leq \budget[\buyer]}  \budget[\buyer] \log(\util[\buyer](\allocation[\buyer], \allocationstar[\nei]))$ is given by
\begin{align*}
    \lang(\allocation[\buyer], \langmult[ ], \bmu)
    = \budget[\buyer] \log(\util[\buyer](\allocation[\buyer], \allocation[\nei]))
    + \langmult[ ](\budget[\buyer]-\allocation[\buyer]\cdot\price)
    + \bmu^T\allocation[\buyer]
\end{align*}
where $\langmult[ ]\in \Rp$ and $\bmu\in \Rp^{\numgoods}$ are slack variables.

Let $(\allocationstar[\buyer], \langmultstar[ ], \bmustar)$ be an optimal solution to the Lagrangian. From the KKT stationary condition for this Lagrangian, it holds that, for all $\good\in \goods$,
\begin{align*}
    \frac{\budget[\buyer]}{\util[\buyer](\allocationstar[\buyer], \allocation[\nei])} 
    \left[ \frac{\partial \util[\buyer]}{\partial \allocation[\buyer][\good]}
    \right]_{\allocation[\buyer]=\allocationstar[\buyer]}
    -\langmultstar[ ]\price[\good] + \bmustar[\good] = 0\\
    \frac{\budget[\buyer]}{\util[\buyer](\allocationstar[\buyer], \allocation[\nei])} 
    \left[ \frac{\partial \util[\buyer]}{\partial \allocation[\buyer][\good]}
    \right]_{\allocation[\buyer]=\allocationstar[\buyer]} \allocationstar[\buyer][\good]
    -\langmultstar[ ]\price[\good]\allocationstar[\buyer][\good] + \bmustar[\good]\allocationstar[\buyer][\good] = 0\\
      \frac{\budget[\buyer]}{\util[\buyer](\allocationstar[\buyer], \allocation[\nei])} 
    \left[ \frac{\partial \util[\buyer]}{\partial \allocation[\buyer][\good]}
    \right]_{\allocation[\buyer]=\allocationstar[\buyer]} \allocationstar[\buyer][\good]
    -\langmultstar[ ]\price[\good]\allocationstar[\buyer][\good]  = 0,
\end{align*}
where the second line is obtained by multiplying both sides by $\allocationstar[\buyer][\good]$, and the third line is obtained by the KKT complementary condition, i.e., $\forall \good\in \goods, \bmustar[\good]\allocationstar[\buyer][\good]=0$ .

Summing up across all $\good\in \goods$ on both sides yields:
\begin{align*}
    \frac{\budget[\buyer]}{\util[\buyer](\allocationstar[\buyer], \allocation[\nei])} 
    \sum_{\good\in \goods}\left[ \frac{\partial \util[\buyer]}{\partial \allocation[\buyer][\good]}
    \right]_{\allocation[\buyer]=\allocationstar[\buyer]}  \allocationstar[\buyer][\good]
    - \langmultstar[ ]\sum_{\good\in \goods} \price[\good]\allocationstar[\buyer][\good]
    =0\\
    \frac{\budget[\buyer]}{\util[\buyer](\allocationstar[\buyer], \allocation[\nei])} \util[\buyer](\allocationstar[\buyer], \allocation[\nei])
    - \langmultstar[ ]\sum_{\good\in \goods} \price[\good]\allocationstar[\buyer][\good]
    =0\\
    \budget[\buyer] - \langmultstar[ ]\budget[\buyer] = 0\\
    \langmultstar[ ]= 1,
\end{align*}
where the second line is obtained from the Euler's theorem for homogeneous functions, and the last line is from the KKT complementary condition again, i.e., $\langmultstar[ ](\budget[\buyer]-\allocationstar[\buyer]\cdot\price)= \langmultstar[ ](
\sum_{\good\in \goods}\budget[\buyer]-\price[\good]\allocationstar[\buyer][\good])=0$. 

Hence, plugging $\langmultstar[ ]=1$ back into the Lagrangain restritced to $\Rp^{\numgoods}$, we get:
\begin{align*}
        &\max_{\allocation[\buyer] \in \Rp^{\numgoods}: \allocation[\buyer]\cdot \price \leq \budget[\buyer]}  \budget[\buyer] \log(\util[\buyer](\allocation[\buyer], \allocation[\nei]))\\
        =     &\max_{\allocation[\buyer] \in \Rp^{\numgoods}}  \budget[\buyer] \log(\util[\buyer](\allocation[\buyer], \allocation[\nei])) + \langmultstar[ ](\budget[\buyer] - \allocation[\buyer]\cdot \price)\\
        =     &\max_{\allocation[\buyer] \in \Rp^{\numgoods}}  \budget[\buyer] \log(\util[\buyer](\allocation[\buyer], \allocation[\nei])) + \budget[\buyer] - \allocation[\buyer]\cdot \price
\end{align*}
\end{proof}

\thmPseudoDual*
\begin{proof}
Since for each $\buyer\in \buyers$, $\pricestar$ is the optimal Lagrangian multiplier, we can characterize the price $\pricestar$ through the Lagrangian dual function:
\begin{align*}
    g_{\buyer}(\price) 
    &= \max_{\allocation[\buyer]\in \Rp^{\numgoods}} \lang[\buyer](\allocation[\buyer], \allocationstar[\nei], \price)\\
    &= \max_{\allocation[\buyer]\in \Rp^{\numgoods}}
    \bigg\{\budget[\buyer]\log(\util[\buyer](\allocation[\buyer], \allocationstar[\nei])) \\
    &+ \sum_{\good\in \goods} \price[\good] (1-\allocation[\buyer][\good] - \sum_{k\neq \buyer} \allocationstar[k][\good])
    \bigg\}\\
    &= \sum_{\good\in \goods}\price[\good] - \sum_{k\neq i}\sum_{\good\in \goods} \price[\good]\allocationstar[k][\good]\\
    &+ \max_{\allocation[\buyer]\in \Rp^{\numgoods}}
    \left\{\budget[\buyer]\log(\util[\buyer](\allocation[\buyer], \allocationstar[\nei])) 
    + \sum_{\good\in \goods}\price[\good]\allocation[\buyer][\good] 
    \right\}\\
    &= \sum_{\good\in \goods}\price[\good] - \sum_{k\neq i}\sum_{\good\in \goods} \price[\good]\allocationstar[k][\good]\\
    &+ \max_{\allocation[\buyer]\in \Rp^{\numgoods}}
    \left\{\budget[\buyer]\log(\util[\buyer](\allocation[\buyer], \allocationstar[\nei])) 
    + \budget[\buyer]-\price\cdot \allocation[\buyer]\right\}-\budget[\buyer]\\
    &=\sum_{\good\in \goods}\price[\good] - \sum_{k\neq i}\sum_{\good\in \goods} \price[\good]\allocationstar[k][\good]\\
    &+\max_{\allocation[\buyer] \in \Rp^{\numgoods}: \allocation[\buyer]\cdot \price \leq \budget[\buyer]}  \budget[\buyer] \log(\util[\buyer](\allocation[\buyer], \allocationstar[\nei])) - \budget[\buyer]
\end{align*}
where the fourth line is from \Cref{lemma:opt_lambda}. Therefore, the dual of optimization problem for buyer $\buyer$ is $\min_{\price\in \Rp^{\numgoods}} g_{\buyer}(\price)
=\min_{\price\in \Rp^{\numgoods}}
    \sum_{\good\in \goods}\price[\good] - \sum_{k\neq i}\sum_{\good\in \goods} \price[\good]\allocationstar[k][\good]
    +\max_{\allocation[\buyer] \in \Rp^{\numgoods}: \allocation[\buyer]\cdot \price \leq \budget[\buyer]}  \budget[\buyer] \log(\util[\buyer](\allocation[\buyer], \allocationstar[\nei])) - \budget[\buyer]$ .
\end{proof}

% \corollaryStackelberg*

% \begin{proof}  
%   For all $\buyer\in \buyers$, $(\allocationstar[\buyer], \pricestar)$ is a Stackelberg equilibrium in buyer $\buyer$'s Stackelberg game iff $(\allocationstar[\buyer], \pricestar)$ solves 
%     \begin{align*}
%     &\min_{\price\in \Rp^{\numgoods}}
%     &&\sum_{\good\in \goods} \price[\good] \left( 1 - \sum_{k \neq \buyer} \allocationstar[k][\good] \right)
%     + \budget[\buyer] \log( \util[\buyer] (\allocationstar[\buyer], \allocationstar[\nei])) - \budget[\buyer] \\
%     &\text{s.t.}\:
%     &&\allocationstar[\buyer] \in \argmax_{\allocation[\buyer] \in \Rp^{\numgoods}: \allocation[\buyer] \cdot \price \leq \budget[\buyer]} \util[\buyer] (\allocation[\buyer], \allocationstar[\nei])
% \end{align*}

% \noindent
% By \Cref{thm:pseudo_dual}, $\pricestar$ is a solution to this bi-level optimization problem \samy{}{(\Cref{eq:individual_dual})} iff $\allocationstar[\buyer]$ is a solution to buyer $\buyer$'s optimization problem (\Cref{eq:individual_primal}) in the buyer pseudo-game corresponding to $(\graph, \util, \budget)$.
% Finally, by \Cref{thm:pseudo_game_equ}, $(\allocationstar, \pricestar)$ is a competitive equilibrium of $(\graph, \util, \budget)$.
% \end{proof}

\begin{assumption} \label{assumption:envelope_thm}
1. $\obj$, $\constr[1],\hdots, \constr[K]$ are continuous and concave in $\inner$; 2. $\grad[\outer]\obj$, $\grad[\outer]\constr[1],\hdots, \grad[\outer]\constr[K]$ are continuous in $(\outer, \inner)$; and $\forall \outer \in \outerset$, $\exists \innerp\in \innerset$ s.t. $\constr[k](\outer, \innerp)>0$ for all $k=1,\hdots, K.$
\end{assumption}

\begin{lemma}[Subdifferential Envelope Theorem]\cite{goktas2021minmax}
\label{lemma:subdiff_envelop}
Consider the value function $\val(\outer)=\max_{\inner\in \innerset:\constr(\outer,\inner)\geq \zeros} \obj(\outer,\inner)$. Let $Y^*(\outer)=\argmax_{\inner\in \innerset: \constr(\outer,\inner)\geq \zeros }\obj(\outer,\inner)$ and suppose \Cref{assumption:envelope_thm} holds. Then, at any point $\outerp\in \outerset$, $\subdiff[\outer] \val(\outer)=$
\begin{align*}
    \text{conv}\left(
    \bigcup_{\inner^*(\outerp)\in \innerset^*(\outerp)}
    \bigcup_{\lambda_k(\outerp, \inner^*(\outerp))\in \Lambda(\outerp, \inner^*(\outerp))}
    \left\{
    \grad[\outer] \obj(\outerp, \inner^*(\outerp)) \right. \right.\\
   \left.\left. + \sum_{k=1}^K \lambda_k(\outerp,\inner^*(\outerp)) \grad[\outer] \constr[k](\outerp, \inner^*(\outerp))
    \right\}
    \right),
\end{align*}
where $\subdiff$ is the subdifferential operator, $\bm{\lambda}(\outerp, \inner^*(\outerp))=(\lambda_1(\outerp, \inner^*(\outerp), \hdots, \lambda_K(\outerp, \inner^*(\outerp)) \in \Lambda(\outerp, \inner^*(\outerp)$ are the Langrange multipliers associated with $\inner^*(\outerp)\in Y^*(\outerp)$,
and $\text{conv}$ is the convex hull operator. 
\end{lemma}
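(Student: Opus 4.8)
The plan is to reduce this constrained parametric-maximization result to a combination of Lagrangian strong duality and the classical (unconstrained) Danskin/Clarke theorem for pointwise maxima of smooth functions. The feature that separates this from a textbook envelope theorem is that the feasible set $\{\inner : \constr(\outer,\inner)\geq\zeros\}$ itself moves with the parameter $\outer$, so one cannot differentiate $\obj$ alone: the constraint gradients $\grad[\outer]\constr[k]$ must enter, each weighted by a Lagrange multiplier. This is exactly why the claimed set is a \emph{double} union, over maximizers $\inner^*(\outerp)$ and over multipliers $\bm\lambda\in\Lambda(\outerp,\inner^*(\outerp))$, followed by a convex hull. Because \Cref{assumption:envelope_thm} assumes nothing about convexity of $\obj$ or $\constr$ in $\outer$, the relevant object is the Clarke generalized gradient rather than the convex subdifferential, and the argument must be built within Clarke's nonsmooth calculus.

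First I would fix an arbitrary $\outerp\in\outerset$ and exploit \Cref{assumption:envelope_thm}: concavity of $\obj,\constr[1],\dots,\constr[K]$ in $\inner$ (part~1) together with the Slater point (part~3) makes the inner problem a concave program for which strong duality holds and for which the optimal multiplier set $\Lambda(\outerp,\inner^*(\outerp))$ is nonempty, convex, and compact (boundedness being the standard consequence of Slater's condition). Hence for every maximizer $\inner^*(\outerp)\in Y^*(\outerp)$ and every $\bm\lambda\in\Lambda(\outerp,\inner^*(\outerp))$, the pair forms a saddle point of the Lagrangian $L(\outer,\inner,\bm\lambda)=\obj(\outer,\inner)+\sum_k\lambda_k\constr[k](\outer,\inner)$, and by complementary slackness $\val(\outerp)=L(\outerp,\inner^*(\outerp),\bm\lambda)$.

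Next I would rewrite the value function so that its feasible set no longer depends on $\outer$: by strong duality, $\val(\outer)=\min_{\bm\lambda\geq\zeros}\psi(\outer,\bm\lambda)$ where $\psi(\outer,\bm\lambda)=\max_{\inner\in\innerset}L(\outer,\inner,\bm\lambda)$. Now the entire $\outer$-dependence lives inside the smooth functions $\obj$ and $\constr$, whose $\outer$-gradients are continuous by part~2, and the inner maximization is over the \emph{fixed} set $\innerset$. To $\psi(\cdot,\bm\lambda)$ I would apply the generalized Danskin/Clarke theorem for subdifferentials of pointwise maxima over a compact maximizer set: its generalized gradient in $\outer$ is the convex hull of $\{\grad[\outer]L(\outerp,\inner^*,\bm\lambda):\inner^*\in\argmax\}$, which is precisely the claimed summand $\grad[\outer]\obj+\sum_k\lambda_k\grad[\outer]\constr[k]$.

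Finally I would assemble the two layers, combining the Danskin formula for the inner maximization with the subdifferential calculus for the outer minimization over $\bm\lambda$, and using that active maximizers are tied to active multipliers by complementary slackness, so that the candidate gradients range exactly over all saddle pairs $(\inner^*(\outerp),\bm\lambda)$. The main obstacle is the nonsmooth bookkeeping in this last step: one must establish both inclusions of the Clarke subdifferential — the ``$\subseteq$'' direction from the upper envelope supplied by Danskin/Clarke, and the ``$\supseteq$'' direction by producing one-sided directional derivatives that realize each extreme candidate gradient, which needs $\val$ to be Clarke-regular (or a direct two-sided directional-derivative computation). Managing the simultaneous multiplicity of maximizers and of dual multipliers, which is what forces the double union inside the convex hull, is the delicate part of the argument.
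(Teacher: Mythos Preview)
The paper does not actually prove this lemma: it is stated with the citation \texttt{\textbackslash cite\{goktas2021minmax\}} and invoked as a black box in the proof of \Cref{thm:subdiff_equal_excess_demands}, so there is no in-paper argument to compare your proposal against.

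That said, your strategy is a sound route to the result. Passing to the Lagrangian dual $\val(\outer)=\min_{\bm\lambda\geq\zeros}\max_{\inner\in\innerset}L(\outer,\inner,\bm\lambda)$ to freeze the feasible set, then applying the Clarke/Danskin marginal-function formula to the inner max and min in turn, is exactly the standard mechanism, and you correctly identify that Slater's condition (part~3 of \Cref{assumption:envelope_thm}) is what guarantees both strong duality and compactness of the optimal multiplier set. Your caution about the final assembly is also well placed: Clarke's calculus generically gives only an inclusion for compositions of marginal functions, so obtaining the stated \emph{equality} (rather than $\subseteq$) does require an additional regularity argument---typically that $\psi(\cdot,\bm\lambda)$ is regular as a pointwise maximum of $C^1$ functions, and that the outer $\min$ inherits regularity so the two layers compose with equality. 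One point to be careful about in execution: after taking the outer minimum over $\bm\lambda$, you must argue that the relevant $(\inner^*,\bm\lambda)$ pairs that survive are precisely the KKT/saddle pairs, i.e., that the inner argmax at an optimal $\bm\lambda$ coincides with $Y^*(\outerp)$ and that the optimal $\bm\lambda$'s coincide with $\Lambda(\outerp,\inner^*(\outerp))$; this is where complementary slackness is used, and it is worth writing out explicitly rather than leaving it as ``nonsmooth bookkeeping.''
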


% \begin{lemma}[Danskin's Theorem]
% \label{lemma:danskin}
% Consider an optimization problem of the form $\max_{\inner\in \innerset}\obj(\outer, \inner)$, where $\innerset\subset \R^{m}$ is compact and non-empty. Let $\val(\outer)=\max_{\inner\in \innerset}\obj(\outer, \inner)$ and $Y^*(\outer)=\argmax_{\inner\in \innerset}\obj(\outer, \inner)$. Then $\val$ is differentiable at $\outerp$ if the solution correspondence $Y^*(\outerp)$ is a singleton: i.e., $Y^*(\outerp)=\{\inner^*(\outerp)\}$. Additionally, the gradient at $\outerp$ is given by $\grad[\outer]\val(\outerp)=\grad[\outer]\obj(\outerp, \inner^*(\outerp))$
% \end{lemma}

\thmSubdiff*
\begin{proof}
For all goods $\good\in \goods$,
\begin{align}
    \subdiff[{\price[\good]}]
    \val[\buyer](\price)
    &= \subdiff[{\price[\good]}] \left(
 \max_{\allocation[\buyer] \in \Rp^{\numgoods}: \allocation[\buyer] \cdot \price \leq \budget[\buyer]}
    \sum_{\good\in \goods} \price[\good] \left( 1 - \sum_{k \neq \buyer} \allocationstar[k][\good] \right)  \right.\nonumber
   \\
   &\left.+ \budget[\buyer] \log( \util[\buyer] (\allocation[\buyer], \allocationstar[\nei]))    
   \right)\\
    &= \subdiff[{\price[\good]}] \left(
    \sum_{\good\in \goods} \price[\good] \left( 1 - \sum_{k \neq \buyer} \allocationstar[k][\good] \right) \right.  \nonumber\\
    &\left.+  \max_{\allocation[\buyer] \in \Rp^{\numgoods}: \allocation[\buyer] \cdot \price \leq \budget[\buyer]} \budget[\buyer] \log( \util[\buyer] (\allocation[\buyer], \allocationstar[\nei]))    
   \right)\\
    &= \subdiff[{\price[\good]}] \left(
    \sum_{\good\in \goods} \price[\good] \left( 1 - \sum_{k \neq \buyer} \allocationstar[k][\good] \right) \right) \nonumber\\
    & + \subdiff[{\price[\good]}]\left(\max_{\allocation[\buyer] \in \Rp^{\numgoods}: \allocation[\buyer] \cdot \price \leq \budget[\buyer]} \budget[\buyer] \log( \util[\buyer] (\allocation[\buyer], \allocationstar[\nei])) \right)\\
    &= \left( 1 - \sum_{k \neq \buyer} \allocationstar[k][\good] \right) 
    \nonumber \\
    &+ \subdiff[{\price[\good]}]\left(\max_{\allocation[\buyer] \in \Rp^{\numgoods}} \budget[\buyer] \log( \util[\buyer] (\allocation[\buyer], \allocationstar[\nei])) + \budget[\buyer] - \allocation[\buyer]\cdot \price \right) \label{eq:use_opt_lambda}\\
    &= 1 - \sum_{k \neq \buyer} \allocationstar[k][\good] + \allocationstar[\buyer] \label{eq:use_subdiff}\\
    &= 1 - \sum_{\buyer\in \buyers} \allocationstar[k][\good]
\end{align}
where \cref{eq:use_opt_lambda} is from \Cref{lemma:opt_lambda}, and \cref{eq:use_subdiff} is from \Cref{lemma:subdiff_envelop}. 
\end{proof}

\thmStackelbergConvergence*
\begin{proof}
First, note that for each buyer $\buyer$'s optimization problem, the dual given by (\cref{eq:individual_dual}) is convex in $\price$. Thus, the value function $\val$ is also convex in $\price$ as it is the sum of the duals. 
Moreover, we know  $\grad[\price]\val(\price)=\ones-\sum_{\buyer\in \buyers}\allocationstar[\buyer]$  from \Cref{thm:subdiff_equal_excess_demands}. Note that when $\price>\zeros$, $\allocationstar$ is bounded as $\allocationstar[\buyer]\leq \nicefrac{\budget[\buyer]}{\price}$ for each $\buyer\in \buyers$, so $\grad[\price]\val(\price)$ is also bounded. 
 \ssadie{}{Note that though $\allocationstar[\buyer]$ is not necessarily bounded at $\price=0$, we can remedy that by shifting up $\price$ by a small constant $\varepsilon$ albeit losing some accuracy. }
Therefore,  $\val$ is $\lipcont[\val]$-Lipschitz, where $\lipcont[\val]=\max_{\price\in \Rpp^{\numgoods}}\|\grad[\price]\val(\price)\|$ for $\price\in \Rpp$. 
Then, using a subgradient method, the algorithm converges in average-iterate at a rate of $O(\nicefrac{1}{\sqrt{T}})$.
\end{proof}

\section{Experimental Setup}
The main goal of our experiment is to understand the empirical convergence rate of \Cref{alg:ne_oracle_gd} in different Fisher markets, in which the objective function in \Cref{eq:obj_func} satisfies different smoothness properties. To answer the question, we ran multiple experiments, each time recording the prices and allocations computed by \Cref{alg:nested_ne_gd} during each iteration $\iterouter$ of the main (outer) loop. For each run of each algorithm on each market with each set of initial conditions, we then computed the objective function’s value for the iterates, i.e., $\obj(\allocation^{(\iterouter)}, \price^{(\iterouter)})$, which we plot in \Cref{fig:convergence_results}.

\paragraph{Hyperparameters}
We randomly initialized 50 different linear, Cobb-Douglas, Leontief Fisher markets with social influence, each with 3 buyers and 3 goods. Buyer $\buyer$’s budget $\budget[\buyer]$ was drawn randomly from a uniform distribution ranging from 5 to 15 (i.e., $U[5,15]$), while each buyer $\buyer$’s valuation for good $\good$, $\valuation[\buyer][\good]$, was drawn randomly from $U[5,35]$. The social network graph $\graph$ is also generated uniformly at random.

For influence Fisher markets with linear utilities, we ran our algorithm for 400 iterations with learning rate $\learnrate[\price]=2$, and solving the inner Nash equilibrium by running the extragradient method for 100 iterations with learning rate $\learnrate[\allocation]=0.2$.
For influence Fisher markets with Cobb-Douglas utilities, we ran our algorithm for 400 iterations with learning rate $\learnrate[\price]=8$, and solving the inner Nash equilibrium by running the extragradient method for 200 iterations with learning rate $\learnrate[\allocation]=0.5$.
Finally, for influence Fisher markets with Leontief utilities, we ran our algorithm for 400 iterations with learning rate $\learnrate[\price]=5$, and solving the inner Nash equilibrium by running the extragradient method for 100 iterations with learning rate $\learnrate[\allocation]=3$.

\paragraph{Programming Languages, Packages, and Licensing}
We ran our experiments in Python 3.7, using NumPy.
\Cref{fig:convergence_results} were graphed using Matplotlib.

Python software and documentation are licensed under the PSF License Agreement. Numpy is distributed under a liberal BSD license.  Matplotlib only uses BSD compatible code, and its license is based on the PSF license. CVXPY is licensed under an APACHE license. 

\paragraph{Computational Resources}
Our experiments were run on Google Colab with 12.68GB RAM, and took about 3 hours to run experiments with 50 markets. 

\paragraph{Code Repository}
The data our experiments generated, and the code used to produce our visualizations, can be found in our code repository ({\color{blue}\rawcoderepo}).

\end{document}